\renewcommand*{\backrefalt}[4]{%
    \ifcase #1 \footnotesize{(Not cited.)}%
    \or        \footnotesize{(Cited on page~#2.)}%
    \else      \footnotesize{(Cited on pages~#2.)}%
    \fi}
\long\def\comment#1{}
\newtheorem{theorem}{Theorem}[section]
\newtheorem{corollary}[theorem]{Corollary}
\newtheorem{lemma}[theorem]{Lemma}
\newtheorem{proposition}[theorem]{Proposition}
\newtheorem{remark}[theorem]{Remark}
\newcommand{\st}{\textnormal{s.t.}}
\newcommand{\diag}{\textnormal{diag}}
\newcommand{\argmin}{\mathop{\rm argmin}}
\newcommand{\argmax}{\mathop{\rm argmax}}
\newcommand{\LCal}{\mathcal{L}}
\newcommand{\br}{\mathbb{R}}
\newcommand{\ba}{\begin{array}}
\newcommand{\ea}{\end{array}}
\newcommand{\Rspace}{\mathbb{R}}
\newcommand{\one}{\textbf{1}}
\newcommand{\zero}{\textbf{0}}
\newcommand{\bigO}{\mathcal{O}}
\newcommand{\bigOtil}{\widetilde{\mathcal{O}}}
\newcommand{\mydefn}{:=}
\begin{document}


\begin{center}

{\bf{\LARGE{On Efficient Optimal Transport: An Analysis of\\[.1cm]
Greedy and Accelerated Mirror Descent Algorithms}}}

\vspace*{.2in}
{\large{
\begin{tabular}{ccc}
Tianyi Lin$^{\star, \ddagger}$ & Nhat Ho$^{\star, \diamond}$ &  Michael I. Jordan$^{\diamond, \dagger}$ \\
 \end{tabular}
}}

\vspace*{.2in}

\begin{tabular}{c}
Department of Electrical Engineering and Computer Sciences$^\diamond$ \\
Department of Industrial Engineering and Operations Research$^\ddagger$ \\
Department of Statistics$^\dagger$ \\ 
University of California, Berkeley \\
\end{tabular}

\vspace*{.2in}

\today

\vspace*{.2in}

\begin{abstract} 
We provide theoretical analyses for two algorithms that solve the
regularized optimal transport (OT) problem between two discrete probability measures with at most $n$ atoms. We show that a greedy variant of the classical Sinkhorn algorithm, known as the \emph{Greenkhorn algorithm}, can be improved to $\bigOtil(n^2\varepsilon^{-2})$, improving on the best known complexity bound of $\bigOtil(n^2\varepsilon^{-3})$. Notably, this matches the best known complexity bound for the Sinkhorn algorithm and helps explain why the Greenkhorn algorithm can outperform the Sinkhorn algorithm in practice. Our proof technique, which is based on a primal-dual formulation and a novel upper bound for the dual solution, also leads to a new class of algorithms that we refer to as \emph{adaptive primal-dual accelerated mirror descent} (APDAMD) algorithms.  We prove that the complexity of these algorithms is $\bigOtil(n^2\sqrt{\delta}\varepsilon^{-1})$, where $\delta > 0$ refers to the inverse of the strong convexity module of Bregman divergence with respect to $\|\cdot\|_\infty$. This implies that the APDAMD algorithm is faster than the Sinkhorn and Greenkhorn algorithms in terms of $\varepsilon$.  Experimental results on synthetic and real datasets demonstrate the favorable performance of the Greenkhorn and APDAMD algorithms in practice.
\end{abstract}

\let\thefootnote\relax\footnotetext{$^\star$ Tianyi Lin and Nhat Ho contributed equally to this work.}
\end{center}

\section{Introduction}
Optimal transport---the problem of finding minimal cost couplings between pairs of probability measures---has a long history in mathematics and operations research~\citep{Villani-2003-Topic}. In recent years, it has been the inspiration for numerous applications in machine learning and statistics, including posterior contraction of parameter estimation in Bayesian nonparametrics models~\citep{Nguyen-2013-Convergence, Nguyen-2016-Borrowing}, scalable posterior sampling for large datasets~\citep{Srivastava-2015-WASP, Srivastava-2018-Scalable}, optimization models for clustering complex structured data~\citep{Ho-2018-Probabilistic}, deep generative models and domain adaptation in deep learning~\citep{Arjovsky-2017-Wasserstein, Gulrajani-2017-Improved, Courty-2017-Optimal, Tolstikhin-2018-Wasserstein}, and other applications~\citep{Rolet-2016-Fast, Peyre-2016-Averaging, Carriere-2017-Sliced, Lin-2018-Sparsemax}. These large-scale applications have placed significant new demands on the efficiency of algorithms for solving the optimal transport problem, and a new literature has begun to emerge to provide new algorithms and complexity analyses for optimal transport. 

The computation of the optimal-transport (OT) distance can be formulated as a linear programming problem and solved in principle by interior-point methods. 
The best known complexity bound in this formulation is $\bigOtil\left(n^{5/2}\right)$, achieved by an interior-point algorithm due to~\cite{Lee-2014-Path}. However, Lee and Sidford's method requires as a subroutine a practical implementation of the Laplacian linear system solver, which is not yet available in the literature.~\cite{Pele-2009-Fast} proposed an alternative, implementable interior-point method for OT with a complexity bound is $\bigOtil(n^3)$. Another prevalent approach for computing OT distance between two discrete probability measures involves regularizing the objective function by the entropy of the transportation plan. 
The resulting problem, referred to as \emph{entropic regularized OT} or simply \emph{regularized OT}~\citep{Cuturi-2013-Sinkhorn, Benamou-2015-Iterative}, is more readily solved than the original problem since the objective is strongly convex with respect to $\|\cdot\|_1$. The longstanding state-of-the-art algorithm for solving regularized OT is the Sinkhorn algorithm~\citep{Sinkhorn-1974-Diagonal, Knight-2008-Sinkhorn, Kalantari-2008-Complexity}. Inspired by the growing scope of applications for optimal transport, several new algorithms have emerged in recent years that have been shown empirically to have superior performance when compared to the Sinkhorn algorithm. An example includes the Greenkhorn algorithm~\citep{Altschuler-2017-Near, Chakrabarty-2018-better, Abid-2018-Greedy}, which is a greedy version of Sinkhorn algorithm. A variety of standard optimization algorithms have also been adapted to the OT setting, including accelerated gradient descent~\citep{Dvurechensky-2018-Computational}, quasi-Newton methods~\citep{Cuturi-2016-Smoothed, Blondel-2018-Smooth} and stochastic average gradient~\citep{Genevay-2016-Stochastic}. The theoretical analysis of these algorithms is still nascent.

Very recently,~\cite{Altschuler-2017-Near} have shown that both the Sinkhorn and Greenkhorn algorithm can achieve the near-linear time complexity for regularized OT. More specifically, they proved that the complexity bounds for both algorithms are $\bigOtil(n^2\varepsilon^{-3})$, where $n$ is the number of atoms (or equivalently dimension) of each probability measure and $\varepsilon$ is a desired tolerance. Later,~\cite{Dvurechensky-2018-Computational} improved the complexity bound for the Sinkhorn algorithm to $\bigOtil(n^2\varepsilon^{-2})$ and further proposed an adaptive primal-dual accelerated gradient descent (APDAGD), asserting a complexity bound of $\bigOtil(\min\{n^{9/4}\varepsilon^{-1}, n^2\varepsilon^{-2}\})$ for this algorithm. It is also possible to use a carefully designed Newton-type algorithm to solve the OT problem~\citep{Allen-2017-Much, Cohen-2017-Matrix}, by making use of a connection to matrix-scaling problems.~\cite{Blanchet-2018-Towards} and~\cite{Quanrud-2018-Approximating} provided a complexity bound of $\bigOtil(n^2\varepsilon^{-1})$ for Newton-type algorithms. However, these methods are complicated and efficient implementations are not yet available. Nonetheless, this complexity bound can be viewed as a theoretical benchmark for the algorithms that we consider in this paper.
\paragraph{Our Contributions.} The contribution is three-fold and can be summarized as follows: 
\begin{enumerate}
\item We improve the complexity bound for the Greenkhorn algorithm from $\bigOtil(n^2\varepsilon^{-3})$ to $\bigOtil(n^2\varepsilon^{-2})$, matching the best known complexity bound for the Sinkhorn algorithm. This analysis requires a new proof technique---the technique used in~\cite{Dvurechensky-2018-Computational} for analyzing the complexity of Sinkhorn algorithm is not applicable to the Greenkhorn algorithm. In particular, the Greenkhorn algorithm only updates a single row or column at a time and its per-iteration progress is accordingly more difficult to quantify than that of the Sinkhorn algorithm. In contrast, we employ a novel proof technique that makes use of a novel upper bound for the dual optimal solution in terms of $\|\cdot\|_\infty$. Our results also shed light on the better practical performance of the Greenkhorn algorithm compared the Sinkhorn algorithm. 
\item The smoothness of the dual regularized OT with respect to $\|\cdot\|_\infty$ allows us to formulate a novel \emph{adaptive primal-dual accelerated mirror descent} (APDAMD) algorithm for the OT problem.  Here the Bregman divergence is strongly convex and smooth with respect to $\|\cdot\|_\infty$. The resulting method involves an efficient line-search strategy~\citep{Nesterov-2006-Cubic} that is readily analyzed.  It can be adapted to problems even more general than regularized OT. It can also be viewed as a primal-dual extension of~\cite[Algorithm~1]{Tseng-2008-On} and a mirror descent extension of the APDAGD algorithm~\citep{Dvurechensky-2018-Computational}. We establish a complexity bound for the APDAMD algorithm of $\bigOtil(n^2\sqrt{\delta}\varepsilon^{-1})$, where $\delta > 0$ refers to the inverse of the strong convexity module of the Bregman divergence with respect to $\|\cdot\|_\infty$. In particular, $\delta = n$ if the Bregman divergence is simply chosen as $(1/2n)\|\cdot\|^2$. This implies that the APDAMD algorithm is faster than the Sinkhorn and Greenkhorn algorithms in terms of $\varepsilon$. Furthermore, we are able to provide a robustness result for the APDAMD algorithm (see Section~\ref{sec:experiments}). 
\item We show that there is a limitation in the derivation by~\cite{Dvurechensky-2018-Computational} of the complexity bound of $\bigOtil(\min\{n^{9/4}\varepsilon^{-1}, n^2\varepsilon^{-2}\})$. More specifically, the complexity bound in~\cite{Dvurechensky-2018-Computational} depends on a parameter which is not estimated explicitly. We provide a sharp lower bound for this parameter by a simple example (Proposition~\ref{proposition:tight_upper}), demonstrating that this parameter depends on $n$. Due to the dependence on $n$ of that parameter, we demonstrate that the complexity bound of APDAGD algorithm is indeed $\bigOtil (n^{2.5}/\varepsilon)$. This is slightly worse than the asserted complexity bound of $\bigOtil(\min\{n^{9/4}\varepsilon^{-1}, n^2\varepsilon^{-2}\})$ in terms of $n$.  Finally, our APDAMD algorithm potentially provides an improvement for the complexity of APDAGD algorithm  as its complexity bound is $\bigOtil(n^{2}\sqrt{\delta}/\varepsilon)$ and $\delta$ can be smaller than $n$.   
\end{enumerate}

\paragraph{Organization.} The remainder of the paper is organized as follows. In Section~\ref{sec:setup}, we provide the basic setup for regularized OT in primal and dual forms, respectively. Based on the dual form, we analyze the worst-case complexity of the Greenkhorn algorithm in Section~\ref{sec:greenkhorn}. In Section~\ref{sec:apdamd}, we propose the APDAMD algorithm for solving regularized OT and provide a theoretical complexity analysis. Section~\ref{sec:experiments} presents  experiments that illustrate the favorable performance of the Greenkhorn and APDAMD algorithms. Finally, we conclude in Section~\ref{sec:discussion}.

\paragraph{Notation.} We let $\Delta^n$ denote the probability simplex in $n - 1$ dimensions, for $n \geq 2$:  $\Delta^n = \{u = \left(u_1, \ldots, u_n\right) \in \Rspace^n: \sum_{i = 1}^{n} u_{i} = 1, \ u \geq 0\}$. Furthermore, $[n]$ stands for the set $\{1, 2, \ldots, n\}$ while $\Rspace^n_+$ stands for the set of all vectors in $\Rspace^n$ with nonnegative components for any $n \geq 1$. For a vector $x \in \Rspace^n$ and $1 \leq p \leq \infty$, we denote $\|x\|_p$ as the $\ell_p$-norm with $\|x\|$ as the $\ell_2$-norm and $\text{diag}(x)$ as the diagonal matrix with $x$ on the diagonal. For a matrix $A \in \Rspace^{n \times n}$, the notation $\text{vec}(A)$ stands for the vector in $\Rspace^{n^2}$ obtained from concatenating the rows and columns of $A$. The notations $r(A)$ and $c(A)$ stand for the row sum and column sum of $A$. $\one_n$ and $\zero_n$ stand for $n$-dimensional vectors with all of their components equal to $1$ and $0$ respectively. $\partial_x f$ refers to a partial gradient of $f$ with respect to $x$. Lastly, given the dimension $n$ and accuracy $\varepsilon$, the notation $a = \bigO\left(b(n,\varepsilon)\right)$ stands for the upper bound $a \leq C \cdot b(n, \varepsilon)$ where $C$ is independent of $n$ and $\varepsilon$. Similarly, the notation $a = \bigOtil(b(n, \varepsilon))$ indicates the previous inequality may depend on the logarithmic function of $n$ and $\varepsilon$, and where $C>0$. 

\section{The Optimal Transport Problem}\label{sec:setup}
In this section, we provide some background materials on the problem of computing the OT distance between two discrete probability measures with at most $n$ atoms. 

According to~\cite{Kantorovich-1942-Translocation}, the problem of approximating the optimal transportation distance is equivalent to solving the following linear programming problem:
\begin{eqnarray}\label{prob:OT}
\min\limits_{X \in \br^{n \times n}} \langle C, X\rangle \quad \st \ X\one_n = r, \ X^\top\one_n = c, \ X \geq 0,
\end{eqnarray}
where $X$ refers to the \textit{transportation plan}, $C = (C_{ij}) \in \br_+^{n \times n}$ stands for a cost matrix with non-negative components, and $r$ and $c$ refer to two known probability distributions in the simplex $\Delta^n$. The goal of the paper is to find a transportation plan $\hat{X} \in \br_{+}^{n \times n}$ satisfying marginal distribution constraints $\hat{X}\one_n = r$ and $\hat{X}^\top\one_n = c$ and the following bound
\begin{equation}\label{Criteria:Approximation}
\langle C, \hat{X}\rangle \ \leq \ \langle C, X^*\rangle + \varepsilon.
\end{equation}
Here $X^*$ is defined as an optimal transportation plan for the OT problem~\eqref{prob:OT}. For the sake of presentation, we respectively denote $\langle C, \hat{X}\rangle$ an \emph{$\varepsilon$-approximate transportation cost} and $\hat{X}$ an \emph{$\varepsilon$-approximate transportation plan} for the original problem.  

Since Eq.~\eqref{prob:OT} is a linear programming problem, we can solve it by the interior-point method; however, this method performs poorly on large-scale problems due to its high per-iteration computational cost. Seeking a formulation for OT distance that is more amenable to computationally efficient algorithms,~\cite{Cuturi-2013-Sinkhorn} proposed to solve an entropic regularized version of the OT problem in Eq.~\eqref{prob:OT}, which is given by
\begin{equation}\label{prob:regOT}
\min\limits_{X \in \br^{n \times n}} \ \langle C, X\rangle - \eta H(X), \quad \st \ X\one_n = r, \ X^\top\one_n = c.
\end{equation}
Here $\eta > 0$ and $H(X) = - \sum_{i, j = 1}^n X_{ij} \log(X_{ij})$ stand for the \textit{regularization parameter} and \textit{entropic regularization} respectively. It is clear that the entropic regularized OT problem in Eq.~\eqref{prob:regOT} is a convex optimization model with affine constraints. Thus, the dual form of entropic regularized OT problem is an unconstrained optimization model. To derive the dual, we begin with a Lagrangian:
\begin{equation*}
\LCal(X, \alpha, \beta) \ = \ \langle\alpha, r\rangle + \langle\beta, c\rangle + \langle C, X\rangle - \eta H(X) - \langle\alpha, X\one_n\rangle - \langle\beta, X^\top\one_n\rangle. 
\end{equation*} 
Note that the dual function is defined by $\varphi(\alpha, \beta) \mydefn \min_{X \in \Rspace^{n \times n}} \LCal(X, \alpha, \beta)$. Since $\LCal(\cdot, \alpha, \beta)$ is strictly convex and differentiable, we can solve for the minimum by setting $\partial_X \LCal(X, \alpha, \beta)$ to zero and obtain for any $i, j \in [n]$ that, 
\begin{equation*}
C_{ij} + \eta\left(1 + \log(X_{ij}) \right) - \alpha_i - \beta_j = 0 \ \Longrightarrow \ X_{ij} = e^{\frac{-C_{ij} + \alpha_i + \beta_j}{\eta} - 1}.
\end{equation*}
Therefore, we conclude that 
\begin{equation}\label{prob:dualregOT-old}
\varphi(\alpha, \beta) \ = \ \eta\left(\sum_{i,j=1}^n e^{- \frac{C_{ij} - \alpha_i - \beta_j}{\eta}-1}\right) - \langle\alpha, r\rangle - \langle \beta, c\rangle - 1. 
\end{equation}
To simplify the notation, we perform a change of variables, setting $u_i = \eta^{-1}\alpha_i - (1/2)\one_n$ and $v_j = \eta^{-1}\beta_j - (1/2)\one_n$. Thus, solving $\min_{\alpha, \beta \in \br^n} \varphi(\alpha, \beta)$ is equivalent to solving
\begin{equation*}
\min_{u, v \in \br^n} \ \eta \left(\sum_{i,j=1}^n e^{- \frac{C_{ij}}{\eta} + u_i + v_j} - \langle u, r\rangle - \langle v, c\rangle - 1\right). 
\end{equation*}
Letting $B(u, v) \mydefn  \diag(e^u) \ e^{-\frac{C}{\eta}} \ \diag(e^v)$, the dual problem $\max_{u, v \in \br^n} \varphi(u, v)$ reduces to
\begin{equation}\label{prob:dualregOT}
\min_{u, v \in \br^n} \ f(u, v) \ \mydefn \ \one_n^\top B(u, v)\one_n - \langle u, r\rangle - \langle v, c\rangle. 
\end{equation}
The problem in Eq.~\eqref{prob:dualregOT} is called the \textit{dual (entropic) regularized OT} problem.

\section{Greenkhorn}\label{sec:greenkhorn}
In this section, we present a complexity analysis for the Greenkhorn algorithm. In particular, we improve the existing best known complexity bound $\bigO (n^2\|C\|_\infty^3\log(n)\varepsilon^{-3})$~\citep{Altschuler-2017-Near} to $\bigO(n^2\|C\|_\infty^2 \log(n)\varepsilon^{-2})$, which matches the best known complexity bound for the Sinkhorn algorithm~\citep{Dvurechensky-2018-Computational}. 

To facilitate the discussion later, we present the Greenkhorn algorithm in pseudocode form in Algorithm~\ref{Algorithm:Greenkhorn} and its application to regularized OT in Algorithm~\ref{Algorithm:ApproxOT_Greenkhorn}. Here the function $\rho: \br_+ \times \br_+ \rightarrow \left[0, +\infty\right]$ is given by $\rho(a, b) = b - a + a\log(a/b)$, which measures the progress in the dual objective value between two consecutive iterates of Algorithm~\ref{Algorithm:Greenkhorn}. In addition, we observe that the optimality condition of the dual regularized OT problem in Eq.\eqref{prob:dualregOT} is $r(B(u^t, v^t)) - r = 0$ and $c(B(u^t, v^t)) - c = 0$. This leads to the quantity which measures the error of the $t$-th iterate of the Greenkhorn algorithm:
\begin{equation*}
E_t \ \mydefn \ \|r(B(u^t, v^t)) - r\|_1 + \|c(B(u^t, v^t)) - c\|_1.
\end{equation*}
Both the Sinkhorn and Greenkhorn procedures are coordinate descent algorithms for the dual regularized OT problem in Eq.~\eqref{prob:dualregOT}. Comparing with the Sinkhorn algorithm, which performs alternating updates of \textit{all} rows and columns, the Greenkhorn algorithm updates a \textit{single} row or column at each step. Thus, the Greenkhorn algorithm updates only $\bigO(n)$ entries per iteration, rather than $\bigO(n^2)$. After $r(e^{-\eta^{-1}C})$ and $c(e^{-\eta^{-1}C})$ are computed once at the beginning of the algorithm, the Greenkhorn algorithm can easily be implemented such that each iteration runs in only $\bigO(n)$ arithmetic operations~\citep{Altschuler-2017-Near}. 

Despite per-iteration cheap computational cost, it is difficult to quantify the per-iteration progress of the Greenkhorn algorithm and the proof techniques~\citep{Dvurechensky-2018-Computational} are not applicable to the Greenkhorn algorithm. We thus explore a different strategy which will be elaborated in the sequel.
\begin{algorithm}[!t]
\caption{\textsc{Greenkhorn}$(C, \eta, r, c, \varepsilon')$} \label{Algorithm:Greenkhorn}
\begin{algorithmic}
\STATE \textbf{Input:} $t = 0$ and $u^0 = v^0 = 0$. 
\WHILE{$E_t > \varepsilon'$}
\STATE Compute $I = \argmax_{1 \leq i \leq n} \rho(r_i, r_i(B(u^t, v^t)))$. 
\STATE Compute $J = \argmax_{1 \leq j \leq n} \rho(c_j, c_j(B(u^t, v^t)))$.
\IF{$\rho(r_i, r_i(B(u^t, v^t))) > \rho(c_j, c_j(B(u^t, v^t)))$}
\STATE $u_I^{t+1} = u_I^t + \log(r_I) - \log(r_I(B(u^t, v^t)))$. 
\ELSE
\STATE $v_J^{t+1} = v_J^t + \log(c_J) - \log(c_J(B(u^t, v^t)))$.
\ENDIF
\STATE Increment by $t = t + 1$. 
\ENDWHILE
\STATE \textbf{Output:} $B(u^t, v^t)$.  
\end{algorithmic}
\end{algorithm} 
\begin{algorithm}[!t]
\caption{Approximating OT by Algorithm~\ref{Algorithm:Greenkhorn}} \label{Algorithm:ApproxOT_Greenkhorn}
\begin{algorithmic}
\STATE \textbf{Input:} $\eta = \varepsilon/(4\log(n))$ and $\varepsilon'=\varepsilon/(8\|C\|_\infty)$. 
\STATE \textbf{Step 1:} Let $\tilde{r} \in \Delta_n$ and $\tilde{c} \in \Delta_n$ be defined by $(\tilde{r}, \tilde{c}) = (1 - \varepsilon'/8)(r, c) + (\varepsilon'/8n)(\one_n, \one_n)$. 
\STATE \textbf{Step 2:} Compute $\tilde{X} = \textsc{Greenkhorn}(C, \eta, \tilde{r}, \tilde{c}, \varepsilon'/2)$. 
\STATE \textbf{Step 3:} Round $\tilde{X}$ to $\hat{X}$ using~\citet[Algorithm~2]{Altschuler-2017-Near} such that $\hat{X}\one_n = r$ and $\hat{X}^\top\one_n = c$.
\STATE \textbf{Output:} $\hat{X}$.  
\end{algorithmic}
\end{algorithm} 
\subsection{Complexity analysis---bounding dual objective values}
Given the definition of $E_t$, we first prove the following lemma which yields an upper bound for the objective values of the iterates.
\begin{lemma}\label{Lemma:Descent}
Letting $\{(u^t, v^t)\}_{t \geq 0}$ be the iterates generated by Algorithm~\ref{Algorithm:Greenkhorn}, we have
\begin{equation} \label{eq:upper_func}
f(u^t, v^t) - f(u^*, v^*) \ \leq \ 2(\|u^*\|_\infty + \|v^*\|_\infty) E_t, 
\end{equation}
where $(u^*, v^*)$ is an optimal solution of the dual regularized OT problem in Eq.~\eqref{prob:dualregOT}. 
\end{lemma}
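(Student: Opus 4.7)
The plan is to bound the objective gap by passing through the gradient of $f$ via convexity, applying an $\ell_1$--$\ell_\infty$ Hölder inequality to recognize the defining quantity of $E_t$, and then controlling the $\ell_\infty$ distance from the iterates to the optimum by $2(\|u^*\|_\infty+\|v^*\|_\infty)$. First, a direct differentiation of $f(u,v)=\sum_{i,j}e^{u_i+v_j-C_{ij}/\eta}-\langle u,r\rangle-\langle v,c\rangle$ gives $\partial_u f(u,v)=r(B(u,v))-r$ and $\partial_v f(u,v)=c(B(u,v))-c$, so that $\|\nabla f(u^t,v^t)\|_1=E_t$ by the definition of $E_t$. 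Since $f$ is convex (sum of convex exponentials and a linear term), the standard first-order inequality yields
\[
f(u^t,v^t)-f(u^*,v^*)\;\leq\;\langle r(B(u^t,v^t))-r,\,u^t-u^*\rangle+\langle c(B(u^t,v^t))-c,\,v^t-v^*\rangle.
\]

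Second, I would apply Hölder's inequality with the $\ell_1$--$\ell_\infty$ pairing term-by-term to obtain
\[
f(u^t,v^t)-f(u^*,v^*)\;\leq\;\|r(B(u^t,v^t))-r\|_1\,\|u^t-u^*\|_\infty+\|c(B(u^t,v^t))-c\|_1\,\|v^t-v^*\|_\infty.
\]
The stated conclusion will follow provided that $\|u^t-u^*\|_\infty\leq 2(\|u^*\|_\infty+\|v^*\|_\infty)$ and the analogous bound for $v^t-v^*$ both hold, since the two $\ell_\infty$ factors can then be pulled out of the sum of the $\ell_1$ norms, which is exactly $E_t$.

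The main obstacle lies in this last step: establishing a uniform control on $\|u^t\|_\infty$ and $\|v^t\|_\infty$ so that the triangle inequality $\|u^t-u^*\|_\infty\leq \|u^t\|_\infty+\|u^*\|_\infty$ closes the argument. My plan is to exploit the explicit form of the Greenkhorn update in Algorithm~\ref{Algorithm:Greenkhorn}: if row $I$ is picked at iterate $t$, then immediately after the update one has $u_I^{t+1}=\log r_I-\log\sum_j e^{v_j^t-C_{Ij}/\eta}$, while the dual optimality condition forces $u_I^*=\log r_I-\log\sum_j e^{v_j^*-C_{Ij}/\eta}$. Because log-sum-exp is $1$-Lipschitz in $\|\cdot\|_\infty$, this gives $|u_I^{t+1}-u_I^*|\leq \|v^t-v^*\|_\infty$, and a symmetric statement holds for column updates. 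Starting from $u^0=v^0=0$ and running a joint induction on the pair $(\|u^t\|_\infty,\|v^t\|_\infty)$ that uses this Lipschitz bound at each greedy update, I expect to obtain the invariants $\|u^t\|_\infty\leq \|u^*\|_\infty+\|v^*\|_\infty$ and $\|v^t\|_\infty\leq \|u^*\|_\infty+\|v^*\|_\infty$. Plugging these two invariants into the Hölder bound via the triangle inequality yields the claimed estimate.
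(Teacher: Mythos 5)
Your overall route — convexity plus Hölder in the $\ell_1$--$\ell_\infty$ pairing, then an induction that exploits the explicit log-sum-exp form of the Greenkhorn update and its $1$-Lipschitzness in $\|\cdot\|_\infty$ — is exactly the strategy of the paper, and your identification of $|u_I^{t+1}-u_I^*|\le\|v^t-v^*\|_\infty$ as the key one-step estimate is correct. However, the specific induction variable you choose does not work. You propose to prove by induction the invariants $\|u^t\|_\infty\le \|u^*\|_\infty+\|v^*\|_\infty$ and $\|v^t\|_\infty\le \|u^*\|_\infty+\|v^*\|_\infty$. Set $K=\|u^*\|_\infty+\|v^*\|_\infty$ and suppose $\|u^t\|_\infty,\|v^t\|_\infty\le K$ and a row $I$ is updated. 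The only control on the new coordinate supplied by your Lipschitz step is $|u_I^{t+1}-u_I^*|\le\|v^t-v^*\|_\infty$, which yields
\[
|u_I^{t+1}|\le |u_I^*|+\|v^t-v^*\|_\infty\le \|u^*\|_\infty+\|v^t\|_\infty+\|v^*\|_\infty\le K+\|u^*\|_\infty+\|v^*\|_\infty=2K.
\]
So the inductive hypothesis only delivers $\|u^{t+1}\|_\infty\le 2K$, not $\le K$: the invariant does not propagate, and the loss of a factor of two would compound at every iteration. The failure is structural — your Lipschitz bound controls the coordinate's distance to $u_I^*$ in terms of $\|v^t-v^*\|_\infty$, and converting back and forth between $\|v^t-v^*\|_\infty$ and $\|v^t\|_\infty$ via the triangle inequality injects an additive $\|v^*\|_\infty$ each time.

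The fix, which is what the paper actually does, is to run the induction on the \emph{difference} norms. Prove instead that
\[
\max\{\|u^{t+1}-u^*\|_\infty,\ \|v^{t+1}-v^*\|_\infty\}\le\max\{\|u^t-u^*\|_\infty,\ \|v^t-v^*\|_\infty\},
\]
which does close: if row $I$ is updated then $v^{t+1}=v^t$, the untouched coordinates of $u$ are unchanged, and your estimate $|u_I^{t+1}-u_I^*|\le\|v^t-v^*\|_\infty$ places the new coordinate under the running maximum. Iterating from $u^0=v^0=0$ gives $\|u^t-u^*\|_\infty+\|v^t-v^*\|_\infty\le 2\max\{\|u^*\|_\infty,\|v^*\|_\infty\}\le 2(\|u^*\|_\infty+\|v^*\|_\infty)$, which you then plug into your Hölder bound. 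So you should keep everything else but replace the invariant on $(\|u^t\|_\infty,\|v^t\|_\infty)$ by this invariant on $(\|u^t-u^*\|_\infty,\|v^t-v^*\|_\infty)$.
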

\begin{proof}
By the definition, we have
\begin{equation*}
f(u, v) \ = \ \one_n^\top B(u, v)\one_n - \langle u, r\rangle - \langle v, c\rangle \ = \ \sum_{i, j=1}^n e^{u_i + v_j - \frac{C_{ij}}{\eta}} - \sum_{i=1}^n r_i u_i - \sum_{j=1}^n c_j v_j. 
\end{equation*}
Since $\nabla_u f(u^t, v^t) = B(u^t, v^t)\one_n - r$ and $\nabla_v f(u^t, v^t) = B(u^t, v^t)^\top\one_n - c$, the quantity $E_t$ can be rewritten as $E_t = \|\nabla_u f(u^t, v^t)\|_1 + \|\nabla_v f(u^t, v^t)\|_1$. Using the fact that $f$ is convex and globally minimized at $(u^*, v^*)$, we have
\begin{equation*}
f(u^t, v^t) - f(u^*, v^*) \ \leq \ (u^t - u^*)^\top\nabla_u f(u^t, v^t) + (v^t - v^*)^\top\nabla_v f(u^t, v^t). 
\end{equation*}
Applying H$\ddot{\text{o}}$lder's inequality yields
\begin{eqnarray}\label{lemma-descent-inequality-main}
f(u^t, v^t) - f(u^*, v^*) & \leq & \|u^t - u^*\|_\infty\|\nabla_u f(u^t, v^t)\|_1 + \|v^t - v^*\|_\infty\|\nabla_v f(u^t, v^t)\|_1 \nonumber  \\
& = & \left(\|u^t - u^*\|_\infty + \|v^t - v^*\|_\infty\right) E_t. 
\end{eqnarray}
Thus, it suffices to show that 
\begin{equation*}
\|u^t - u^*\|_\infty + \|v^t - v^*\|_\infty \ \leq \ 2\left\|u^*\right\|_\infty + 2\left\|v^*\right\|_\infty. 
\end{equation*}
The next result is the key observation that makes our analysis work for the Greenkhorn algorithm. We use an induction argument to establish the following bound: 
\begin{equation}\label{lemma-descent-inequality-bound}
\max\{\|u^t - u^*\|_\infty, \ \|v^t - v^*\|_\infty\} \ \leq \ \max\{\|u^0 - u^*\|_\infty, \ \|v^0 - v^*\|_\infty\}. 
\end{equation}
It is easy to verify Eq.~\eqref{lemma-descent-inequality-bound} when $t=0$. Assuming that it holds true for $t = k_0 \geq 0$, we show that it also holds true for $t = k_0 + 1$. Without loss of generality, let $I$ be the index chosen at the $(k_0+1)$-th iteration. Then
\begin{align}
\|u^{k_0+1} - u^*\|_\infty & \leq \ \max \{\|u^{k_0} - u^*\|_\infty, |u_I^{k_0+1} - u_I^*|\}, \label{lemma-descent-inequality-bound-first} \\ 
\|v^{k_0+1} - v^*\|_\infty & = \ \|v^{k_0} - v^*\|_\infty. \label{lemma-descent-inequality-bound-second}
\end{align} 
Using the updating formula for $u_I^{k_0+1}$ and the optimality 
condition for $u_I^*$, we have
\begin{equation*}
e^{u_I^{k_0+1}} \ = \ \frac{r_I}{\sum_{j=1}^n e^{-\frac{C_{ij}}{\eta} + v_j^{k_{0}}}}, \qquad e^{u_I^*} \ = \ \frac{r_I}{\sum_{j=1}^n e^{-\frac{C_{ij}}{\eta} + v_j^*}}.
\end{equation*}
Putting these pieces together yields
\begin{equation}\label{lemma-descent-inequality-bound-third}
|u_I^{k_0+1} - u_I^*| \ = \ \left|\log\left(\frac{\sum_{j=1}^n e^{-C_{Ij}/\eta + v_j^{k_0}}}{\sum_{j=1}^n e^{-C_{Ij}/\eta + v_j^*}}\right)\right| \ \leq \ \|v^{k_0} - v^*\|_\infty,   
\end{equation}
where the inequality comes from the inequality: $(\sum_{i=1}^n a_i)/(\sum_{i=1}^n b_i) \leq \max_{1 \leq j \leq n} (a_i/b_i)$ for all $a_i, b_i > 0$. Combining Eq.~\eqref{lemma-descent-inequality-bound-first} and Eq.~\eqref{lemma-descent-inequality-bound-third} yields 
\begin{equation}\label{lemma-descent-inequality-bound-fourth}
\|u^{k_0+1} - u^*\|_\infty \ \leq \ \max\{\|u^{k_0} - u^*\|_\infty, \|v^{k_0} - v^*\|_\infty\}. 
\end{equation}
Combining Eq.~\eqref{lemma-descent-inequality-bound-second} and Eq.~\eqref{lemma-descent-inequality-bound-fourth} further implies the desired Eq.~\eqref{lemma-descent-inequality-bound}. This together with $u^0 = v^0 = 0$ implies
\begin{eqnarray}\label{lemma-descent-inequality-bound-main}
\|u^t - u^*\|_\infty + \|v^t - v^*\|_\infty & \leq & 2\left(\|u^0 - u^*\|_\infty + \|v^0 - v^*\|_\infty\right) \\
& = & 2\|u^*\|_\infty + 2\|v^*\|_\infty. \nonumber  
\end{eqnarray}
Putting Eq.~\eqref{lemma-descent-inequality-main} and Eq.~\eqref{lemma-descent-inequality-bound-main} together yields Eq.~\eqref{eq:upper_func}.
\end{proof}
Our second lemma provides an upper bound for the $\ell_\infty$-norm of one optimal solution $(u^*, v^*)$ of the dual regularized OT problem. This result is stronger than~\citet[Lemma~1]{Dvurechensky-2018-Computational} and generalize~\citet[Lemma~8]{Blanchet-2018-Towards}.   
\begin{lemma}\label{Lemma:Boundedness}
For the dual regularized OT problem in Eq.~\eqref{prob:dualregOT}, there exists an optimal solution $(u^*, v^*)$ such that 
\begin{equation}\label{Lemma-inequality-boundedness}
\|u^*\|_\infty \leq R, \qquad \|v^*\|_\infty \leq R, 
\end{equation}
where $R = \eta^{-1}\|C\|_\infty + \log(n) - 2\log(\min_{1 \leq i, j \leq n} \{r_i, c_j\})$ depends on $C$, $r$ and $c$. 
\end{lemma}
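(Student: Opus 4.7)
My plan is to combine the first-order optimality conditions for the dual problem in Eq.~\eqref{prob:dualregOT} with a translation symmetry of $f$ to select a well-centered representative of the optimal set, and then read off the bound directly from the explicit scaling relations. Setting $\nabla f(u^*,v^*)=0$ yields $B(u^*,v^*)\one_n=r$ and $B(u^*,v^*)^\top\one_n=c$, which rearrange into the componentwise identities
\[
u_i^* = \log r_i - \log\!\Bigl(\sum_{j=1}^n e^{-C_{ij}/\eta + v_j^*}\Bigr), \qquad v_j^* = \log c_j - \log\!\Bigl(\sum_{i=1}^n e^{-C_{ij}/\eta + u_i^*}\Bigr).
\]
I would first record the translation invariance $(u,v)\mapsto(u+t\one_n,\,v-t\one_n)$: the matrix $B(u,v)$ is manifestly unchanged, and $\langle t\one_n,r\rangle-\langle t\one_n,c\rangle=0$ since $r,c\in\Delta^n$. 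Thus the optimal set is a one-parameter family, and I am free to pick a representative with $\max_j v_j^*=0$.

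With this normalization every $v_j^*\le 0$, and using $0\le C_{ij}\le\|C\|_\infty$ together with $\max_j v_j^*=0$,
\[
e^{-\|C\|_\infty/\eta} \;\le\; \sum_{j=1}^n e^{-C_{ij}/\eta + v_j^*} \;\le\; n.
\]
Substituting into the formula for $u_i^*$ and writing $\mu:=\min_{i,j}\{r_i,c_j\}$, this sandwich yields $\log\mu-\log n\le u_i^*\le \|C\|_\infty/\eta$, whence $\|u^*\|_\infty\le \|C\|_\infty/\eta+\log n-\log\mu\le R$, since $-\log\mu\ge 0$.

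I would then feed this bound on $u^*$ into the analogous expression for $v_j^*$. The interval $u_i^*\in[\log\mu-\log n,\,\|C\|_\infty/\eta]$ gives $\sum_i e^{u_i^*}\in[\mu,\,n\,e^{\|C\|_\infty/\eta}]$, and combining with $e^{-C_{ij}/\eta}\in[e^{-\|C\|_\infty/\eta},1]$ sandwiches $\log\sum_i e^{-C_{ij}/\eta+u_i^*}$ between $\log\mu-\|C\|_\infty/\eta$ and $\log n+\|C\|_\infty/\eta$. Plugging back yields $\|v^*\|_\infty\le \|C\|_\infty/\eta+\log n-\log\mu\le R$, completing the argument. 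The step I expect to require the most care is the asymmetry produced by the one-parameter shift: only one of $u^*$ or $v^*$ can be centered at the outset, so the bound for the other must be propagated through a scaling relation, which picks up an additional $-\log\mu$ of slack and accounts for the factor $-2\log\mu$ (rather than $-\log\mu$) in the stated form of $R$.
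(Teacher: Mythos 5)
Your proof is correct and reaches the required conclusion, but it takes a genuinely different route from the paper's. The paper first shifts $(u,v)$ along the same invariant direction you identify, but normalizes so that $\max_i u_i^* \geq 0 \geq \min_i u_i^*$; it then derives separate oscillation bounds of the form $\max u^* - \min u^* \leq \|C\|_\infty/\eta - \log\mu$ (and likewise for $v^*$) and finishes with a two-case analysis on the sign of $\max_i v_i^*$. Your choice of anchoring $\max_j v_j^* = 0$ is tidier: once all $v_j^*$ are nonpositive, the sandwich $e^{-\|C\|_\infty/\eta} \leq \sum_j e^{-C_{ij}/\eta + v_j^*} \leq n$ immediately yields $u_i^* \in [\log\mu - \log n,\ \|C\|_\infty/\eta]$, and one further pass through the fixed-point identity for $v_j^*$ closes the argument without any case split. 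Both routes land at the same place, but yours is shorter and more mechanical, which is a real gain; the paper's normalization has the virtue that it treats $u^*$ and $v^*$ more symmetrically and makes the range bound Eq.~\eqref{Lemma-inequality-bound-u}--\eqref{Lemma-inequality-bound-v} a reusable intermediate fact.

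One small correction to your closing remark: the asymmetry of your normalization does not in fact cost an extra $-\log\mu$. Your own derivation gives $\|u^*\|_\infty \leq \|C\|_\infty/\eta + \log n - \log\mu$ and $\|v^*\|_\infty \leq \|C\|_\infty/\eta + \log n - \log\mu$, both with a single $-\log\mu$, which is strictly sharper than the stated $R = \|C\|_\infty/\eta + \log n - 2\log\mu$. So the lemma follows a fortiori, and the $-2\log\mu$ in the paper's $R$ is slack inherited from the paper's case analysis, not something your propagation step forces. You should simply assert $\|u^*\|_\infty, \|v^*\|_\infty \leq \|C\|_\infty/\eta + \log n - \log\mu \leq R$ and stop there rather than speculating that the extra $-\log\mu$ is needed.
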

\begin{proof}
First, we claim that there exists an optimal solution pair $(u^*, v^*)$ such that 
\begin{equation}\label{eq:dif_sign}
\max_{1 \leq i \leq n} u_i^* \ \geq \ 0 \ \geq \ \min_{1 \leq i \leq n} u_i^*. 
\end{equation}
Indeed, since the function $f$ is convex with respect to $(u, v)$, the set of optima of problem in Eq.~\eqref{prob:dualregOT} is nonempty. Thus, we can choose an optimal solution $(\tilde{u}^*, \tilde{v}^*)$ where
\begin{equation*}
+\infty > \max_{1 \leq i \leq n} \tilde{u}_i^* \geq \min_{1 \leq i \leq n} \tilde{u}_i^* > - \infty, \qquad +\infty > \max_{1 \leq i \leq n} \tilde{v}_i^* \geq \min_{1 \leq i \leq n} \tilde{v}_i^* > - \infty.  
\end{equation*}
Given the optimal solution $(\tilde{u}^*, \tilde{v}^*)$, we let $(u^*, v^*)$ be
\begin{equation*}
u^* = \tilde{u}^* - \frac{(\max_{1 \leq i \leq n} u_i^* + \min_{1 \leq i \leq n} u_i^*)\one_n}{2}, \quad v^* = \tilde{v}^* + \frac{(\max_{1 \leq i \leq n} u_i^* + \min_{1 \leq i \leq n} u_i^*)\one_n}{2}. 
\end{equation*}
and observe that $(u^*, v^*)$ satisfies Eq.~\eqref{eq:dif_sign}. Then it suffices to show that $(u^*, v^*)$ is optimal; i.e., $f(u^*, v^*) = f(\tilde{u}^*, \tilde{v}^*)$. Since $\one_n^\top r = \one_n^\top c = 1$, we have $\langle u^*, r\rangle = \langle\tilde{u}^*, r\rangle$ and $\langle v^*, c\rangle = \langle\tilde{v}^*, c\rangle$. Therefore, we conclude that 
\begin{eqnarray*}
f(u^*, v^*) & = & \sum_{i, j=1}^n e^{- \eta^{-1}C_{ij} + u_i^* + v_j^*} - \langle u^*, r\rangle - \langle v^*, c\rangle \ = \ \sum_{i, j=1}^n e^{- \eta^{-1}C_{ij} + \tilde{u}_i^* + \tilde{v}_j^*} - \langle\tilde{u}^*, r\rangle - \langle\tilde{v}^*, c\rangle \\
& = & f(\tilde{u}^*, \tilde{v}^*).
\end{eqnarray*}
The next step is to establish the following bounds:
\begin{align}
\max_{1 \leq i \leq n} u_i^* - \min_{1 \leq i \leq n} u_i^* & \ \leq \ \frac{\|C\|_\infty}{\eta} - \log\left(\min_{1 \leq i, j \leq n} \{r_i, c_j\}\right), \label{Lemma-inequality-bound-u} \\
\max_{1 \leq i \leq n} v_i^* - \min_{1 \leq i \leq n} v_i^* & \ \leq \ \frac{\|C\|_\infty}{\eta} - \log\left(\min_{1 \leq i, j \leq n} \{r_i, c_j\}\right). \label{Lemma-inequality-bound-v}
\end{align}
Indeed, for each $1 \leq i \leq n$, we have
\begin{equation*}
e^{-\eta^{-1}\|C\|_\infty + u_i^*} 
\left(\sum_{j=1}^n e^{v_j^*}\right) \ \leq \ \sum_{j=1}^n e^{-\eta^{-1}C_{ij} + u_i^* + v_j^*} \ = \ [B(u^*, v^*)\one_n]_i \ = \ r_i \leq  1,
\end{equation*}
which implies $u_i^* \leq \eta^{-1}\|C\|_\infty - \log(\sum_{j=1}^n e^{v_j^*})$. Furthermore, we have
\begin{equation*}
e^{u_i^*}\left(\sum_{j=1}^n e^{v_j^*}\right) \ \geq \ \sum_{j=1}^n e^{-\eta^{-1}C_{ij} + u_i^* + v_j^*} \ = \ [B(u^*, v^*)\one_n]_i \ = \ r_i \ \geq \ \min_{1 \leq i, j \leq n} \{r_i, c_j\},
\end{equation*}
which implies $u_i^* \geq \log(\min_{1 \leq i, j \leq n} \{r_i, c_j\}) - \log(\sum_{j=1}^n e^{v_j^*})$. Putting these pieces together yields Eq.~\eqref{Lemma-inequality-bound-u}. Using the similar argument, we can prove Eq.~\eqref{Lemma-inequality-bound-v} holds true. 

Finally, we proceed to prove that Eq.~\eqref{Lemma-inequality-boundedness} holds true. First, we assume $\max_{1 \leq i \leq n} v_i^* \geq 0$ and $\max_{1 \leq i \leq n} u_i^* \geq 0 \geq \min_{1 \leq i \leq n} u_i^*$. Then the optimality condition implies  
\begin{equation*}
\max_{1 \leq i \leq n} u_i^* + \max_{1 \leq i \leq n} v_i^* \ \leq \ \log\left(\max_{1 \leq i, j \leq n} e^{\eta^{-1}C_{ij}}\right) \ = \ \frac{\|C\|_\infty}{\eta}. 
\end{equation*}
and $\sum_{i, j=1}^n e^{-\eta^{-1}C_{ij} + u_i^* + v_j^*} = 1$. This together with the assumptions $\max_{1 \leq i \leq n} u_i^* \geq 0$ and $\max_{1 \leq i \leq n} v_i^* \geq 0$ yields 
\begin{equation}\label{Lemma-inequality-u-v-lower-upper-bound}
0 \ \leq \ \max_{1 \leq i \leq n} u_i^* \ \leq \ \frac{\|C\|_\infty}{\eta}, \qquad 0 \ \leq \ \max_{1 \leq i \leq n} v_i^* \ \leq \ \frac{\|C\|_\infty}{\eta}. 
\end{equation}
Combining Eq.~\eqref{Lemma-inequality-u-v-lower-upper-bound} with Eq.~\eqref{Lemma-inequality-bound-u} and Eq.~\eqref{Lemma-inequality-bound-v} yields 
\begin{equation*}
\min_{1 \leq i \leq n} u_i^* \ \geq \ -\frac{\|C\|_\infty}{\eta} + \log\left(\min_{1 \leq i, j \leq n} \{r_i, c_j\}\right), \quad \min_{1 \leq i \leq n} v_i^* \ \geq \ -\frac{\|C\|_\infty}{\eta} + \log\left(\min_{1 \leq i, j \leq n} \{r_i, c_j\}\right).
\end{equation*}
which implies that Eq.~\eqref{Lemma-inequality-boundedness} holds true.  

We proceed to the alternative scenario, where $\max_{1 \leq i \leq n} v_i^* \leq 0$ and $\max_{1 \leq i \leq n} u_i^* \geq 0 \geq \min_{1 \leq i \leq n} u_i^*$. This together with~\eqref{Lemma-inequality-bound-u} yields
\begin{equation*}
-\frac{\|C\|_\infty}{\eta} + \log\left(\min_{1 \leq i, j \leq n} \{r_i, c_j\}\right) \ \leq \ \min_{1 \leq i \leq n} u_i^* \ \leq \ \max_{1 \leq i \leq n} u_i^* \ \leq \ \frac{\|C\|_\infty}{\eta} - \log\left(\min_{1 \leq i, j \leq n} \{r_i, c_j\}\right). 
\end{equation*}
Furthermore, we have
\begin{equation*}
\min_{1 \leq i \leq n} v_i^* \ \geq \ \log\left(\min_{1 \leq i, j \leq n} \{r_i, c_j\}\right) - \log\left(\sum_{i=1}^n e^{u_i^*}\right) \ \geq \ 2\log\left(\min_{1 \leq i, j \leq n} \{r_i, c_j\}\right) - \log(n) - \frac{\|C\|_\infty}{\eta} 
\end{equation*}
which implies that Eq.~\eqref{Lemma-inequality-boundedness} holds true. 
\end{proof}
Putting Lemma~\ref{Lemma:Descent} and~\ref{Lemma:Boundedness} together, we have the following straightforward consequence: 
\begin{corollary}\label{cor:bound_func}
Letting $\{(u^t, v^t)\}_{t \geq 0}$ be the iterates generated by Algorithm~\ref{Algorithm:Greenkhorn}, we have
\begin{equation}\label{Lemma-inequality-descent}
f(u^t, v^t) - f(u^*, v^*) \leq 4RE_t.
\end{equation}
\end{corollary}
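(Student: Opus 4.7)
The plan is to combine Lemma \ref{Lemma:Descent} and Lemma \ref{Lemma:Boundedness} directly, with one small care about the choice of optimal solution. Lemma \ref{Lemma:Descent} gives the bound
\begin{equation*}
f(u^t, v^t) - f(u^*, v^*) \leq 2(\|u^*\|_\infty + \|v^*\|_\infty) E_t,
\end{equation*}
which holds for \emph{any} optimal solution $(u^*, v^*)$ of the dual problem in Eq.~\eqref{prob:dualregOT} (the proof of Lemma \ref{Lemma:Descent} does not pin down which optimum is used).

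First I would invoke Lemma \ref{Lemma:Boundedness} to select a specific optimal pair $(u^*, v^*)$ that satisfies $\|u^*\|_\infty \leq R$ and $\|v^*\|_\infty \leq R$ simultaneously; the existence of such a pair is exactly the content of Lemma \ref{Lemma:Boundedness}. Then I would apply Lemma \ref{Lemma:Descent} to this particular optimal solution. Since $f(u^*, v^*)$ takes the same value at every minimizer of $f$, the quantity $f(u^t, v^t) - f(u^*, v^*)$ on the left-hand side of Eq.~\eqref{Lemma-inequality-descent} is unambiguous.

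Substituting the two bounds on $\|u^*\|_\infty$ and $\|v^*\|_\infty$ into the inequality from Lemma \ref{Lemma:Descent} gives
\begin{equation*}
f(u^t, v^t) - f(u^*, v^*) \leq 2(R + R) E_t = 4R E_t,
\end{equation*}
which is exactly Eq.~\eqref{Lemma-inequality-descent}. There is no real obstacle here; the only point that requires a brief remark is that the two lemmas are applied with compatible choices of $(u^*, v^*)$, but since the dual objective value at any optimum is the same, this choice is harmless for the left-hand side.
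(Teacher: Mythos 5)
Your proposal is correct and matches the paper's approach exactly: the paper simply cites Lemma~\ref{Lemma:Descent} and Lemma~\ref{Lemma:Boundedness} and calls the corollary a ``straightforward consequence.'' Your extra remark that one must apply Lemma~\ref{Lemma:Descent} to the specific optimum produced by Lemma~\ref{Lemma:Boundedness}, and that this is harmless because the optimal dual value is unique, is a correct and worthwhile clarification the paper leaves implicit.
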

\begin{remark}
The constant $R$ provides an upper bound in both~\citet{Dvurechensky-2018-Computational} and this paper, where the same notation is used. The values in the two papers are of the same order since $R$ in our paper only involves an additional term $\log(n) - \log(\min_{1\leq i, j \leq n} \{r_i, c_j\})$.
\end{remark}
\begin{remark}
We further comment on the proof techniques in this paper and~\citet{Dvurechensky-2018-Computational}. The proof for~\citet[Lemma~2]{Dvurechensky-2018-Computational} depends on taking full advantage of the shift property of the Sinkhorn algorithm; namely, either $B(\overline{u}^t, \overline{v}^t)\one_n = r$ or $B(\overline{u}^t, \overline{v}^t)^\top\one_n = c$ where $(\overline{u}^t, \overline{v}^t)$ stands for the iterate generated by the Sinkhorn algorithm. Unfortunately, the Greenkhorn algorithm does not enjoy such a shift property. We have thus proposed a different approach for bounding $f(u^t, v^t) - f(u^*, v^*)$, based on the $\ell_\infty$-norm of the optimal solution $(u^*, v^*)$ of the dual regularized OT problem.
\end{remark}

\subsection{Complexity analysis---bounding the number of iterations}
We proceed to provide an upper bound for the iteration number to achieve a desired tolerance $\varepsilon'$ in Algorithm~\ref{Algorithm:Greenkhorn}. First, we start with a lower bound for the difference of function values between two consecutive iterates of Algorithm~\ref{Algorithm:Greenkhorn}:
\begin{lemma} \label{lemma:func_consec}
Letting $\{(u^t, v^t)\}_{t \geq 0}$ be the iterates generated by Algorithm~\ref{Algorithm:Greenkhorn}, we have
\begin{equation*}
f(u^t, v^t) - f(u^{t+1}, v^{t+1}) \ \geq \ \frac{(E_t)^2}{28n}. 
\end{equation*}
\end{lemma}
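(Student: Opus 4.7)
The plan is to compute the per-iteration decrement in $f$ exactly, apply the greedy-selection rule to convert "max over a single index" into "average over all $2n$ indices," and then invoke a Pinsker-type inequality for the Bregman divergence $\rho$.

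First, I would compute the exact decrease in $f$ produced by one Greenkhorn step. Using $f(u,v) = \one_n^\top B(u,v)\one_n - \langle u,r\rangle - \langle v,c\rangle$ and the closed-form update $u_I^{t+1} = u_I^t + \log r_I - \log r_I(B(u^t,v^t))$, multiplying the $I$-th row of $B$ by $r_I/r_I(B(u^t,v^t))$ rescales that row sum to exactly $r_I$. Accounting for the change in the exponential sum ($r_I - r_I(B(u^t,v^t))$) and in $-r_I u_I$ ($-r_I\log(r_I/r_I(B(u^t,v^t)))$) produces the identity
\[
f(u^t,v^t) - f(u^{t+1},v^{t+1}) \;=\; \rho\bigl(r_I, r_I(B(u^t,v^t))\bigr),
\]
and symmetrically $\rho(c_J, c_J(B(u^t,v^t)))$ when a column is updated.

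Second, I would exploit the greedy rule of Algorithm~\ref{Algorithm:Greenkhorn}: it selects the $\operatorname{argmax}$ indices and then picks the larger of the two $\rho$-values. Hence the single-step decrement equals the maximum of $2n$ nonnegative quantities and is therefore at least their average,
\[
f(u^t,v^t) - f(u^{t+1},v^{t+1}) \;\geq\; \frac{1}{2n}\Bigl[\,\sum_{i=1}^n \rho(r_i, r_i(B^t)) + \sum_{j=1}^n \rho(c_j, c_j(B^t))\,\Bigr].
\]

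Third, I would invoke a Pinsker-type bound for the Bregman divergence $\rho$ generated by $x\log x$. For suitably mass-controlled nonnegative vectors $a,b$, one has $\sum_i \rho(a_i,b_i) \geq \tfrac{1}{7}\|a-b\|_1^2$. Applying this to rows and columns separately and combining via $(x+y)^2 \leq 2(x^2+y^2)$ yields
\[
\sum_{i=1}^n \rho(r_i, r_i(B^t)) + \sum_{j=1}^n \rho(c_j, c_j(B^t)) \;\geq\; \frac{E_t^2}{14}.
\]
Substituting into the previous display gives the desired $E_t^2/(28n)$.

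The main obstacle is the Pinsker-type inequality in the third step: the classical Pinsker bound applies to probability distributions, but $B(u^t,v^t)$ is unnormalized in general (in particular at $t=0$, where $\|B(0,0)\|_1 = \sum_{ij} e^{-C_{ij}/\eta}$ need not equal $1$). To close this, one either works with a mass-sensitive version of Pinsker and tracks $\|B(u^t,v^t)\|_1$ (which, after any row/column update, equals the appropriate target marginal plus the unchanged rows/columns and can be shown to remain $\leq 2$), or renormalizes and absorbs the mass correction into the constant. Either route yields a universal constant of the order of $14$ in the Pinsker-type step, which is what produces the $28n$ denominator.
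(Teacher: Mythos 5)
Your proof is correct and follows essentially the same route as the paper's: the exact per-step decrement identity you compute is Lemma~5 of Altschuler et al.\ (2017), the Pinsker-type bound with constant $1/7$ is their Lemma~6, and the paper combines them via the same greedy-to-average and $(x+y)^2 \leq 2(x^2+y^2)$ steps to obtain the $28n$ denominator. The mass-control caveat you flag for the Pinsker step is precisely the hypothesis $\|b\|_1 \leq 2$ of that lemma, which the paper inherits implicitly by citation rather than verifying in-line.
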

\begin{proof}
We observe that 
\begin{eqnarray*}
f(u^t, v^t) - f(u^{t+1}, v^{t+1}) & \geq & \frac{1}{2n}\left(\rho(r, r(B(u^t, v^t)) + \rho(c, c(B(u^t, v^t))\right) \\
& & \hspace*{-8em} \geq \frac{1}{14n}\left(\|r - r(B(u^t, v^t))\|_1^2 + \|c - c(B(u^t, v^t))\|_1^2\right),
\end{eqnarray*}
where the first inequality comes from~\citet[Lemma~5]{Altschuler-2017-Near} and the fact that the row or column update is chosen in a greedy manner, and the second inequality comes from~\citet[Lemma~6]{Altschuler-2017-Near}. The definition of $E_t$ implies the desired result.
\end{proof}
We are now able to derive the iteration complexity of Algorithm~\ref{Algorithm:Greenkhorn}. 
\begin{theorem}\label{Theorem:Greenkhorn-Total-Complexity}
Letting $\{(u^t, v^t)\}_{t \geq 0}$ be the iterates generated by Algorithm~\ref{Algorithm:Greenkhorn}, the number of iterations required to satisfy $E_t \leq \varepsilon'$ is upper bounded by $t \leq 2 + 112nR/\varepsilon'$ where $R > 0$ is defined in Lemma~\ref{Lemma:Boundedness}.
\end{theorem}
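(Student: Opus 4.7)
The plan is to bound the iteration count $T$ at which the stopping criterion $E_T \leq \varepsilon'$ is first met, by combining the per-iteration descent bound from Lemma~\ref{lemma:func_consec} with the duality-gap upper bound from Corollary~\ref{cor:bound_func}. Let $\Delta_t := f(u^t, v^t) - f(u^*, v^*)$, and argue by contradiction: assume $E_t > \varepsilon'$ for all $0 \leq t < T$ and derive an upper bound on $T$. The two ingredients are $\Delta_t - \Delta_{t+1} \geq E_t^2/(28n)$ and $\Delta_t \leq 4R E_t$; neither alone delivers the desired linear-in-$1/\varepsilon'$ rate, so they must be combined.

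The central step is to substitute $E_t \geq \Delta_t/(4R)$ into one of the two factors of $E_t^2$ in the descent bound, obtaining the mixed inequality
\[
\Delta_t - \Delta_{t+1} \ \geq \ \frac{E_t\, \Delta_t}{112\, n R}.
\]
Since $E_t > \varepsilon'$ for $t < T$, this yields the contraction $\Delta_{t+1} \leq \Delta_t \bigl(1 - \varepsilon'/(112\, n R)\bigr)$, and telescoping from $t = 0$ to $T-1$ gives $\Delta_T \leq \Delta_0 \exp\!\bigl(-T \varepsilon'/(112\, n R)\bigr)$. Applying Lemma~\ref{lemma:func_consec} at $t = T-1$ (where $E_{T-1} > \varepsilon'$) furnishes the matching lower bound $\Delta_{T-1} \geq E_{T-1}^2/(28n) > (\varepsilon')^2/(28n)$. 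Combining the two bounds at index $T-1$ and solving for $T$ produces an inequality of the form $T - 1 \leq (112 n R/\varepsilon') \log\bigl(28 n \Delta_0/(\varepsilon')^2\bigr)$. Finally, Corollary~\ref{cor:bound_func} gives $\Delta_0 \leq 4R E_0$, and $E_0$ itself is controlled by rearranging $E_0^2/(28n) \leq \Delta_0 - \Delta_1 \leq \Delta_0 \leq 4R E_0$, which yields $E_0 \leq 112\, nR$ and thereby makes the logarithmic factor depend only on $n$, $R$, and $\varepsilon'$.

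The main obstacle will be matching the precise constant in the stated bound $2 + 112 nR/\varepsilon'$, since the contraction argument above naturally carries an additional $\log\bigl(28 n \Delta_0/(\varepsilon')^2\bigr)$ factor. A purely direct summation of Lemma~\ref{lemma:func_consec}---using only $E_t > \varepsilon'$ to deduce $T(\varepsilon')^2/(28n) \leq \Delta_0$---gives only the weaker $T = \mathcal{O}(nR E_0/(\varepsilon')^2)$ bound of~\cite{Altschuler-2017-Near}, so the claimed improvement to $T = \mathcal{O}(nR/\varepsilon')$ genuinely requires the contraction perspective combined with a careful bound on the logarithm. I expect the ``$+2$'' in the statement to absorb either a short warm-up phase that reduces $\Delta_0$ before the contraction is invoked, or constants arising from the specific parameter choices in Algorithm~\ref{Algorithm:ApproxOT_Greenkhorn} ($\eta = \varepsilon/(4\log n)$, $\varepsilon' = \varepsilon/(8\|C\|_\infty)$) that render the logarithm $\mathcal{O}(1)$.
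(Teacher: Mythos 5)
Your plan starts from the right two ingredients (Corollary~\ref{cor:bound_func} and Lemma~\ref{lemma:func_consec}), but the way you combine them is the wrong move, and the gap you yourself flag is genuine and fatal, not something the ``$+2$'' can absorb. Substituting $E_t \geq \Delta_t/(4R)$ into only one factor of $E_t^2$ and then dropping the other factor down to $\varepsilon'$ produces the multiplicative contraction $\Delta_{t+1} \leq \Delta_t\bigl(1 - \varepsilon'/(112nR)\bigr)$, and any bound extracted from a geometric contraction of this type necessarily carries a $\log(\Delta_0 \cdot \text{something}/(\varepsilon')^2)$ factor. Even after your bound $E_0 \leq 112nR$ (hence $\Delta_0 \leq 448nR^2$), that logarithm is $\Theta(\log(nR/\varepsilon'))$, which grows with $n$ and $1/\varepsilon'$ and cannot be hidden in an additive constant. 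So the approach as written proves only $t = \bigO\bigl((nR/\varepsilon')\log(nR/\varepsilon')\bigr)$, not the claimed $2 + 112nR/\varepsilon'$.

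The paper instead uses the two-phase ``switching strategy'' of Dvurechensky et al. The crucial difference is how the two inequalities are combined in the early phase. Rather than mixing $E_t \geq \Delta_t/(4R)$ with $E_t > \varepsilon'$, the paper substitutes $E_t \geq \Delta_t/(4R)$ into \emph{both} factors to get the quadratic recurrence $\Delta_t - \Delta_{t+1} \geq \Delta_t^2/(448nR^2)$, which upon dividing by $\Delta_t\Delta_{t+1}$ yields $1/\Delta_{t+1} - 1/\Delta_t \geq 1/(448nR^2)$, i.e.\ \emph{linear} growth of $1/\Delta_t$ and hence $\Delta_t \lesssim 448nR^2/t$, a polynomial (not geometric) decay with no logarithm. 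Separately, once $\Delta_t$ has fallen below a threshold $s$, the constant decrement $\Delta_t - \Delta_{t+1} \geq (\varepsilon')^2/(28n)$ finishes in at most $28ns/(\varepsilon')^2$ more steps. Adding the two phase lengths and minimizing over the switching threshold $s$ (the optimum is $s = 4R\varepsilon'$) gives a bound of order $nR/\varepsilon'$ with no log. Your geometric-contraction route cannot reproduce this because it throws away the self-improving quadratic structure in favor of a fixed contraction rate; to close the gap you need to keep $\Delta_t^2$ in the recurrence during the early phase and only switch to the $(\varepsilon')^2$ lower bound at the right moment.
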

\begin{proof}
Letting $\delta_t = f(u^t, v^t) - f(u^*, v^*)$, we derive from Corollary~\ref{cor:bound_func} and Lemma~\ref{lemma:func_consec} that
\begin{equation*}
\delta_t - \delta_{t+1} \ \geq \ \max\{\delta_t^2/(448nR^2), \ (\varepsilon')^2/(28n)\}, 
\end{equation*}
where $E_t \geq \varepsilon'$ as soon as the stopping criterion is not fulfilled. In the following step we apply a switching strategy introduced by~\citet{Dvurechensky-2018-Computational}. Given any $t \geq 1$, we have two estimates: 
\begin{itemize}
\item[(i)] Considering the process from the first iteration and the $t$-th iteration, we have
\begin{equation*}
\frac{\delta_{t+1}}{448nR^2} \ \leq \ \frac{1}{t + 448nR^2\delta_1^{-2}} \quad \Longrightarrow \quad t \ \leq \ 1 + \dfrac{448nR^2}{\delta_t} - \dfrac{448nR^2}{\delta_1}. 
\end{equation*}
\item[(ii)] Considering the process from the $(t+1)$-th iteration to the $(t+m)$-th iteration for any $m \geq 1$, we have
\begin{equation*}
\delta_{t+m} \ \leq \ \delta_t - \frac{(\varepsilon')^2 m}{28n} \quad \Longrightarrow \quad m \ \leq \ \dfrac{28n(\delta_t - \delta_{t+m})}{(\varepsilon')^2}. 
\end{equation*}
\end{itemize}
We then minimize the sum of two estimates by an optimal choice of $s \in \left(0, \delta_1\right]$: 
\begin{equation*}
t \ \leq \ \min_{0 < s \leq \delta_1} \left(2 + \frac{448nR^2}{s} - \frac{448nR^2}{\delta_1} + \frac{28ns}{(\varepsilon')^2}\right) \ = \ 
\left\{\begin{array}{ll}
2 + \dfrac{224nR}{\varepsilon'} - \dfrac{448nR^2}{\delta_1}, & \delta_1 \geq 4R\varepsilon', \\
2 + \dfrac{28n\delta_1}{(\varepsilon')^2}, & \delta_1 \leq 4R\varepsilon'. 
\end{array}
\right. 
\end{equation*}
This implies that $t \leq 2 + 112nR/\varepsilon'$ in both cases and completes the proof. 
\end{proof}
Equipped with the result of Theorem~\ref{Theorem:Greenkhorn-Total-Complexity} and the scheme of Algorithm~\ref{Algorithm:ApproxOT_Greenkhorn}, we are able to establish the following result for the complexity of Algorithm~\ref{Algorithm:ApproxOT_Greenkhorn}:
\begin{theorem}\label{Theorem:ApproxOT-Greenkhorn-Total-Complexity}
The Greenkhorn algorithm for approximating optimal transport 
(Algorithm~\ref{Algorithm:ApproxOT_Greenkhorn}) returns 
$\hat{X} \in \br^{n \times n}$ satisfying $\hat{X}\one_n = r$, $\hat{X}^\top\one_n = c$ and Eq.~\eqref{Criteria:Approximation} in
\begin{equation*}
\bigO\left(\frac{n^2\left\|C\right\|_\infty^2\log(n)}{\varepsilon^2}\right)
\end{equation*}
arithmetic operations. 
\end{theorem}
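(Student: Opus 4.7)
The plan is to combine three ingredients: (i) the per-iteration iteration complexity bound from Theorem~\ref{Theorem:Greenkhorn-Total-Complexity}, (ii) the $\bigO(n)$ per-iteration arithmetic cost of the Greenkhorn update, and (iii) a standard error-decomposition argument that links an $\varepsilon'$-approximate solution of the dual regularized OT problem (with rounded marginals $\tilde r,\tilde c$ and regularization $\eta$) back to an $\varepsilon$-approximate solution of the original OT problem.

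First I would carry out the error decomposition that justifies the parameter choices $\eta=\varepsilon/(4\log n)$ and $\varepsilon'=\varepsilon/(8\|C\|_\infty)$. Writing $\tilde X$ for the Greenkhorn output and $\hat X$ for its rounded version (Step~3 of Algorithm~\ref{Algorithm:ApproxOT_Greenkhorn}), I decompose $\langle C,\hat X\rangle-\langle C,X^*\rangle$ into four pieces: the rounding error $\langle C,\hat X-\tilde X\rangle$, the approximation error $\langle C,\tilde X-\tilde X^*\rangle$ where $\tilde X^*$ is the optimum of the regularized problem with marginals $(\tilde r,\tilde c)$, the regularization-induced error, and the error from replacing $(r,c)$ by $(\tilde r,\tilde c)$. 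The rounding lemma of~\cite[Algorithm~2]{Altschuler-2017-Near} gives $\|\hat X-\tilde X\|_1\le 2(\|r(\tilde X)-\tilde r\|_1+\|c(\tilde X)-\tilde c\|_1)\le 2\varepsilon'$, so this piece is bounded by $2\varepsilon'\|C\|_\infty$. The regularization error is at most $\eta\log(n^2)=2\eta\log n$ using $0\le H(X)\le 2\log n$. The marginal perturbation contributes $\bigO(\varepsilon'\|C\|_\infty)$. Adding these and plugging in $\eta=\varepsilon/(4\log n)$, $\varepsilon'=\varepsilon/(8\|C\|_\infty)$ shows each piece is at most $\varepsilon/2$ up to constants, which yields Eq.~\eqref{Criteria:Approximation}.

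Second, I would count the iterations. Theorem~\ref{Theorem:Greenkhorn-Total-Complexity} gives $t\le 2+112nR/(\varepsilon'/2)$ iterations to drive $E_t$ below $\varepsilon'/2$, where
\[
R \;=\; \eta^{-1}\|C\|_\infty+\log n-2\log\!\left(\min_{i,j}\{\tilde r_i,\tilde c_j\}\right).
\]
The regularized marginals satisfy $\min_{i,j}\{\tilde r_i,\tilde c_j\}\ge \varepsilon'/(8n)=\varepsilon/(64n\|C\|_\infty)$, so $-\log\min\{\tilde r_i,\tilde c_j\}=\bigO(\log(n\|C\|_\infty/\varepsilon))$. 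Plugging in $\eta=\varepsilon/(4\log n)$ gives $\eta^{-1}\|C\|_\infty=\bigO(\|C\|_\infty\log(n)/\varepsilon)$, which dominates. Hence $R=\bigOtil(\|C\|_\infty/\varepsilon)$, and the iteration count becomes
\[
t \;=\; \bigO\!\left(\tfrac{nR}{\varepsilon'}\right) \;=\; \bigO\!\left(\tfrac{n\,\|C\|_\infty\log(n)/\varepsilon}{\varepsilon/\|C\|_\infty}\right) \;=\; \bigO\!\left(\tfrac{n\|C\|_\infty^2\log n}{\varepsilon^2}\right).
\]
Multiplying by the $\bigO(n)$ per-iteration cost of Greenkhorn (noted in the paragraph preceding Lemma~\ref{Lemma:Descent}) yields the claimed $\bigO(n^2\|C\|_\infty^2\log(n)/\varepsilon^2)$ bound.

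I expect the main obstacle to be a careful bookkeeping step rather than a conceptual one: one has to verify that the tolerance $\varepsilon'/2$ passed into \textsc{Greenkhorn}, the factor $1-\varepsilon'/8$ used to form $\tilde r,\tilde c$, and the $\eta\log n^2$ regularization gap all balance so that each of the four error contributions is at most $\varepsilon/4$ (or $\varepsilon/2$), and then to confirm that the $\log$ factors absorbed into $\bigOtil$ when computing $R$ do not inflate the final rate beyond $\log n$. Everything else is a direct substitution into Theorem~\ref{Theorem:Greenkhorn-Total-Complexity}.
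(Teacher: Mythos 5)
Your proposal takes essentially the same approach as the paper: both apply Theorem~\ref{Theorem:Greenkhorn-Total-Complexity} with the error decomposition adapted from \cite[Theorem~1]{Altschuler-2017-Near} (which the paper invokes by reference and you spell out in four pieces), the same bound on $R$ after substituting $\min\{\tilde r_i,\tilde c_j\}\ge\varepsilon'/(8n)$ and $\eta=\varepsilon/(4\log n)$, and the same $\bigO(n)$ per-iteration cost plus $\bigO(n^2)$ for the rounding step. The bookkeeping you flag at the end is exactly where the paper's proof goes as well, so the argument is correct and not genuinely different.
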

\begin{proof}
We follow the proof steps in~\citet[Theorem~1]{Altschuler-2017-Near} and obtain
\begin{eqnarray*}
\langle C, \hat{X}\rangle - \langle C, X^*\rangle & \leq & 2\eta\log(n) + 4(\|\tilde{X}\one_n - r\|_1 + \|\tilde{X}^\top\one_n - c\|_1)\|C\|_\infty \\
& \leq & \varepsilon/2 + 4(\|\tilde{X}\one_n - r\|_1 + \|\tilde{X}^\top\one_n - c\|_1)\|C\|_\infty, 
\end{eqnarray*}
where $\hat{X}$ is returned by Algorithm~\ref{Algorithm:ApproxOT_Greenkhorn}, $X^*$ is a solution to the optimal transport problem and $\tilde{X}$ is returned by Algorithm~\ref{Algorithm:Greenkhorn} with $\tilde{r}$, $\tilde{c}$ and $\varepsilon'/2$ in Step 3 of Algorithm~\ref{Algorithm:ApproxOT_Greenkhorn}. The last inequality in the above display holds true since $\eta = \varepsilon/(4\log(n))$. Furthermore, we have
\begin{eqnarray*}
\|\tilde{X}\one_n - r\|_1 + \|\tilde{X}^\top\one_n - c\|_1 & \leq & \|\tilde{X}\one_n - \tilde{r}\|_1 + \|\tilde{X}^\top\one_n - \tilde{c}\|_1 + \|r - \tilde{r}\|_1 + \|c - \tilde{c}\|_1 \\
& \leq & \varepsilon'/2 + \varepsilon'/4 + \varepsilon'/2 \ = \ \varepsilon'.
\end{eqnarray*}
We conclude that $\langle C, \hat{X}\rangle - \langle C, X^*\rangle \leq \varepsilon$ from that $\varepsilon' = \varepsilon/(8\|C\|_\infty)$. The remaining step is to analyze the complexity bound. It follows from Theorem~\ref{Theorem:Greenkhorn-Total-Complexity} and the definition of $\tilde{r}$ and $\tilde{c}$ in Algorithm~\ref{Algorithm:ApproxOT_Greenkhorn} that 
\begin{eqnarray*}
t \ \leq \ 2 + \frac{112nR}{\varepsilon'} & \leq & 2 + \frac{96n\|C\|_\infty}{\varepsilon}\biggr(\frac{\|C\|_\infty}{\eta} + \log(n) 
- 2\log\left(\min_{1 \leq i, j \leq n} \{r_i, c_j\}\right)\biggr) \\
& \leq & 2 + \frac{96n\|C\|_\infty}{\varepsilon}\left(\frac{4\|C\|_\infty\log(n)}{\varepsilon} + \log(n) - 2\log\left(\frac{\varepsilon}{64n\|C\|_\infty}\right)\right) \\
& = & \bigO\left(\frac{n\|C\|_\infty^2\log(n)}{\varepsilon^2}\right). 
\end{eqnarray*}
The total iteration complexity in Step 3 of Algorithm~\ref{Algorithm:ApproxOT_Greenkhorn} is bounded by $\bigO(n\|C\|_\infty^2\log(n)\varepsilon^{-2})$. Each iteration of Algorithm~\ref{Algorithm:Greenkhorn} requires $\bigO(n)$ arithmetic operations. Thus, the total amount of arithmetic operations is $\bigO(n^2\|C\|_\infty^2\log(n)\varepsilon^{-2})$. Furthermore, $\tilde{r}$ and $\tilde{c}$ in Step 2 of Algorithm~\ref{Algorithm:ApproxOT_Greenkhorn} can be found in $\bigO(n)$ arithmetic operations and~\citet[Algorithm~2]{Altschuler-2017-Near} requires $O(n^2)$ arithmetic operations. Therefore, we conclude that the total number of arithmetic operations is $\bigO(n^2\|C\|_\infty^2\log(n)\varepsilon^{-2})$. 
\end{proof}
The result of Theorem~\ref{Theorem:ApproxOT-Greenkhorn-Total-Complexity} 
improves the best known complexity bound $\bigOtil(n^2\epsilon^{-3})$ of the Greenkhorn algorithm~\citep{Altschuler-2017-Near, Abid-2018-Greedy}, and further matches the best known complexity bound of the Sinkhorn algorithm~\citep{Dvurechensky-2018-Computational}. This sheds light on the superior performance of the Greenkhorn algorithm in practice. 

\section{Adaptive Primal-Dual Accelerated Mirror Descent}\label{sec:apdamd}
In this section, we propose an adaptive primal-dual accelerated mirror descent (APDAMD) algorithm for solving the regularized OT problem in Eq.~\eqref{prob:regOT}. The APDAMD algorithm and its application to regularized OT are presented in Algorithm~\ref{Algorithm:APDAMD} and~\ref{Algorithm:ApproxOT_APDAMD}. We further show that the complexity bound of APDAMD is $\bigO(n^2\sqrt{\delta}\|C\|_\infty \log(n)\varepsilon^{-1})$, where $\delta>0$ depends on the mirror mapping $\phi$ in Algorithm~\ref{Algorithm:APDAMD}. 
\begin{algorithm}[!t]
\caption{\textsc{Apdamd}$(\varphi, A, b, \varepsilon')$} \label{Algorithm:APDAMD}
\begin{algorithmic}
\STATE \textbf{Input:} $t = 0$. 
\STATE \textbf{Initialization:} $\bar{\alpha}^0 = \alpha^0 = 0$,  $z^0 = \mu^0 = \lambda^0$ and $L^0 = 1$. 
\REPEAT
\STATE Set $M^t = L^t/2$. 
\REPEAT
\STATE Set $M^t = 2M^t$. 
\STATE Compute the stepsize: $\alpha^{t+1} = (1 + \sqrt{1 + 4\delta M^t\bar{\alpha}^t})/(2\delta M^t)$. 
\STATE Compute the average coefficient: $\bar{\alpha}^{t+1} = \bar{\alpha}^t + \alpha^{t+1}$. 
\STATE Compute the first average step: $\mu^{t+1} = (\alpha^{t+1}z^t + \bar{\alpha}^t\lambda^t)/\bar{\alpha}^{t+1}$. 
\STATE Compute the mirror descent: $z^{t+1} = \argmin\limits_{z \in \br^n} \{\langle\nabla \varphi(\mu^{t+1}), z - \mu^{t+1}\rangle + B_\phi(z, z^t)/\alpha^{t+1}\}$. 
\STATE Compute the second average step: $\lambda^{t+1} = (\alpha^{t+1}z^{t+1} + \bar{\alpha}^t\lambda^t)/\bar{\alpha}^{t+1}$. 
\UNTIL{$\varphi(\lambda^{t+1}) - \varphi(\mu^{t+1}) - \langle\nabla \varphi(\mu^{t+1}), \lambda^{t+1} - \mu^{t+1}\rangle
\leq (M^t/2)\|\lambda^{t+1} - \mu^{t+1}\|_\infty^2$.}
\STATE Compute the main average step: $x^{t+1} = (\alpha^{t+1} x(\mu^{t+1}) + \bar{\alpha}^t x^t)/\bar{\alpha}^{t+1}$.  
\STATE Set $L^{t+1} = M^t/2$. 
\STATE Set $t = t + 1$. 
\UNTIL{$\|Ax^t - b\|_1 \leq \varepsilon'$.}
\STATE \textbf{Output:} $X^t$ where $x^t = \text{vec}(X^t)$.  
\end{algorithmic}
\end{algorithm}  
\begin{algorithm}[!t]
\caption{Approximating OT by Algorithm~\ref{Algorithm:APDAMD}} \label{Algorithm:ApproxOT_APDAMD}
\begin{algorithmic}
\STATE \textbf{Input:} $\eta = \varepsilon/(4\log(n))$ and $\varepsilon' = \varepsilon/(8\|C\|_\infty)$. 
\STATE \textbf{Step 1:} Let $\tilde{r} \in \Delta_n$ and $\tilde{c} \in \Delta_n$ be defined by $(\tilde{r}, \tilde{c}) = (1 - \varepsilon'/8)(r, c) + (\varepsilon'/8n)(\one_n, \one_n)$. 
\STATE \textbf{Step 2:} Let $A \in \br^{2n \times n^2}$ and $b \in \br^{2n}$ be defined by $A\text{vec}(X) = \begin{pmatrix} X\one_n \\ X^\top\one_n \end{pmatrix}$ and $b = \begin{pmatrix} \tilde{r} \\ \tilde{c} \end{pmatrix}$. 
\STATE \textbf{Step 3:} Compute $\tilde{X} = \textsc{Apdamd}(\varphi, A, b, \varepsilon'/2)$ where $\varphi$ is defined by Eq.~\eqref{prob:dualregOT-APDAMD}. 
\STATE \textbf{Step 4:} Round $\tilde{X}$ to $\hat{X}$ using~\citet[Algorithm~2]{Altschuler-2017-Near} such that $\hat{X}\one_n = r$ and $\hat{X}^\top\one_n = c$. 
\STATE \textbf{Output:} $\hat{X}$.  
\end{algorithmic}
\end{algorithm} 
\subsection{General setup}
We consider the following generalization of the regularized OT problem:
\begin{equation}\label{prob:generalOT}
\min_{x \in \br^n} f(x), \quad \st \ Ax = b, 
\end{equation}
where $A \in \br^{n \times n}$ is a matrix and $b \in \br^n$ and $f$ is assumed to be strongly convex with respect to $\ell_1$-norm: $f(x') - f(x) - \langle\nabla f(x), x' - x\rangle \geq (\eta/2)\|x' - x\|_1^2$. By abuse of notation, we use the same symbol here as Eq.~\eqref{prob:dualregOT-old} and obtain that the dual problem is as follows:
\begin{equation}\label{prob:generalOT-dual-objective}
\min_{\lambda \in \br^n} \varphi(\lambda) \ \mydefn \ \{\langle\lambda, b\rangle + \max_{x \in \br^n} \{-f(x) - \langle A^\top\lambda, x\rangle\}\}, 
\end{equation}
and $\nabla \varphi(\lambda) = b - Ax(\lambda)$ where $x(\lambda) = \argmax_{x \in \br^n} \{-f(x) - \langle A^\top\lambda, x\rangle\}$. To analyze the complexity bound of the APDAMD algorithm, we start with the following result that establishes the smoothness of the dual objective function $\varphi$ with respect to $\ell_\infty$-norm.
\begin{lemma}\label{Lemma:dualOT-smoothness}
The dual objective $\varphi$ is smooth with respect to $\ell_\infty$-norm:
\begin{equation*}
\varphi(\lambda_1) - \varphi(\lambda_2) - \langle\nabla \varphi(\lambda_2), \lambda_1 - \lambda_2\rangle \ \leq \ \frac{\|A\|_{1\rightarrow 1}^2}{2\eta}\left\|\lambda_1 - 
\lambda_2\right\|_\infty^2. 
\end{equation*}
\end{lemma}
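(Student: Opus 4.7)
The plan is to recognize $\varphi$ as an affine term plus the Fenchel conjugate of $f$ composed with a linear map, and then to invoke the classical duality between strong convexity with respect to $\|\cdot\|_1$ and smoothness with respect to $\|\cdot\|_\infty$.

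First, let $f^*(y) \mydefn \sup_x \{\langle y, x\rangle - f(x)\}$ denote the Fenchel conjugate of $f$. Then $\max_x\{-f(x) - \langle A^\top\lambda, x\rangle\} = f^*(-A^\top\lambda)$, so $\varphi(\lambda) = \langle\lambda, b\rangle + f^*(-A^\top\lambda)$. Writing $g(\lambda) \mydefn f^*(-A^\top\lambda)$, the affine term contributes nothing to the quadratic remainder, so it suffices to bound
\begin{equation*}
g(\lambda_1) - g(\lambda_2) - \langle\nabla g(\lambda_2), \lambda_1 - \lambda_2\rangle.
\end{equation*}

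Second, I would invoke the standard conjugate-duality fact: if $f$ is $\eta$-strongly convex with respect to $\|\cdot\|_1$, then $f^*$ is $(1/\eta)$-smooth with respect to the dual norm $\|\cdot\|_\infty$, i.e.
\begin{equation*}
f^*(y_1) - f^*(y_2) - \langle\nabla f^*(y_2), y_1 - y_2\rangle \ \leq \ \frac{1}{2\eta}\|y_1 - y_2\|_\infty^2.
\end{equation*}
This classical result can be proved directly from the identity $\nabla f^*(y) = \argmax_x\{\langle y, x\rangle - f(x)\}$ together with the strong-convexity inequality for $f$ applied at $\nabla f^*(y_1)$ and $\nabla f^*(y_2)$. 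Applying it to $y_i = -A^\top\lambda_i$ and using the chain rule $\nabla g(\lambda) = -A \nabla f^*(-A^\top\lambda) = -A x(\lambda)$, I obtain
\begin{equation*}
g(\lambda_1) - g(\lambda_2) - \langle\nabla g(\lambda_2), \lambda_1 - \lambda_2\rangle \ \leq \ \frac{1}{2\eta}\|A^\top(\lambda_1 - \lambda_2)\|_\infty^2.
\end{equation*}

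Finally, I would bound $\|A^\top(\lambda_1-\lambda_2)\|_\infty$ via the operator-norm identity $\|A^\top\|_{\infty\rightarrow\infty} = \|A\|_{1\rightarrow 1}$, which holds because both quantities equal the maximum $\ell_1$-norm of the columns of $A$. Chaining this with the previous display yields the stated smoothness bound with constant $\|A\|_{1\rightarrow 1}^2/\eta$. Modulo the conjugate-duality lemma, the argument is essentially a one-line chain-rule computation; the only real obstacle, should one want a fully self-contained proof, is justifying differentiability of $f^*$ and the strong-convexity-to-smoothness transfer in this non-Euclidean norm setting, both of which are direct consequences of the $\eta$-strong convexity of $f$ with respect to $\|\cdot\|_1$.
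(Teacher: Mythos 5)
Your proof is correct and reaches the same constant via essentially the same ingredients as the paper's — the $\eta$-strong convexity of $f$ with respect to $\|\cdot\|_1$, the $\ell_1/\ell_\infty$ Hölder pairing, and the operator-norm bound for $A$ — but the packaging is genuinely different. The paper works inline: it proves directly that $\nabla\varphi$ is Lipschitz from $\|\cdot\|_\infty$ to $\|\cdot\|_1$ with constant $\|A\|_{1\to 1}^2/\eta$, by bounding $\|x(\lambda_1)-x(\lambda_2)\|_1$ through the first-order strong-convexity inequality applied at the two maximizers, and then integrates along the segment to get the quadratic remainder. You instead factor the argument through the conjugate-duality lemma (strong convexity of $f$ w.r.t.\ $\|\cdot\|_1$ $\Leftrightarrow$ $(1/\eta)$-smoothness of $f^*$ w.r.t.\ $\|\cdot\|_\infty$), observe that $\varphi(\lambda) = \langle\lambda, b\rangle + f^*(-A^\top\lambda)$, and push the linear map through at the end using $\|A^\top\|_{\infty\to\infty} = \|A\|_{1\to 1}$. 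Your version is more modular and reusable; the paper's is more self-contained and avoids having to cite or reprove the conjugate-smoothness transfer. Both implicitly exercise the same gradient-Lipschitz-plus-integration step — in your case it lives inside the ``classical result'' — so neither route is shorter once that lemma is unpacked. One small caveat worth stating if you wanted full rigor: differentiability of $f^*$ (equivalently uniqueness of $x(\lambda)$) follows from strict/strong convexity of $f$, which you correctly flag but defer.
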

\begin{proof}
First, we show that 
\begin{equation}\label{lemma-dualOT-gradient-bound}
\|\nabla\varphi(\lambda_1) - \nabla\varphi(\lambda_2)\|_1 \ \leq \ \frac{\|A\|_{1\rightarrow 1}^2}{\eta} \|\lambda_1 - \lambda_2\|_\infty. 
\end{equation}
Using the definition of $\nabla\varphi(\lambda)$ and $\|\cdot\|_{1 \rightarrow 1}$, we have
\begin{equation}\label{lemma-dualOT-gradient-bound-first}
\|\nabla\varphi(\lambda_1) - \nabla\varphi(\lambda_2)\|_1 \ = \ \|Ax(\lambda_1) - Ax(\lambda_2)\|_1 \ \leq \ \|A\|_{1\rightarrow 1}\|x(\lambda_1) - x(\lambda_2)\|_1. 
\end{equation}
Furthermore, it follows from the strong convexity of $f$ that 
\begin{eqnarray*}
\eta\|x(\lambda_1) - x(\lambda_2)\|_1^2 & \leq & \langle \nabla f(x(\lambda_1)) - \nabla f(x(\lambda_2)), x(\lambda_1) - x(\lambda_2)\rangle \ = \ \langle A^\top\lambda_2 - A^\top\lambda_1, x(\lambda_1) - x(\lambda_2)\rangle \\
& \leq & \|A\|_{1\rightarrow 1}\|x(\lambda_1) - x(\lambda_2)\|_1\|\lambda_1 - \lambda_2\|_\infty,
\end{eqnarray*}
which implies 
\begin{equation}\label{lemma-dualOT-gradient-bound-second}
\|x(\lambda_1) - x(\lambda_2)\|_1 \ \leq \ \frac{\|A\|_{1\rightarrow 1}}{\eta}\|\lambda_1 - \lambda_2\|_\infty. 
\end{equation}
Putting Eq.~\eqref{lemma-dualOT-gradient-bound-first} and Eq.~\eqref{lemma-dualOT-gradient-bound-second} yields Eq.~\eqref{lemma-dualOT-gradient-bound}. Then we have
\begin{eqnarray*}
\varphi(\lambda_1) - \varphi(\lambda_2) - \langle\nabla\varphi(\lambda_2), \lambda_1 - \lambda_2\rangle & = & \int_0^1 \langle\nabla\varphi(t\lambda_1+(1-t)\lambda_2) - \nabla\varphi(\lambda_2), \lambda_1 - \lambda_2\rangle \; dt \\
& & \hspace*{-6em} \leq \ \left(\int_0^1 \|\nabla\varphi(t\lambda_1+(1-t)\lambda_2) - \nabla\varphi(\lambda_2)\|_1 \; dt\right) \|\lambda_1 - \lambda_2\|_\infty. 
\end{eqnarray*}
Using Eq.~\eqref{lemma-dualOT-gradient-bound}, we have
\begin{equation*}
\varphi(\lambda_1) - \varphi(\lambda_2) - \langle\nabla\varphi(\lambda_2), \lambda_1 - \lambda_2\rangle \leq \left(\int_0^1 t \ dt\right)\frac{\|A\|_{1\rightarrow 1}^2}{\eta}\|\lambda_1 - \lambda_2\|_\infty^2 = \frac{\|A\|_{1\rightarrow 1}^2}{2\eta}\|\lambda_1 - \lambda_2\|_\infty^2. 
\end{equation*}
This completes the proof. 
\end{proof}
\begin{remark}
It is important to note that the objective function in Eq.~\eqref{prob:dualregOT-old} is the sum of exponents and its gradient is not Lipschitz yet. However, this does not contradict Lemma~\ref{Lemma:dualOT-smoothness}. Indeed, the entropy regularization is only strongly convex on the probability simplex while we have not considered the corresponding linear constraint $\one_n^\top X\one_n = 1$ for deriving the dual function before. To derive the smooth dual function, we shall consider the following minimization problem,  
\begin{equation*}
\min_{\one_n^\top X\one_n = 1} \LCal(X, \alpha, \beta). 
\end{equation*}
Using the same argument as before, we conclude that the dual function $\varphi$ is defined by 
\begin{equation}\label{prob:dualregOT-APDAMD}
\varphi(\alpha, \beta) \ = \ \eta\log\left(\sum_{i,j=1}^n e^{- \frac{C_{ij} - \alpha_i - \beta_j}{\eta}-1}\right) - \langle\alpha, r\rangle - \langle \beta, c\rangle - 1. 
\end{equation}
This function has the form of the logarithm of sum of exponents and hence has Lipschitz continuous gradient. Our APDAMD algorithm is developed for solving the regularized OT problem in Eq.~\eqref{prob:regOT} by using the function $\varphi$ in Eq.~\eqref{prob:dualregOT-APDAMD}. 
\end{remark}
To facilitate the ensuing discussion, we assume that the dual problem in Eq.~\eqref{prob:generalOT} has a solution $\lambda^*\in\br^n$. The Bregman divergence $B_\phi: \br^n \times \br^n \rightarrow \left[0, +\infty\right]$ is defined by 
\begin{equation*}
B_\phi(z, z') \ \mydefn \ \phi(z) - \phi(z') - \left\langle \phi(z'), z - z'\right\rangle, \quad \forall z, z' \in \br^n. 
\end{equation*}
The mirror mapping $\phi$ is $(1/\delta)$-strongly convex and 1-smooth on $\br^n$ with respect to $\ell_\infty$-norm. That is to say,  
\begin{equation}\label{eq:inf_const}
\frac{1}{2\delta}\|z - z'\|_\infty^2 \leq \phi(z) - \phi(z') - \langle\nabla\phi(z'), z - z'\rangle \leq \frac{1}{2}\|z - z'\|_\infty^2. 
\end{equation} 
For example, we can choose $\phi(\lambda) = \frac{1}{2n}\|\lambda\|^2$ and $B_\phi(\lambda', \lambda) = \frac{1}{2n}\|\lambda'-\lambda\|^2$ in the APDAMD algorithm where $\delta=n$. As such, $\delta > 0$ is a function of $n$ in general and it will appear in the complexity bound of the APDAMD algorithm for approximating the OT problem (cf. Theorem~\ref{Theorem:ApproxOT-APDAMD-Total-Complexity}). It is worth noting that our algorithm uses a regularizer that acts only in the dual and our complexity bound is the best existing one among this group of algorithms~\citep{Dvurechensky-2018-Computational}. A very recent work of~\citet{Jambulapati-2019-Direct} showed that the complexity bound can be improved to $\bigOtil(n^2\varepsilon^{-1})$ using a more advanced area-convex mirror mapping. 

\subsection{Properties of the APDAMD algorithm}
In this section, we present several important properties of Algorithm~\ref{Algorithm:APDAMD} that can be used later for regularized OT problems. First, we prove the following result regarding the number of line search iterations in Algorithm~\ref{Algorithm:APDAMD}: 
\begin{lemma}\label{Lemma:Line-search-iteration}
The number of line search iterations in Algorithm~\ref{Algorithm:APDAMD} is finite. Furthermore, the total number of gradient oracle calls after the $t$-th iteration is bounded as
\begin{equation}\label{inequality-line-search-iteration-main}
N_t \ \leq \ 4t + 4 + \frac{2\log(\|A\|_{1\rightarrow 1}^2/(2\eta)) - 2\log(L^0)}{\log 2}. 
\end{equation}
\end{lemma}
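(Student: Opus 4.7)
The plan is to exploit the $\ell_\infty$-smoothness of $\varphi$ established in Lemma~\ref{Lemma:dualOT-smoothness} to show that the inner line-search terminates, then track the adaptive constant $L^t$ through a telescoping argument.

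First I will show finiteness. By Lemma~\ref{Lemma:dualOT-smoothness}, for any $\mu,\lambda \in \br^n$ we have
$$\varphi(\lambda) - \varphi(\mu) - \langle\nabla\varphi(\mu),\lambda-\mu\rangle \ \leq \ \frac{\|A\|_{1\rightarrow 1}^2}{2\eta}\|\lambda-\mu\|_\infty^2.$$
Consequently, writing $L \mydefn \|A\|_{1\rightarrow 1}^2/\eta$, the acceptance inequality
$\varphi(\lambda^{t+1})-\varphi(\mu^{t+1})-\langle\nabla\varphi(\mu^{t+1}),\lambda^{t+1}-\mu^{t+1}\rangle \leq (M^t/2)\|\lambda^{t+1}-\mu^{t+1}\|_\infty^2$
holds automatically as soon as $M^t \geq L$. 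Because each inner iteration doubles $M^t$ starting from $L^t/2$, after at most $\lceil\log_2(2L/L^t)\rceil$ doublings the threshold is exceeded, so the inner loop terminates.

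Next I will set up a telescoping identity. Let $j_t \geq 1$ be the number of inner iterations actually executed at outer step $t$. Since $M^t$ begins the loop at $L^t/2$ and is doubled at the start of every inner pass, at termination $M^t = 2^{j_t-1}L^t$, and Algorithm~\ref{Algorithm:APDAMD} then sets $L^{t+1} = M^t/2 = 2^{j_t-2}L^t$. Taking $\log_2$ yields $j_t = 2 + \log_2(L^{t+1}/L^t)$, and summing over $k=0,\ldots,t-1$ telescopes to
$$\sum_{k=0}^{t-1} j_k \ = \ 2t + \log_2(L^t/L^0).$$
Each inner iteration uses at most two gradient oracle calls (one at $\mu^{k+1}$, whose value changes with $M^t$ through $\alpha^{k+1}$, to compute the mirror step and test the inequality, and one at the updated $\lambda^{k+1}$), plus a fixed initialization overhead, so $N_t \leq 2\sum_{k=0}^{t-1} j_k + C$ for some small constant $C$ (which eventually produces the ``$+4$'' in the stated bound).

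Finally I need to upper-bound $L^t/L^0$. The key observation is that the inner loop stops at the \emph{first} power-of-two multiple of $L^t$ that crosses the smoothness threshold, so at termination $M^t < 2L$ whenever $L^t \leq L$, giving $L^{t+1} \leq L$. In the complementary case $L^t > L$ the first trial value $M^t = L^t$ already satisfies the condition, so $j_t = 1$ and $L^{t+1} = L^t/2$, which still enforces $L^{t+1} \leq \max(L,L^0)$. An easy induction then gives $L^t \leq \max(L^0, L)$ for all $t$, hence $\log_2(L^t/L^0) \leq \log_2(L/L^0) = \log_2(\|A\|_{1\rightarrow 1}^2/(2\eta)) - \log_2(L^0) + 1$. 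Substituting back into the telescoping estimate and absorbing constants yields Eq.~\eqref{inequality-line-search-iteration-main}.

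The main obstacle is purely bookkeeping: one must keep the smoothness constant $\|A\|_{1\rightarrow 1}^2/(2\eta)$ (the coefficient of $\|\lambda-\mu\|_\infty^2$) distinct from the threshold $L$ at which the line search is guaranteed to accept, and one must verify that the degenerate case $j_t = 1$ does not break the induction controlling $L^t$. Once those are handled, the bound on $N_t$ is a direct telescoping consequence of the doubling rule.
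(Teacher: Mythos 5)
Your argument is correct and follows essentially the same route as the paper: show termination via the $\ell_\infty$-smoothness bound of Lemma~\ref{Lemma:dualOT-smoothness}, telescope the per-outer-iteration doubling counts, and control the adaptive estimate by the smoothness constant $\|A\|_{1\rightarrow 1}^2/\eta$. Your induction giving $L^t \leq \max(L^0, L)$ is in fact slightly more careful than the paper's terse claim $M^j \leq \eta^{-1}\|A\|_{1\rightarrow 1}^2$ (which the doubling argument only supports up to a factor of two), but that discrepancy is harmless and disappears into the additive constant.
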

\begin{proof}
First, we observe that multiplying $M^t$ by two will not stop until the line search stopping criterion is satisfied. Therefore, we have $M^t \geq (\|A\|_{1\rightarrow 1}^2)/(2\eta)$. Using Lemma~\ref{Lemma:dualOT-smoothness}, we obtain that the number of line search iterations in the line search strategy is finite. Letting $i_j$ denote the total number of multiplication at the $j$-th iteration, we have 
\begin{equation*}
i_0 \ \leq \ 1 + \frac{\log(M^0/L^0)}{\log 2}, \qquad i_j \ \leq \ 2 + \frac{\log(M^j/M^{j-1})}{\log 2}. 
\end{equation*}
We claim that $M^j \leq \eta^{-1}\|A\|_{1\rightarrow 1}^2$ holds true. Otherwise, the line search stopping criterion is satisfied with $M^j/2$ since $M^j/2 \geq \eta^{-1}\|A\|_{1\rightarrow 1}^2$. Therefore, the total number of line search is bounded by
\begin{eqnarray*}
\sum_{j=0}^t i_j & \leq & 1 + \frac{\log(M^0/L^0)}{\log 2} + \sum_{j=1}^t \left(2 + \frac{\log(M^j/M^{j-1})}{\log 2}\right) \ \leq \ 2t + 1 + \frac{\log(M^t) - \log(L^0)}{\log 2} \\
& \leq & 2t + 1 + \frac{\log(\|A\|_{1\rightarrow 1}^2/2\eta) - \log(L^0)}{\log 2}. 
\end{eqnarray*}
The desired result follows since each line search contains two gradient oracle calls.
\end{proof}
The next lemma presents a property of the dual objective function in Algorithm~\ref{Algorithm:APDAMD}.
\begin{lemma}\label{Lemma:estimate-sequence}
For each iteration $t$ of Algorithm~\ref{Algorithm:APDAMD} and any $z \in \br^n$, we have
\begin{equation}\label{inequality-estimate-sequence}
\bar{\alpha}^t\varphi(\lambda^t) \ \leq \ \sum_{j=0}^t (\alpha^j (\varphi(\mu^j) + \langle\nabla\varphi(\mu^j), z - \mu^j\rangle)) + \|z\|_\infty^2.  
\end{equation}
\end{lemma}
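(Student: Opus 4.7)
The plan is to proceed by induction on $t$ with a strengthened inductive hypothesis that carries along a Bregman-divergence tail, namely
\[
\bar{\alpha}^t\varphi(\lambda^t) + B_\phi(z, z^t) \ \leq \ \sum_{j=0}^t \alpha^j\bigl(\varphi(\mu^j) + \langle\nabla\varphi(\mu^j), z - \mu^j\rangle\bigr) + B_\phi(z, z^0),
\]
which telescopes cleanly through the mirror-descent update. The base case $t=0$ is immediate since $\bar{\alpha}^0=\alpha^0=0$, and at the very end one drops $B_\phi(z,z^{t+1})\ge 0$ and uses the $1$-smoothness half of Eq.~\eqref{eq:inf_const} together with $z^0=0$ to bound $B_\phi(z,z^0)\le\tfrac12\|z\|_\infty^2\le\|z\|_\infty^2$, recovering the stated inequality.

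For the inductive step, the first ingredient is the line-search stopping criterion applied at iteration $t$. Combining it with the identity $\lambda^{t+1}-\mu^{t+1}=\tfrac{\alpha^{t+1}}{\bar{\alpha}^{t+1}}(z^{t+1}-z^t)$ (which follows by subtracting the definitions of $\lambda^{t+1}$ and $\mu^{t+1}$) and the stepsize relation $\delta M^t(\alpha^{t+1})^2=\bar{\alpha}^{t+1}$ (obtained by solving the quadratic that defines $\alpha^{t+1}$) yields
\[
\bar{\alpha}^{t+1}\varphi(\lambda^{t+1}) \ \le\ \bar{\alpha}^{t+1}\varphi(\mu^{t+1})+\alpha^{t+1}\langle\nabla\varphi(\mu^{t+1}),z^{t+1}-z^t\rangle+\tfrac{1}{2\delta}\|z^{t+1}-z^t\|_\infty^2.
\]
I then split $\bar{\alpha}^{t+1}=\alpha^{t+1}+\bar{\alpha}^t$, decompose the inner product as $\langle\cdot,z^{t+1}-\mu^{t+1}\rangle+\langle\cdot,\mu^{t+1}-z^t\rangle$, and use the purely algebraic identity $\alpha^{t+1}(\mu^{t+1}-z^t)=\bar{\alpha}^t(\lambda^t-\mu^{t+1})$ (immediate from the definition of $\mu^{t+1}$) to regroup. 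Convexity of $\varphi$ collapses the $\bar{\alpha}^t$ group into $\bar{\alpha}^t\varphi(\lambda^t)$.

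The second ingredient is the first-order optimality of the mirror step, $\alpha^{t+1}\nabla\varphi(\mu^{t+1})=\nabla\phi(z^t)-\nabla\phi(z^{t+1})$, combined with the standard three-point identity
\[
\langle\nabla\phi(z^{t+1})-\nabla\phi(z^t),\,z-z^{t+1}\rangle \ =\ B_\phi(z,z^t)-B_\phi(z,z^{t+1})-B_\phi(z^{t+1},z^t).
\]
This lets me rewrite $\alpha^{t+1}\langle\nabla\varphi(\mu^{t+1}),z^{t+1}-\mu^{t+1}\rangle$ as $\alpha^{t+1}\langle\nabla\varphi(\mu^{t+1}),z-\mu^{t+1}\rangle$ plus the telescoping Bregman terms minus $B_\phi(z^{t+1},z^t)$. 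The strong-convexity half of Eq.~\eqref{eq:inf_const} then gives $B_\phi(z^{t+1},z^t)\ge\tfrac{1}{2\delta}\|z^{t+1}-z^t\|_\infty^2$, which exactly cancels the residual quadratic from the line-search inequality. Putting the pieces together produces the one-step recursion $\bar{\alpha}^{t+1}\varphi(\lambda^{t+1})+B_\phi(z,z^{t+1})\le \alpha^{t+1}[\varphi(\mu^{t+1})+\langle\nabla\varphi(\mu^{t+1}),z-\mu^{t+1}\rangle]+\bar{\alpha}^t\varphi(\lambda^t)+B_\phi(z,z^t)$, and summing this with the inductive hypothesis closes the induction.

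The main obstacle I anticipate is the bookkeeping in Step~3: matching the extrapolation identity $\alpha^{t+1}(\mu^{t+1}-z^t)=\bar{\alpha}^t(\lambda^t-\mu^{t+1})$ with the line-search term and the three-point identity so that all quadratic terms $\|z^{t+1}-z^t\|_\infty^2$ cancel on the nose. The calibration is delicate because it relies on the adaptive stepsize $\alpha^{t+1}$ satisfying exactly $\delta M^t(\alpha^{t+1})^2=\bar{\alpha}^{t+1}$; any off-by-constant error in this identity would break the telescoping. Once this cancellation is verified, the remainder of the argument is routine convexity plus the smoothness bound on $B_\phi(z,z^0)$.
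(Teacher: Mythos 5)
Your proof is correct and follows essentially the same route as the paper: the same one-step recursion $\bar{\alpha}^{t+1}\varphi(\lambda^{t+1}) + B_\phi(z, z^{t+1}) \leq \alpha^{t+1}\bigl(\varphi(\mu^{t+1}) + \langle\nabla\varphi(\mu^{t+1}), z - \mu^{t+1}\rangle\bigr) + \bar{\alpha}^t\varphi(\lambda^t) + B_\phi(z, z^t)$ is obtained from the same ingredients (line-search criterion, the identities $\lambda^{t+1}-\mu^{t+1}=(\alpha^{t+1}/\bar{\alpha}^{t+1})(z^{t+1}-z^t)$, $\delta M^t(\alpha^{t+1})^2=\bar{\alpha}^{t+1}$, and $\alpha^{t+1}(\mu^{t+1}-z^t)=\bar{\alpha}^t(\lambda^t-\mu^{t+1})$, the three-point Bregman identity, strong convexity of $\phi$, and convexity of $\varphi$), merely applied in a slightly different order and packaged as an induction rather than a telescoping sum. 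The only cosmetic difference is that you invoke the unconstrained first-order condition for the mirror step as an equality where the paper writes the equivalent variational inequality, and both endings drop the nonnegative Bregman tail and use $1$-smoothness of $\phi$ with $z^0=0$ to get $\|z\|_\infty^2$.
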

\begin{proof}
First, we claim that it holds for any $z \in \br^n$:  
\begin{equation} \label{Inequality:main-first}
\alpha^{t+1}\langle\nabla\varphi(\mu^{t+1}), z^t - z\rangle \ \leq \ \bar{\alpha}^{t+1}(\varphi(\mu^{t+1}) - \varphi(\lambda^{t+1})) + B_\phi(z, z^t) - B_\phi(z, z^{t+1}). 
\end{equation}
Indeed, the optimality condition in mirror descent implies that, for any $z \in \br^n$, we have
\begin{equation}\label{Inequality:opt-MD}
\langle\nabla \varphi(\mu^{t+1}) + (\nabla \phi(z^{t+1}) - \nabla \phi(z^t))/\alpha^{t+1}, z - z^{t+1}\rangle \ \geq \ 0. 
\end{equation}
Recalling that the definition of Bregman divergence implies that $B_\phi(z, z^t) - B_\phi(z, z^{t+1}) - B_\phi(z^{t+1}, z^t) = \langle\nabla \phi(z^{t+1}) - \nabla \phi(z^t), z - z^{t+1}\rangle$, we have
\begin{eqnarray}\label{Inequality:MD-main}
& & \alpha^{t+1}\langle\nabla\varphi(\mu^{t+1}), z^t - z\rangle \\
& = & \alpha^{t+1}\langle\nabla\varphi(\mu^{t+1}), z^t - z^{t+1}\rangle + \alpha^{t+1}\langle\nabla\varphi(\mu^{t+1}), z^{t+1} - z\rangle \nonumber \\
& \overset{\text{Eq.~\eqref{Inequality:opt-MD}}}{\leq} & \alpha^{t+1}\langle\nabla \varphi(\mu^{t+1}), z^t - z^{t+1}\rangle + \langle\nabla\phi(z^{t+1}) - \nabla \phi(z^t), z - z^{t+1}\rangle \nonumber \\
& = & \alpha^{t+1}\langle\nabla\varphi(\mu^{t+1}), z^t - z^{t+1}\rangle + B_\phi(z, z^t) - B_\phi(z, z^{t+1}) - B_\phi(z^{t+1}, z^t) \nonumber \\
& \overset{\text{Eq.~\eqref{eq:inf_const}}}{\leq} & \alpha^{t+1}\langle\nabla\varphi(\mu^{t+1}), z^t - z^{t+1}\rangle + B_\phi(z, z^t) - B_\phi(z, z^{t+1}) - \|z^{t+1} - z^t\|_\infty^2/(2\delta), \nonumber
\end{eqnarray}
Furthermore, the update formulas of $\mu^{t+1}$, $\lambda^{t+1}$, $\alpha^{t+1}$ and $\bar{\alpha}^{t+1}$ imply that  
\begin{equation*}
\lambda^{t+1} - \mu^{t+1} = (\alpha^{t+1}/\bar{\alpha}^{t+1})(z^{t+1} - z^t), \qquad \delta M^t(\alpha^{t+1})^2 = \bar{\alpha}^{t+1}.
\end{equation*}
Putting these pieces together yields that 
\begin{equation*}
\alpha^{t+1}\langle\nabla\varphi(\mu^{t+1}), z^t - z^{t+1}\rangle \ = \ \bar{\alpha}^{t+1}\langle\nabla \varphi(\mu^{t+1}), \mu^{t+1} - \lambda^{t+1}\rangle.
\end{equation*}
and 
\begin{equation*}
\|z^{t+1} - z^t\|_\infty^2 \ = \ (\bar{\alpha}^{t+1}/\alpha^{t+1})^2\|\mu^{t+1} - \lambda^{t+1}\|_\infty^2 \ = \ \delta M^t\bar{\alpha}^{t+1}\|\mu^{t+1} - \lambda^{t+1}\|_\infty^2. 
\end{equation*}
Putting these pieces together with Eq.~\eqref{Inequality:MD-main} yields that  
\begin{eqnarray*}
& & \alpha^{t+1}\langle\nabla\varphi(\mu^{t+1}), z^t - z\rangle \\
& \leq & \bar{\alpha}^{t+1}\langle\nabla\varphi(\mu^{t+1}), \mu^{t+1} - \lambda^{t+1}\rangle + B_\phi(z, z^t) - B_\phi(z, z^{t+1}) - (\bar{\alpha}^{t+1} M^t/2)\|\mu^{t+1} - \lambda^{t+1}\|_\infty^2 \\
& = & \bar{\alpha}^{t+1}(\langle\nabla\varphi(\mu^{t+1}), \mu^{t+1} - \lambda^{t+1}\rangle - (M^t/2)\|\mu^{t+1} - \lambda^{t+1}\|_\infty^2) + B_\phi(z, z^t) - B_\phi(z, z^{t+1}) \\
& \leq & \bar{\alpha}^{t+1}(\varphi(\mu^{t+1}) - \varphi(\lambda^{t+1})) + B_\phi(z, z^t) - B_\phi(z, z^{t+1}), \nonumber
\end{eqnarray*}
where the last inequality comes from the stopping criterion in the 
line search. Therefore, we conclude that the desired Eq.~\eqref{Inequality:main-first} holds true. 

The next step is to bound the iterative objective gap, i.e., for $z \in \br^n$, 
\begin{equation}\label{Inequality:main-second}\small
\bar{\alpha}^{t+1}\varphi(\lambda^{t+1}) - \bar{\alpha}^t\varphi(\lambda^t) \leq \alpha^{t+1}(\varphi(\mu^{t+1}) + \langle\nabla\varphi(\mu^{t+1}), z - \mu^{t+1}\rangle) + B_\phi(z, z^t) - B_\phi(z, z^{t+1}). 
\end{equation} 
Combining $\bar{\alpha}^{t+1} = \bar{\alpha}^t + \alpha^{t+1}$ and the update formula of $\mu^{t+1}$ yields that 
\begin{eqnarray*}
\alpha^{t+1}(\mu^{t+1} - z^t) & = & (\bar{\alpha}^{t+1} - \bar{\alpha}^t)\mu^{t+1} - \alpha^{t+1} z^t \ = \ \alpha^{t+1}z^t + \bar{\alpha}^t\lambda^t - \bar{\alpha}^t\mu^{t+1} - \alpha^{t+1} z^t \\
& = & \bar{\alpha}^t(\lambda^t - \mu^{t+1}). \nonumber
\end{eqnarray*}
This together with the convexity of $\varphi$ implies that 
\begin{eqnarray*}
\alpha^{t+1}\langle\nabla \varphi(\mu^{t+1}), \mu^{t+1} - z\rangle
& = & \alpha^{t+1}\langle \nabla \varphi(\mu^{t+1}), \mu^{t+1} - z^t\rangle + \alpha^{t+1}\langle \nabla \varphi(\mu^{t+1}), z^t - z\rangle \\
& = & \bar{\alpha}^t\langle \nabla\varphi(\mu^{t+1}), \lambda^t - \mu^{t+1}\rangle + \alpha^{t+1}\langle \nabla\varphi(\mu^{t+1}), z^t - z\rangle \\
& \leq & \bar{\alpha}^t(\varphi(\lambda^t) - \varphi(\mu^{t+1})) + \alpha^{t+1}\langle \nabla\varphi(\mu^{t+1}), z^t - z\rangle. 
\end{eqnarray*}
Furthermore, we derive from Eq.~\eqref{Inequality:main-first} and $\bar{\alpha}^{t+1} = \bar{\alpha}^t + \alpha^{t+1}$ that 
\begin{eqnarray*}
& & \bar{\alpha}^t(\varphi(\lambda^t) - \varphi(\mu^{t+1})) + \alpha^{t+1}\langle \nabla\varphi(\mu^{t+1}), z^t - z\rangle \\
& \overset{~\eqref{Inequality:main-first}}{\leq} & \bar{\alpha}^t(\varphi(\lambda^t) - \varphi(\mu^{t+1})) + \bar{\alpha}^{t+1}(\varphi(\mu^{t+1}) - \varphi(\lambda^{t+1})) + B_\phi(z, z^t) - B_\phi(z, z^{t+1}) \\
& = & \bar{\alpha}^t\varphi(\lambda^t) - \bar{\alpha}^{t+1}\varphi(\lambda^{t+1}) + \alpha^{t+1}\varphi(\mu^{t+1}) + B_\phi(z, z^t) - B_\phi(z, z^{t+1}). 
\end{eqnarray*}
Putting these pieces together yields that Eq.~\eqref{Inequality:main-second} holds true. Summing up Eq.~\eqref{Inequality:main-second} over $t = 0, 1, \ldots, N-1$ and using $B_\phi(z, z^N) \geq 0$, we have
\begin{equation*}
\bar{\alpha}^N\varphi(\lambda^N) - \bar{\alpha}^0\varphi(\lambda^0) \ \leq \  \sum_{t=0}^{N-1} (\alpha^{t+1}(\varphi(\mu^{t+1}) + \langle\nabla\varphi(\mu^{t+1}), z - \mu^{t+1}\rangle)) + B_\phi(z, z^0). 
\end{equation*}
Since $\alpha^0 = \bar{\alpha}^0 = 0$ and $\phi$ is 1-smooth with respect to $\ell_\infty$-norm, we conclude that 
\begin{eqnarray*}
\bar{\alpha}^N\varphi(\lambda^N) & \leq & \sum_{t=0}^N (\alpha^t (\varphi(\mu^t) + \langle\nabla\varphi(\mu^t), z - \mu^t\rangle)) + B_\phi(z, z^0) \\
& \leq & \sum_{t=0}^N (\alpha^t (\varphi(\mu^t) + \langle \nabla\varphi(\mu^t), z - \mu^t\rangle)) + \|z-z^0\|_\infty^2 \\
& \overset{z^0=0}{=} & \sum_{t=0}^N (\alpha^t(\varphi(\mu^t) + \langle \nabla\varphi(\mu^t), z - \mu^t\rangle)) + \|z\|_\infty^2.
\end{eqnarray*}
which implies that Eq.~\eqref{inequality-estimate-sequence} holds true.
\end{proof}
The final lemma provides us with a key lower bound for the accumulating parameter.
\begin{lemma}\label{Lemma:objective-coefficient}
For each iteration $t$ of Algorithm~\ref{Algorithm:APDAMD}, we have $\bar{\alpha}^t \geq \eta(t+1)^2/(8\delta\|A\|_{1\rightarrow 1}^2)$. 
\end{lemma}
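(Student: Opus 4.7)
The plan is to exploit the algorithmic identity hidden inside the stepsize rule together with the uniform upper bound on $M^t$ established in the proof of Lemma~\ref{Lemma:Line-search-iteration}, and then to telescope a one-step inequality on $\sqrt{\bar\alpha^t}$. First I would observe that the formula $\alpha^{t+1}=(1+\sqrt{1+4\delta M^t\bar\alpha^t})/(2\delta M^t)$ is precisely the positive root of $\delta M^t(\alpha^{t+1})^2-\alpha^{t+1}-\bar\alpha^t=0$, so combined with $\bar\alpha^{t+1}=\bar\alpha^t+\alpha^{t+1}$ it yields the key identity $\delta M^t(\alpha^{t+1})^2=\bar\alpha^{t+1}$, i.e.\ $\alpha^{t+1}=\sqrt{\bar\alpha^{t+1}/(\delta M^t)}$. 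This identity already appeared inside the proof of Lemma~\ref{Lemma:estimate-sequence}, so it comes for free.

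Next I would invoke the line-search bound $M^t\le\|A\|_{1\rightarrow 1}^2/\eta$ obtained inside the proof of Lemma~\ref{Lemma:Line-search-iteration}, which turns the identity into $\alpha^{t+1}\ge \sqrt{\eta\bar\alpha^{t+1}/(\delta\|A\|_{1\rightarrow 1}^2)}$. Factoring $\alpha^{t+1}=\bar\alpha^{t+1}-\bar\alpha^t=(\sqrt{\bar\alpha^{t+1}}-\sqrt{\bar\alpha^t})(\sqrt{\bar\alpha^{t+1}}+\sqrt{\bar\alpha^t})$ and bounding the second factor crudely by $2\sqrt{\bar\alpha^{t+1}}$ gives the telescoping inequality
\[
\sqrt{\bar\alpha^{t+1}}-\sqrt{\bar\alpha^t}\ \ge\ \frac{1}{2\sqrt{\delta M^t}}\ \ge\ \frac{\sqrt{\eta}}{2\sqrt{\delta}\,\|A\|_{1\rightarrow 1}}.
\]

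For the base case I would use $\bar\alpha^0=0$: the stepsize rule then gives $\alpha^1=1/(\delta M^0)$, hence $\bar\alpha^1=1/(\delta M^0)\ge\eta/(\delta\|A\|_{1\rightarrow 1}^2)$ and $\sqrt{\bar\alpha^1}\ge 2\cdot\sqrt{\eta}/(2\sqrt{\delta}\|A\|_{1\rightarrow 1})$. Summing the displayed inequality from iteration $1$ up to iteration $t-1$ and adding the base case yields $\sqrt{\bar\alpha^t}\ge(t+1)\sqrt{\eta}/(2\sqrt{\delta}\|A\|_{1\rightarrow 1})$, and squaring gives $\bar\alpha^t\ge\eta(t+1)^2/(4\delta\|A\|_{1\rightarrow 1}^2)$, which is even stronger than the claim; the $8$ in the denominator is just a conservative form of the actual $4$.

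I do not expect a real obstacle: every step is a one-line algebraic manipulation once the algorithmic identity $\delta M^t(\alpha^{t+1})^2=\bar\alpha^{t+1}$ and the line-search bound $M^t\le\|A\|_{1\rightarrow 1}^2/\eta$ are in hand. The only mildly delicate point is that one should not apply the bound at $t=0$, where $\bar\alpha^0=0$ violates it; the induction really starts from $t=1$, where the base case is immediate from plugging $\bar\alpha^0=0$ into the stepsize rule.
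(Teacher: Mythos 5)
Your route is correct and genuinely different from the paper's. The paper establishes the one-step inequality $\bar\alpha^{t+1}\ge\bar\alpha^t+\tfrac{1}{2\delta M^t}+\sqrt{\bar\alpha^t/(\delta M^t)}$, plugs in the bound on $M^t$, and then \emph{guesses and verifies} the closed form $\eta(t+1)^2/(8\delta\|A\|_{1\rightarrow 1}^2)$ by induction. You instead recognize that the stepsize rule encodes the exact identity $\delta M^t(\alpha^{t+1})^2=\bar\alpha^{t+1}$ (which indeed appears in the proof of Lemma~\ref{Lemma:estimate-sequence}), factor $\alpha^{t+1}=(\sqrt{\bar\alpha^{t+1}}-\sqrt{\bar\alpha^t})(\sqrt{\bar\alpha^{t+1}}+\sqrt{\bar\alpha^t})$, and obtain a uniform lower bound $\sqrt{\bar\alpha^{t+1}}-\sqrt{\bar\alpha^t}\ge 1/(2\sqrt{\delta M^t})$ that telescopes directly. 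That avoids having to conjecture the quadratic form in advance and is arguably cleaner; both proofs share the same algorithmic inputs (the stepsize identity and the line-search bound on $M^t$).

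One small caveat on your claim that the true constant is $4$ rather than $8$. You take the bound $M^t\le\|A\|_{1\rightarrow 1}^2/\eta$ from the sentence in the proof of Lemma~\ref{Lemma:Line-search-iteration}, but that sentence contains a typo (its own justification, ``otherwise $M^j/2\ge\eta^{-1}\|A\|_{1\rightarrow 1}^2$,'' proves $M^j<2\eta^{-1}\|A\|_{1\rightarrow 1}^2$, not $M^j\le\eta^{-1}\|A\|_{1\rightarrow 1}^2$). The correct uniform bound, and the one the paper actually invokes inside the proof of this lemma, is $M^t\le 2\eta^{-1}\|A\|_{1\rightarrow 1}^2$: the stopping criterion is guaranteed once $M^t\ge\|A\|_{1\rightarrow 1}^2/\eta$ by Lemma~\ref{Lemma:dualOT-smoothness}, so after the last doubling $M^t$ can be as large as (just under) $2\|A\|_{1\rightarrow 1}^2/\eta$. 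Substituting $M^t\le 2\eta^{-1}\|A\|_{1\rightarrow 1}^2$ into your per-step bound gives $\sqrt{\bar\alpha^{t+1}}-\sqrt{\bar\alpha^t}\ge\sqrt{\eta}/(2\sqrt{2\delta}\|A\|_{1\rightarrow 1})$, and together with the base case $\sqrt{\bar\alpha^1}\ge\sqrt{\eta}/(\sqrt{2\delta}\|A\|_{1\rightarrow 1})$ your telescoping reproduces exactly $\bar\alpha^t\ge\eta(t+1)^2/(8\delta\|A\|_{1\rightarrow 1}^2)$, i.e.\ the stated constant $8$, not an improved $4$.
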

\begin{proof}
For $t=1$, we have $\bar{\alpha}^1 = \alpha^1 = 1/(\delta M^1) \geq \eta/(2\delta\|A\|_{1\rightarrow 1}^2)$ where $M^1 \leq 2\eta^{-1}\|A\|_1^2$ has been 
proven in Lemma~\ref{Lemma:Line-search-iteration}. Thus, the desired result holds true when $t=1$. Then we proceed to prove that it holds true for $t \geq 1$ using the induction. Indeed, we have 
\begin{eqnarray*}
\bar{\alpha}^{t+1} & = & \bar{\alpha}^t + \alpha^{t+1} \ = \ \bar{\alpha}^t + \frac{1 + \sqrt{1 + 4\delta M^t\bar{\alpha}^t}}{2\delta M^t} \\
& = & \bar{\alpha}^t + \frac{1}{2\delta M^t} + \sqrt{\frac{1}{4(\delta M^t)^2} + \frac{\bar{\alpha}^t}{\delta M^t}} \\
& \geq & \bar{\alpha}^t + \frac{1}{2\delta M^t} + \sqrt{\frac{\bar{\alpha}^t}{\delta M^t}} \\
& \geq & \bar{\alpha}^t + \frac{\eta}{4\delta\|A\|_{1\rightarrow 1}^2} + \sqrt{\frac{\eta\bar{\alpha}^t}{2\delta\|A\|_{1\rightarrow 1}^2}}, 
\end{eqnarray*}
where the last inequality comes from $M^t \leq 2\eta^{-1}\|A\|_{1\rightarrow 1}^2$ (cf. Lemma~\ref{Lemma:Line-search-iteration}). We assume the desired result holds true for $t = k_0$. Then, we find that 
\begin{eqnarray*}
\bar{\alpha}^{k_0+1} & \geq & \frac{\eta(k_0+1)^2}{8\delta\|A\|_{1\rightarrow 1}^2} + \frac{\eta}{4\delta\|A\|_{1\rightarrow 1}^2} + \sqrt{\frac{\eta^2(k_0+1)^2}{16\delta^2\|A\|_{1\rightarrow 1}^4}} \ = \ \frac{\eta((k_0+1)^2 + 2 + 2(k_0+1))}{8\delta\|A\|_{1\rightarrow 1}^2} \\
& \geq & \frac{\eta(k_0+2)^2}{8\delta\|A\|_{1\rightarrow 1}^2}. 
\end{eqnarray*}
This completes the proof. 
\end{proof}

\subsection{Complexity analysis for the APDAMD algorithm}\label{Sec:complex_APDAMD}
With the key properties of Algorithm~\ref{Algorithm:APDAMD} for the general setup in Eq.~\eqref{prob:generalOT} at hand, we are now ready to analyze the complexity of the APDAMD algorithm for solving the regularized OT problem. Indeed, we set $\varphi(\lambda)$ using Eq.~\eqref{prob:dualregOT-APDAMD} where $\lambda : = (\alpha, \beta)$, given by 
\begin{equation*}
\min_{\alpha, \beta \in \br^n} \ \varphi(\alpha, \beta) \ \mydefn \ \eta\log\left(\sum_{i,j=1}^n e^{-\frac{C_{ij} - \alpha_i - \beta_j}{\eta} -1}\right) - \langle \alpha, r\rangle - \langle \beta, c\rangle. 
\end{equation*}
By means of transformations $u_i = \frac{\alpha_i}{\eta} - \frac{1}{2}$ and $v_j = \frac{\beta_j}{\eta} - \frac{1}{2}$, the objective function is 
\begin{equation*}
\bar{\varphi}(u, v) = \eta\left(\log(\one_n^\top B(u, v) \one_n) - \langle u, r\rangle - \langle v, c\rangle - 1\right). 
\end{equation*}
After the simple calculation, we obtain that an point $(u, v)$ is an optimal solution if it satisfies that $B(u, v)\one_n - r = B(u, v)^\top\one_n - c =\zero_n$; see also~\citet[Lemma~5]{Guminov-2019-Accelerated}. This together with the similar argument from the proof of Lemma~\ref{Lemma:Boundedness} implies that there exists an optimal solution $(u^*, v^*)$ such that $\|u^*\| \leq R$ and $\|v^*\| \leq R$. Therefore, we conclude that there exists an optimal solution $(\alpha^*, \beta^*)$ of the function $\varphi(\alpha, \beta)$ defined by Eq.~\eqref{prob:dualregOT-APDAMD} such that 
\begin{equation}
\label{eq:new_upper_inf_norm}
\|\alpha^*\|_\infty \ \leq \ \eta (R + 1/2), \quad \|\beta^*\|_\infty \ \leq \ \eta (R + 1/2), 
\end{equation}
where $R$ is defined in Lemma~\ref{Lemma:Boundedness}. Then, we proceed to the following key result determining an upper bound for the number of iterations for Algorithm~\ref{Algorithm:APDAMD} to reach a desired accuracy $\varepsilon'$:
\begin{theorem}\label{Theorem:APDAMD-Total-Complexity}
Letting $\{X^t\}_{t \geq 0}$ be the iterates generated by Algorithm~\ref{Algorithm:APDAMD}, the number of iterations required to satisfy $\|A\textnormal{vec}(X^t) - b\|_1 \leq \varepsilon'$ is upper bounded by 
\begin{equation*}
t \ \leq \ 1 + 4\sqrt{2}\|A\|_{1\rightarrow 1}\sqrt{\delta(R + 1/2)/\varepsilon'}, 
\end{equation*}
where $R > 0$ is defined in Lemma~\ref{Lemma:Boundedness}.
\end{theorem}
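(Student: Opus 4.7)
The plan is to convert the dual estimate sequence of Lemma~\ref{Lemma:estimate-sequence} into a primal feasibility bound by exploiting the primal representation of the dual gradient. As preparation, observe that the dual function admits the closed form $\varphi(\lambda) = \langle\lambda, b - Ax(\lambda)\rangle - f(x(\lambda))$ with $\nabla\varphi(\lambda) = b - Ax(\lambda)$, which yields the identity
\begin{equation*}
\varphi(\mu) + \langle\nabla\varphi(\mu), z - \mu\rangle \ = \ \langle z, b - Ax(\mu)\rangle - f(x(\mu))
\end{equation*}
for every $z \in \br^n$. Weighting this by $\alpha^j$, summing over $j$, and using the averaging rule $x^t = \sum_{j=1}^t \alpha^j x(\mu^j)/\bar{\alpha}^t$ together with convexity of $f$, Lemma~\ref{Lemma:estimate-sequence} reduces to $\bar{\alpha}^t \varphi(\lambda^t) \leq \bar{\alpha}^t\langle z, b - Ax^t\rangle - \bar{\alpha}^t f(x^t) + \|z\|_\infty^2$. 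Combining this with $\varphi(\lambda^t) \geq \varphi(\lambda^*) = -f(x^*)$ (strong duality holds for the regularized OT problem) produces the key inequality
\begin{equation*}
\bar{\alpha}^t\bigl(f(x^t) - f(x^*) + \langle z, Ax^t - b\rangle\bigr) \ \leq \ \|z\|_\infty^2, \qquad \forall z \in \br^n.
\end{equation*}

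The second step is to eliminate the primal suboptimality via weak duality at the (possibly infeasible) iterate $x^t$, which reads $f(x^t) - f(x^*) \geq -\langle \lambda^*, Ax^t - b\rangle$. Substituting gives $\bar{\alpha}^t \langle z - \lambda^*, Ax^t - b\rangle \leq \|z\|_\infty^2$ for every $z$. Choosing $z = \lambda^* + \|\lambda^*\|_\infty \sign(Ax^t - b)$ with the sign taken componentwise then yields $\langle z - \lambda^*, Ax^t - b\rangle = \|\lambda^*\|_\infty \|Ax^t - b\|_1$ and $\|z\|_\infty \leq 2\|\lambda^*\|_\infty$, so that
\begin{equation*}
\|Ax^t - b\|_1 \ \leq \ \frac{4\|\lambda^*\|_\infty}{\bar{\alpha}^t}.
\end{equation*}

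The final step is routine substitution. Lemma~\ref{Lemma:objective-coefficient} gives $\bar{\alpha}^t \geq \eta(t+1)^2 /(8\delta \|A\|_{1\rightarrow 1}^2)$, while Eq.~\eqref{eq:new_upper_inf_norm} gives $\|\lambda^*\|_\infty \leq \eta(R+1/2)$; the factor $\eta$ cancels to leave $\|Ax^t - b\|_1 \leq 32\delta\|A\|_{1\rightarrow 1}^2 (R+1/2)/(t+1)^2$, and setting the right-hand side at most $\varepsilon'$ and solving for $t$ yields the claimed bound. The main obstacle is the conversion step at the start: Lemma~\ref{Lemma:estimate-sequence} by itself only controls a dual gap, and the crucial move is to \emph{refrain} from using convexity of $\varphi$ to simplify $\varphi(\mu^j) + \langle\nabla\varphi(\mu^j), z - \mu^j\rangle$ (which would throw away the primal information), and instead to unfold it via the explicit primal minimizer $x(\mu^j)$. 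Only then do both $f(x(\mu^j))$ and the affine residual $b - Ax(\mu^j)$ appear, which jointly enable the subsequent choice of $z$ to absorb the primal-dual gap and extract the $\ell_1$-norm of $Ax^t - b$; the remaining manipulations are bookkeeping.
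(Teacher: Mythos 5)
Your proposal is correct and follows essentially the same route as the paper's proof: both unfold $\varphi(\mu^j) + \langle\nabla\varphi(\mu^j), z - \mu^j\rangle = \langle z, b - Ax(\mu^j)\rangle - f(x(\mu^j))$, pass to $x^t$ via the weighted average and convexity of $f$, combine Lemma~\ref{Lemma:estimate-sequence} with weak/strong duality at $\lambda^*$, and feed in Lemma~\ref{Lemma:objective-coefficient} and the $\ell_\infty$-bound from Eq.~\eqref{eq:new_upper_inf_norm}. The only cosmetic divergence is that the paper restricts $z$ to the ball $B_\infty(2\widehat R)$ and minimizes $\langle z, b - Ax^t\rangle$ over it, while you pick the explicit maximizer $z = \lambda^* + \|\lambda^*\|_\infty\,\sign(Ax^t - b)$; both produce $\|Ax^t - b\|_1 \le 4\widehat R/\bar\alpha^t$ (yours nominally with $\|\lambda^*\|_\infty$ in place of $\widehat R$, which is no worse) and the same final bound.
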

\begin{proof}
From Lemma~\ref{Lemma:estimate-sequence}, we have
\begin{equation*}
\bar{\alpha}^t\varphi(\lambda^t) \ \leq \ \min_{z \in B_\infty(2\widehat{R})} \left\{\sum_{j=0}^t (\alpha^j (\varphi(\mu^j) + \langle\nabla\varphi(\mu^j), z - \mu^j\rangle)) + \|z\|_\infty^2\right\}, 
\end{equation*}
where $\widehat{R} = \eta(R + 1/2)$ is the upper bound for $\ell_\infty$-norm of optimal solutions of dual regularized OT problem in Eq.~\eqref{prob:dualregOT-APDAMD} and $B_\infty(r) := \{\lambda \in \br^n \mid \|\lambda\|_\infty \leq r\}$. This implies that
\begin{equation*}
\bar{\alpha}^t\varphi(\lambda^t) \leq \min_{z \in B_\infty(2\widehat{R})} \left\{\sum_{j=0}^t (\alpha^j(\varphi(\mu^j) + \langle\nabla\varphi(\mu^j), z - \mu^j\rangle))\right\} + 4 \widehat{R}^2. 
\end{equation*}
Since $\varphi$ is the dual objective function of regularized OT problem, we further have
\begin{eqnarray*}
\varphi(\mu^j) + \langle\nabla\varphi(\mu^j), z - \mu^j\rangle & = & \langle\mu^j, b - Ax(\mu^j)\rangle - f(x(\mu^j)) + \langle z - \mu^j, b - Ax(\mu^j)\rangle \\
& = & - f(x(\mu^j)) + \langle z, b - Ax(\mu^j)\rangle. 
\end{eqnarray*}
Therefore, we conclude that 
\begin{eqnarray*}
\bar{\alpha}^t\varphi(\lambda^t) & \leq & \min_{z \in B_\infty(2\widehat{R})} \left\{\sum_{j=0}^t (\alpha^j(\varphi(\mu^j) + \langle\nabla\varphi(\mu^j), z - \mu^j\rangle))\right\} + 4 \widehat{R}^2 \\
& \leq & 4\widehat{R}^2 -\bar{\alpha}^t f(x^t) + \min_{z \in B_\infty(2\widehat{R})} \left\{\bar{\alpha}^t\langle z, b - Ax^t\rangle\right\} \\
& = & 4\widehat{R}^2 -\bar{\alpha}^t f(x^t) - 2 \bar{\alpha}^t\widehat{R}\| Ax^t - b\|_1, 
\end{eqnarray*}
where the second inequality comes from the convexity of $f$ and the last equality comes from the fact that $\ell_1$-norm is the dual norm of $\ell_\infty$-norm. That is to say, 
\begin{equation*}
f(x^t) + \varphi(\lambda^t) + 2\widehat{R}\|Ax^t - b\|_1 \ \leq \ 4\widehat{R}^2/\bar{\alpha}^t. 
\end{equation*}
Let $\lambda^*$ be an optimal solution to dual regularized OT problem such that $\|\lambda\|_\infty \leq \widehat{R}$, we have
\begin{eqnarray*}
f(x^t) + \varphi(\lambda^t) & \geq & f(x^t) + \varphi(\lambda^*) \ = \ f(x^t) + \langle\lambda^*, b\rangle + \max_{x \in \br^n} \left\{-f(x) -\langle A^\top\lambda^*, x\rangle\right\} \\
& \geq & f(x^t) + \langle\lambda^*, b\rangle - f(x^t) - \langle \lambda^*, Ax^t\rangle \ = \ \langle\lambda^*, b - Ax^t\rangle \\
& \geq & -\widehat{R}\|Ax^t - b\|_1, 
\end{eqnarray*} 
Therefore, we conclude that 
\begin{equation*}
\|Ax^t - b\|_1 \ \leq \ \frac{4\widehat{R}}{\bar{\alpha}^t} \ \leq \ \frac{32\delta(R + 1/2)\|A\|_{1\rightarrow 1}^2}{(t+1)^2}, 
\end{equation*}
which implies the desired result. 
\end{proof}
Now, we are ready to present the complexity bound of Algorithm~\ref{Algorithm:ApproxOT_APDAMD} for approximating the OT problem.
\begin{theorem} \label{Theorem:ApproxOT-APDAMD-Total-Complexity}
The APDAMD algorithm for approximating optimal transport (Algorithm~\ref{Algorithm:ApproxOT_APDAMD}) returns $\hat{X} \in \br^{n \times n}$ satisfying $\hat{X}\one_n = r$, $\hat{X}^\top\one_n = c$ and Eq.~\eqref{Criteria:Approximation} in
\begin{equation*}
\bigO\left(\frac{n^2 \sqrt{\delta} \|C\|_\infty\log(n)}{\varepsilon}\right)
\end{equation*}
arithmetic operations. 
\end{theorem}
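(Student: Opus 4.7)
The plan is to follow the same template used in Theorem~\ref{Theorem:ApproxOT-Greenkhorn-Total-Complexity}: decompose $\langle C, \hat{X}\rangle - \langle C, X^*\rangle$ into an entropic-regularization bias (controlled by the choice $\eta = \varepsilon/(4\log n)$) and a marginal-infeasibility term (controlled by $\varepsilon' = \varepsilon/(8\|C\|_\infty)$), then plug in the iteration bound of Theorem~\ref{Theorem:APDAMD-Total-Complexity} and multiply by the per-iteration arithmetic cost. The only new piece of work compared to the Greenkhorn case is the instantiation of the generic APDAMD bound for the specific linear operator $A$ associated with the OT marginal constraints.

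First I would replay the argument from~\cite[Theorem~1]{Altschuler-2017-Near}: since $\tilde{X}$ is an approximate minimizer of the regularized OT with marginals $(\tilde{r},\tilde{c})$ and $\hat{X}$ is obtained from $\tilde{X}$ by the rounding routine~\cite[Algorithm~2]{Altschuler-2017-Near}, we have
\begin{equation*}
\langle C, \hat{X}\rangle - \langle C, X^*\rangle \ \leq \ 2\eta\log(n) + 4\bigl(\|\tilde{X}\one_n - r\|_1 + \|\tilde{X}^\top\one_n - c\|_1\bigr)\|C\|_\infty.
\end{equation*}
The first term is $\varepsilon/2$ by the choice of $\eta$. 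For the second term, I would bound the marginal error by the triangle inequality, splitting the deviation from $(r,c)$ into the deviation from $(\tilde{r},\tilde{c})$ (which is at most $\varepsilon'/2$ by the stopping criterion of Algorithm~\ref{Algorithm:APDAMD}) and $\|r-\tilde{r}\|_1+\|c-\tilde{c}\|_1 \leq \varepsilon'/2$ from the definition of $(\tilde{r},\tilde{c})$ in Step~1. Together these yield marginal error at most $\varepsilon'$, and the choice $\varepsilon' = \varepsilon/(8\|C\|_\infty)$ gives the second term at most $\varepsilon/2$, so $\langle C, \hat{X}\rangle - \langle C, X^*\rangle \leq \varepsilon$ as required.

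Next I would specialize Theorem~\ref{Theorem:APDAMD-Total-Complexity} to the constraint operator $A$ in Step~2 of Algorithm~\ref{Algorithm:ApproxOT_APDAMD}: since each column of $A$ has exactly two nonzeros equal to $1$, we have $\|A\|_{1\rightarrow 1} = 2$. The bound $R$ from Lemma~\ref{Lemma:Boundedness} is controlled using $\eta = \varepsilon/(4\log n)$ and the lower bound $\min\{\tilde{r}_i, \tilde{c}_j\} \geq \varepsilon'/(8n) = \Theta(\varepsilon/(n\|C\|_\infty))$ obtained from Step~1, yielding $R = \bigO(\|C\|_\infty\log(n)/\varepsilon + \log(n\|C\|_\infty/\varepsilon))$ with the first term dominating. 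Plugging these into Theorem~\ref{Theorem:APDAMD-Total-Complexity}, the iteration count becomes
\begin{equation*}
t \ = \ \bigO\!\left(\sqrt{\delta R/\varepsilon'}\right) \ = \ \bigO\!\left(\frac{\sqrt{\delta}\,\|C\|_\infty \sqrt{\log n}}{\varepsilon}\right).
\end{equation*}

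Finally, I would account for the per-iteration cost. Each outer iteration of Algorithm~\ref{Algorithm:APDAMD} consists of a constant number of gradient evaluations of $\varphi$ in Eq.~\eqref{prob:dualregOT-APDAMD} plus a mirror-descent update. Computing $\nabla\varphi$ amounts to forming $B(u,v)\one_n$ and $B(u,v)^\top\one_n$, which costs $\bigO(n^2)$ arithmetic operations, and the mirror step itself costs $\bigO(n)$ for standard choices of $\phi$. Lemma~\ref{Lemma:Line-search-iteration} adds only a logarithmic overhead from line search, which is absorbed into the $\bigO(\cdot)$ factor. Combining the iteration count with the $\bigO(n^2)$ per-iteration cost, and noting that $\sqrt{\log n} \leq \log n$, gives the claimed $\bigO(n^2\sqrt{\delta}\|C\|_\infty\log(n)/\varepsilon)$ bound, after adding the $\bigO(n^2)$ cost of the rounding step which is dominated. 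The main bookkeeping obstacle is ensuring that the $\log(n/\varepsilon)$ and $\log(\|C\|_\infty)$ factors arising inside $R$ can indeed be absorbed into the stated $\log(n)$ factor, which is handled by the $\bigOtil$-style convention used throughout the paper.
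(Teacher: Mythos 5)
Your proposal follows essentially the same route as the paper's proof: the same Altschuler-type error decomposition with $\eta = \varepsilon/(4\log n)$ absorbing the entropic bias and $\varepsilon' = \varepsilon/(8\|C\|_\infty)$ the marginal error, the same observation that $\|A\|_{1\to 1}=2$, the same use of Lemma~\ref{Lemma:Boundedness} with the shifted marginals to bound $R$, the same appeal to Theorem~\ref{Theorem:APDAMD-Total-Complexity} (and Lemma~\ref{Lemma:Line-search-iteration}) for the iteration/oracle count, and the same $\bigO(n^2)$ per-iteration cost. The only minor differences are cosmetic bookkeeping: your marginal-error split $\varepsilon'/2 + \varepsilon'/2$ is actually tidier than the paper's $\varepsilon'/2 + \varepsilon'/4 + \varepsilon'/2$, and you correctly observe that the iteration count naturally comes out as $\sqrt{\log n}$ rather than $\log n$, then deliberately loosen it to match the stated theorem.
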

\begin{proof}
Using the same argument as in Theorem~\ref{Theorem:ApproxOT-Greenkhorn-Total-Complexity}, we have
\begin{equation*}
\langle C, \hat{X}\rangle - \langle C, X^*\rangle \ \leq \ \varepsilon/2 + 4(\|\tilde{X}\one_n - r\|_1 + \|\tilde{X}^\top\one_n - c\|_1)\|C\|_\infty,
\end{equation*}
where $\hat{X}$ is returned by Algorithm~\ref{Algorithm:ApproxOT_APDAMD}, $X^*$ is a solution to the optimal transport problem and $\tilde{X}$ is returned by Algorithm~\ref{Algorithm:APDAMD} with $\tilde{r}$, $\tilde{c}$ and $\varepsilon'/2$ in Step 3 of Algorithm~\ref{Algorithm:ApproxOT_APDAMD}. Note that $\|\tilde{X}\one_n - r\|_1 + \|\tilde{X}^\top\one_n - c\|_1 \leq \varepsilon'$. Thus, we have $\langle C, \hat{X}\rangle - \langle C, X^*\rangle \leq \varepsilon$. The remaining step is to analyze the complexity bound. Since $\|A\|_{1\rightarrow 1}$ equals to the maximum $\ell_1$-norm of a column of $A$ and each column of $A$ contains only two nonzero elements which are equal to one, we have $\|A\|_{1\rightarrow 1}=2$. This together with Lemma~\ref{Lemma:Line-search-iteration} and Theorem~\ref{Theorem:APDAMD-Total-Complexity} yields that 
\begin{eqnarray*}
N_t & \leq & 4t + 4 + \frac{2\log(\|A\|_{1\rightarrow 1}^2/(2\eta)) - 2\log(L^0)}{\log 2} \\
& \leq & 8 + 16\sqrt{2}\|A\|_{1\rightarrow 1}\sqrt{\frac{\delta(R + 1/2)}{\varepsilon'}} + \frac{2\log(\|A\|_{1\rightarrow 1}^2/(2\eta)) - 2\log(L^0)}{\log 2} \\
& = & 8 + 256\sqrt{\frac{\eta(R + 1/2)\|C\|_\infty\log(n)}{\varepsilon}} + \frac{2\log(\log(n)/\varepsilon)}{\log 2}.
\end{eqnarray*}
Using Lemma~\ref{Lemma:Boundedness} and the definition of $\tilde{r}$ and $\tilde{c}$ in Algorithm~\ref{Algorithm:ApproxOT_APDAMD}, we have 
\begin{equation*}
R \ \leq \ \frac{4\|C\|_\infty\log(n)}{\varepsilon} + \log(n) - 2\log\left(\frac{\varepsilon}{64n\|C\|_\infty}\right).
\end{equation*}
Therefore, we conclude that 
\begin{eqnarray*}
N_t & \leq & 256\sqrt{\frac{\delta\|C\|_\infty\log(n)}{\varepsilon}}\sqrt{\frac{4\|C\|_\infty\log(n)}{\varepsilon} + \log(n) - 2\log\left(\frac{\varepsilon}{64n\|C\|_\infty}\right) + \frac{1}{2}} \\ 
& & + \frac{2\log(\log(n)/\varepsilon)}{\log 2} + 8 \ = \ \bigO\left(\frac{\sqrt{\delta}\left\|C\right\|_\infty\log(n)}{\varepsilon}\right). 
\end{eqnarray*}
The total iteration complexity in Step 3 of Algorithm~\ref{Algorithm:ApproxOT_APDAMD} is $\bigO(\sqrt{\delta}\|C\|_\infty\log(n)\varepsilon^{-1})$. Each iteration of Algorithm~\ref{Algorithm:APDAMD} requires $O(n^2)$ arithmetic operations. Thus, the total number of arithmetic operations is $\bigO(n^2\sqrt{\delta}\|C\|_\infty\log(n)\varepsilon^{-1})$. Furthermore, $\tilde{r}$ and $\tilde{c}$ in Step 2 of Algorithm~\ref{Algorithm:ApproxOT_APDAMD} can be found in $\bigO(n)$ arithmetic operations and \citet[Algorithm~2]{Altschuler-2017-Near} requires $\bigO(n^2)$ arithmetic operations. Therefore, we conclude that the total number of arithmetic operations is $\bigO(n^2\sqrt{\delta}\|C\|_\infty\log(n)\varepsilon^{-1})$.
\end{proof}
The complexity bound of the APDAMD algorithm in Theorem~\ref{Theorem:ApproxOT-APDAMD-Total-Complexity} suggests an interesting feature of the (regularized) OT problem. Indeed, the dependence of that bound on $\delta$ manifests the necessity of $\ell_\infty$-norm in the understanding of the complexity of the regularized OT problem. This view is also in harmony with the proof technique of running time for the Greenkhorn algorithm in Section~\ref{sec:greenkhorn}, where we rely on $\ell_\infty$-norm of optimal solutions of the dual regularized OT problem to measure the progress in the objective value among the successive iterates. 
\begin{algorithm}[!t]
\caption{Approximating OT by~\citet[Algorithm~3]{Dvurechensky-2018-Computational}}\label{Algorithm:ApproxOT_APDAGD}
\begin{algorithmic}
\STATE \textbf{Input:} $\eta = \varepsilon/(4\log(n))$ and $\varepsilon'=\varepsilon/(8\|C\|_\infty)$. 
\STATE \textbf{Step 1:} Let $\tilde{r} \in \Delta_n$ and $\tilde{c} \in \Delta_n$ be defined by $(\tilde{r}, \tilde{c}) = (1 - \varepsilon'/8)(r, c) + (\varepsilon'/8n)(\one_n, \one_n)$. 
\STATE \textbf{Step 2:} Let $A \in \br^{2n \times n^2}$ and $b \in \br^{2n}$ be defined by $A\text{vec}(X) = \begin{pmatrix} X\one_n \\ X^\top\one_n \end{pmatrix}$ and $b = \begin{pmatrix} \tilde{r} \\ \tilde{c} \end{pmatrix}$. 
\STATE \textbf{Step 3:} Compute $\tilde{X} = \text{APDAGD}(\varphi, A, b, \varepsilon'/2)$ where $\varphi$ is defined by Eq.~\eqref{prob:dualregOT-old}. 
\STATE \textbf{Step 4:} Round $\tilde{X}$ to $\hat{X}$ using~\citet[Algorithm~2]{Altschuler-2017-Near} such that $\hat{X}\one_n = r$ and $\hat{X}^\top\one_n = c$. 
\end{algorithmic}
\end{algorithm}
\subsection{Revisiting the APDAGD algorithm}
In this section, we revisit the APDAGD algorithm for the regularized OT problem~\citep{Dvurechensky-2018-Computational}. First, we point out that the complexity bound of the APDAGD algorithm for regularized OT is not $\bigOtil(\min\{n^{9/4}\varepsilon^{-1}, n^{2}\varepsilon^{-2}\})$ as claimed from their theoretical analysis. This is confirmed by a simple counterexample. We further provide a new complexity bound of the APDAGD algorithm using our techniques in Section~\ref{Sec:complex_APDAMD}. Despite the issue with regularized OT, we wish to emphasize that the APDAGD algorithm is still an interesting and efficient accelerated algorithm for general problem in Eq.~\eqref{prob:generalOT} with theoretical guarantee under the certain conditions. More precisely, while \citet[Theorem~3]{Dvurechensky-2018-Computational} is not applicable to regularized OT since there exists no dual solution with a constant bound in $\ell_2$-norm, this theorem is valid and can be used for other regularized problems with bounded optimal dual solution. 

To facilitate the ensuing discussion, we first present the complexity bound for regularized OT in \citet{Dvurechensky-2018-Computational} using the notation from the current paper. Indeed, we recall that the APDAGD algorithm is developed for solving the optimization problem with the function $\varphi$ defined by Eq.~\eqref{prob:dualregOT-old} as follows,   
\begin{equation}\label{prob:dualOT-new}
\min_{\alpha, \beta \in \br^n} \ \varphi(\alpha, \beta) \ = \ \eta\left(\sum_{i,j=1}^n e^{-\frac{C_{ij} - \alpha_i - \beta_j}{\eta} -1}\right) - \langle \alpha, r\rangle - \langle \beta, c\rangle. 
\end{equation}
\begin{theorem}[Theorem 4 in \citet{Dvurechensky-2018-Computational}] \label{theorem:org_complex_APDAGD}
The APDAGD algorithm for approximating optimal transport returns
$\hat{X} \in \br^{n \times n}$ satisfying $\hat{X}\one_n = r$, $\hat{X}^\top\one_n = c$ and~\eqref{Criteria:Approximation} in a number of arithmetic operations bounded as
\begin{equation*}
\bigO\left(\min \left\{\frac{n^{9/4} \sqrt{\overline{R} \left\|C\right\|_
\infty\log(n)}}{\varepsilon}, \frac{n^{2} \overline{R} \left\|C
\right\|_\infty\log(n)}{\varepsilon^2}\right\} \right),
\end{equation*}
where $\|(\alpha^*, \beta^*)\|_2 \leq \overline{R}$ and $(\alpha^*, \beta^*)$ denotes an optimal solution pair for the function $\widetilde{\varphi}$. 
\end{theorem}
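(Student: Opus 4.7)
Since the APDAGD algorithm from \cite{Dvurechensky-2018-Computational} solves the same dual-based template as APDAMD but with the special choice of Euclidean Bregman divergence, my plan is to run essentially the same chain of arguments as in the proof of Theorem~\ref{Theorem:ApproxOT-APDAMD-Total-Complexity}, but using the $\ell_2$-based accelerated primal-dual convergence bound in place of the $\ell_\infty$ one. The general APDAGD convergence guarantee provides, after $t$ iterations, both a dual-gap bound and a feasibility bound of the form
\begin{equation*}
f(x^t) + \widetilde{\varphi}(\lambda^t) \ \leq \ \frac{C_1 \|A\|_{2 \to 2}^2 \overline{R}^2}{\eta (t+1)^2}, \qquad \|Ax^t - b\|_2 \ \leq \ \frac{C_2 \|A\|_{2 \to 2}^2 \overline{R}}{\eta (t+1)^2},
\end{equation*}
whenever the dual problem admits an optimizer of $\ell_2$-norm at most $\overline{R}$. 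This is the Euclidean analogue of the estimate that played the central role in Lemma~\ref{Lemma:estimate-sequence} and Lemma~\ref{Lemma:objective-coefficient}.

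First I would instantiate the setup exactly as in Algorithm~\ref{Algorithm:ApproxOT_APDAGD}: with $A \in \br^{2n \times n^2}$ the marginal-incidence matrix, $b = (\tilde r, \tilde c)$, and $f(x) = \langle C, X\rangle - \eta H(X)$, which is $\eta$-strongly convex in $\ell_1$ and therefore also in $\ell_2$. Plugging $\eta = \varepsilon/(4\log(n))$, $\varepsilon' = \varepsilon/(8\|C\|_\infty)$, and the bound on $\|A\|_{2\to 2}$ into the display above, I would solve for the smallest $t$ ensuring $\|A\text{vec}(\tilde X) - b\|_1 \leq \varepsilon'/2$. Passing from $\|\cdot\|_2$ to $\|\cdot\|_1$ for vectors in $\br^{2n}$ costs an extra factor of $\sqrt{2n}$, and this factor---combined with the $\sqrt{\overline{R}}$ and $\sqrt{\|C\|_\infty}$ contributions---produces the $n^{9/4}$ appearance in the first branch of the claimed bound. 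The rounding error analysis from \cite[Theorem~1]{Altschuler-2017-Near}, identical to the one used at the end of the proof of Theorem~\ref{Theorem:ApproxOT-Greenkhorn-Total-Complexity}, then converts the feasibility slack into an $\varepsilon$ suboptimality for the original OT problem.

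To obtain the second branch $\bigO(n^2 \overline{R}\|C\|_\infty \log(n)/\varepsilon^2)$, I would use the same switching trick that appears in the proof of Theorem~\ref{Theorem:Greenkhorn-Total-Complexity}: exploit the accelerated $1/t^2$ rate up to a carefully chosen crossover dual gap $s$, and beyond that point use a direct monotone-descent estimate that turns constraint violation into objective decrease at a rate proportional to $(\varepsilon')^2/n$. Minimizing the sum of the two resulting iteration counts over $s$ yields the second branch, and the overall bound in the theorem is the minimum of the two branches. Combining with the $\bigO(n^2)$ per-iteration cost of APDAGD (matrix-vector multiplications against the sparse incidence matrix $A$) and the $\bigO(n^2)$ rounding cost gives the stated arithmetic complexity.

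The main obstacle, and arguably the whole reason this theorem is delicate, is controlling the dual radius $\overline{R}$. The theorem takes $\overline{R}$ as an abstract upper bound on $\|(\alpha^*, \beta^*)\|_2$, and the derivation above gives the claimed complexity only if $\overline{R}$ is treated as dimension-independent. The natural a priori estimate on a dual optimizer, however, is an $\ell_\infty$ bound (see Lemma~\ref{Lemma:Boundedness} and Eq.~\eqref{eq:new_upper_inf_norm}), and translating it to an $\ell_2$ bound introduces an additional $\sqrt{n}$ factor. Keeping that factor honest is precisely what the follow-up discussion around Proposition~\ref{proposition:tight_upper} targets, and it is the reason the authors argue that the advertised $n^{9/4}$ should in fact be $n^{5/2}$.
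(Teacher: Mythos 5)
This theorem is not proved in the paper: it is quoted verbatim as Theorem~4 of \citet{Dvurechensky-2018-Computational}, and it is quoted precisely so that the authors can critique it. The actual work the paper does is in Proposition~\ref{proposition:tight_upper}, which exhibits a simple instance with $\overline{R}=\Omega(\sqrt{n})$ and so shows that $\overline{R}$ cannot be treated as dimension-free, and in Proposition~\ref{prop:correct_complex_APDAGD}, which plugs the honest $\overline{R}=\bigOtil(\sqrt{n})$ into the cited bound to obtain the corrected $\bigOtil(n^{5/2}\varepsilon^{-1})$ complexity. Your proposal closes with exactly this observation, so you have correctly understood why the statement appears at all and where its fragility lies.

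As a reconstruction of the Dvurechensky et al.\ argument, your sketch is on target in outline, but one detail is off and could cause double counting if you carried it out fully: the operator norm that enters the APDAGD smoothness constant is the mixed norm $\|A\|_{\ell_1 \to \ell_2} = \max_j \|A e_j\|_2 = \sqrt{2}$ for the OT incidence matrix (each column has exactly two unit entries), not the spectral norm $\|A\|_{2\to 2}$, which is itself $\Theta(\sqrt{n})$. With the correct constant operator norm, the only source of the extra polynomial-in-$n$ factor in the first branch is, as you say at the end, the Cauchy--Schwarz conversion of an $\ell_2$ feasibility bound on a $2n$-vector into an $\ell_1$ bound, which contributes $(2n)^{1/4}$ to the iteration count and hence the $n^{9/4}$ once multiplied by the $\bigO(n^2)$ per-iteration cost. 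Since the paper does not re-derive this theorem, the more substantive exercise is Proposition~\ref{prop:correct_complex_APDAGD}, which you should check you can reproduce from Eq.~\eqref{inequality-proposition-APDAGD} together with $\overline{R}\leq\sqrt{n}\,\eta R$.
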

This theorem suggests that the complexity bound is at the order $\bigOtil(\min\{n^{9/4}\varepsilon^{-1}, n^2\varepsilon^{-2}\})$. However, there are two issues: (i) the upper bound $\overline{R}$ is assumed to be bounded and independent of $n$, which is incorrect; see our counterexample in Proposition~\ref{proposition:tight_upper}; (ii) the upper bound $\overline{R}$ is based on $\min_{1 \leq i, j \leq n} \left\{r_i, l_j\right\}$ (cf. Lemma~\ref{Lemma:Boundedness} or~\citet[Lemma 1]{Dvurechensky-2018-Computational}). This implies that the valid algorithm needs to take the rounding error with $r$ and $l$ into account.

\paragraph{Corrected upper bound $\overline{R}$.} Using the similar arguments for deriving the upper bounds from~\eqref{eq:new_upper_inf_norm}, we obtain that an upper bound for $\overline{R}$ is $\bigOtil(n^{1/2})$. The following proposition shows that $\overline{R}$ is indeed $\Omega(n^{1/2})$ for any $\varepsilon \in (0, 1)$. 
\begin{proposition} \label{proposition:tight_upper}
Assume that $C = \one_n\one_n^\top$ and $r = c = (1/n)\one_n$. Given $\varepsilon \in \left(0, 1\right)$ and the regularization term $\eta = \varepsilon/(4\log(n))$, all the optimal solutions of the dual regularized OT problem in Eq.~\eqref{prob:dualOT-new} satisfy that $\|(\alpha^*, \beta^*)\| \gtrsim \sqrt{n}$.
\end{proposition}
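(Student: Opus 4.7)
The plan is to exploit the high symmetry of the instance $C=\one_n\one_n^\top$, $r=c=(1/n)\one_n$ to write down the full optimal set of the dual, and then minimize the $\ell_2$ norm over that set. First I would compute the first-order optimality condition for $\widetilde{\varphi}$ in Eq.~\eqref{prob:dualOT-new}. Differentiating in $\alpha_i$ gives $\sum_{j=1}^n e^{(\alpha_i+\beta_j-1)/\eta-1}=1/n$; separating the $i$-dependence shows $e^{\alpha_i^*/\eta}$ is the same for every $i$, so $\alpha_i^*$ is constant in $i$. The symmetric argument in $\beta_j$ forces $\beta_j^*$ to be constant as well. Writing $\alpha_i^*\equiv\alpha^*$, $\beta_j^*\equiv\beta^*$ and substituting back into one optimality equation collapses the system to the scalar identity
\begin{equation*}
\alpha^*+\beta^* \;=\; 1+\eta-2\eta\log n.
\end{equation*}

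Second, I would identify the full optimal set. Because $\one_n^\top r=\one_n^\top c=1$, the dual objective $\widetilde{\varphi}$ is invariant under the translation $(\alpha,\beta)\mapsto(\alpha+t\one_n,\beta-t\one_n)$ for every $t\in\br$. Combined with the step above, this shows that the entire set of optimal solutions is the one-parameter family
\begin{equation*}
\bigl\{\bigl((\alpha^*+t)\one_n,\;(\beta^*-t)\one_n\bigr):t\in\br\bigr\}.
\end{equation*}

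Third, I would minimize the squared $\ell_2$ norm along this family. Direct calculation gives $\|((\alpha^*+t)\one_n,(\beta^*-t)\one_n)\|^2=n[(\alpha^*+t)^2+(\beta^*-t)^2]$, whose minimum over $t$ equals $n(\alpha^*+\beta^*)^2/2$, attained at $t=(\beta^*-\alpha^*)/2$. Hence every optimal dual pair obeys $\|(\alpha^*,\beta^*)\|\ge\sqrt{n/2}\,|\alpha^*+\beta^*|$. Plugging $\eta=\varepsilon/(4\log n)$ yields $\alpha^*+\beta^*=1-\varepsilon/2+\varepsilon/(4\log n)$, which for $\varepsilon\in(0,1)$ and $n\ge 2$ is bounded below by $1/2$. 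Therefore $\|(\alpha^*,\beta^*)\|\ge\sqrt{n}/(2\sqrt{2})$, giving the claimed $\Omega(\sqrt{n})$ bound.

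There is no real obstacle here beyond checking the KKT calculation; the only subtlety worth stating carefully is that the shift invariance is the \emph{only} degeneracy, so the ``constant vector'' conclusion from KKT, together with translation, really does parameterize every optimal solution (and hence the lower bound applies uniformly, not just to the minimum-norm representative). This is exactly what is needed to contradict the assumption that $\overline{R}$ in Theorem~\ref{theorem:org_complex_APDAGD} can be taken independent of $n$.
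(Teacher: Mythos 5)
Your proof is correct and follows essentially the same route as the paper's: both derive the first-order conditions, show they force $\alpha^*$ and $\beta^*$ to be constant vectors with $\alpha^*+\beta^*=1+\eta-2\eta\log n$, and then lower-bound the norm by $\sqrt{n/2}\,(\alpha^*+\beta^*)$. The only cosmetic difference is that you realize this lower bound by explicitly minimizing over the shift-invariance parameter $t$, whereas the paper invokes the pointwise inequality $a^2+b^2\ge(a+b)^2/2$; the two give the identical bound.
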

\begin{proof}
By the definition $r$, $c$ and $\eta$, we rewrite the dual function $\varphi(\alpha, \beta)$ as follows:
\begin{equation*}
\varphi(\alpha, \beta) \ = \ \frac{\varepsilon}{4e\log(n)}\sum_{1 \leq i, j \leq n} e^{- \frac{4\log(n)(1 - \alpha_i - \beta_j)}{\varepsilon}} - \frac{\one_n^\top\alpha}{n} - \frac{\one_n^\top\beta}{n}.
\end{equation*}
Since $(\alpha^*, \beta^*)$ is the optimal solution of dual regularized OT problem, we have
\begin{equation}\label{example-opt-dualOT}
e^{\frac{4\log(n)\alpha_i^*}{\varepsilon}} \sum_{j=1}^n e^{-\frac{4\log(n)(1 - \beta_j^*)}{\varepsilon}} \ = \ e^{\frac{4\log(n)\beta_i^*}{\varepsilon}} \sum_{j=1}^n e^{- \frac{4\log(n)(1 - \alpha_j^*)}{\varepsilon}} \ = \ \frac{e}{n} \quad \text{for all } i \in [n]. 
\end{equation}
This implies $\alpha_i^* = \alpha_j^*$ and $\beta_i^* = \beta_j^*$ for all $i, j \in [n]$. Thus, we let $A = e^{4\log(n)\alpha_i^*/\varepsilon}$ and $B = e^{4\log(n)\beta_i^*/\varepsilon}$ for the simplicity. Using Eq.~\eqref{example-opt-dualOT}, we have $AB e^{-4\log(n)/\varepsilon} = en^{-2}$ which implies that $AB = e^{\frac{4\log(n)}{\varepsilon} + 1}n^{-2}$. So we have
\begin{equation*}
\alpha_i^* + \beta_i^* = \frac{\varepsilon\left(\log(A) + \log(B)\right)}{4\log(n)} = \frac{\varepsilon}{4\log(n)}\left(\frac{4\log(n)}{\varepsilon} + 1 - 2\log(n)\right) = 1 + \frac{\varepsilon}{4\log(n)} - \frac{\varepsilon}{2}. 
\end{equation*}
Therefore, we conclude that 
\begin{align*}
\|(\alpha^*, \beta^*)\| \geq \sqrt{\frac{\sum_{i = 1}^{n} (\alpha_i^* + \beta_i^*)^2}{2}} = \sqrt{\frac{n}{2}}\left(1 + \frac{\varepsilon}{4\log(n)} - \frac{\varepsilon}{2}\right) \gtrsim \sqrt{n}. 
\end{align*}
As a consequence, we achieve the conclusion of the proposition.
\end{proof}
\paragraph{Approximation algorithm for OT by APDAGD.} We notice that~\citet[Algorithm~4]{Dvurechensky-2018-Computational} lacks the rounding procedure and needs to improved to Algorithm~\ref{Algorithm:ApproxOT_APDAGD}. Here,~\citet[Algorithm~3]{Dvurechensky-2018-Computational} is used in \textbf{Step 3} of Algorithm~\ref{Algorithm:ApproxOT_APDAGD}. Given the corrected upper bound $\overline{R}$ and Algorithm~\ref{Algorithm:ApproxOT_APDAGD} for approximating OT, 
we provide a new complexity bound of Algorithm~\ref{Algorithm:ApproxOT_APDAGD} in the following proposition. 
\begin{proposition} \label{prop:correct_complex_APDAGD}
The APDAGD algorithm for approximating optimal transport (Algorithm~\ref{Algorithm:ApproxOT_APDAGD}) returns $\hat{X} \in \br^{n \times n}$ satisfying $\hat{X}\one_n = r$, $\hat{X}^\top\one_n = c$ and Eq.~\eqref{Criteria:Approximation} in
\begin{equation*}
\bigO\left(\frac{n^{5/2}\|C\|_\infty\sqrt{\log(n)}}{\varepsilon}\right)
\end{equation*}
arithmetic operations.
\end{proposition}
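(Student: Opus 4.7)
The plan is to reuse the bound from Theorem~\ref{theorem:org_complex_APDAGD} verbatim, but to substitute the correct order estimate for $\overline{R}$ obtained via Lemma~\ref{Lemma:Boundedness}. The first step is to derive an upper bound $\overline{R} = \bigOtil(\sqrt{n}\|C\|_\infty)$ by combining Eq.~\eqref{eq:new_upper_inf_norm} with the elementary conversion $\|x\|_2 \le \sqrt{n}\|x\|_\infty$. From Eq.~\eqref{eq:new_upper_inf_norm} we have $\max\{\|\alpha^*\|_\infty, \|\beta^*\|_\infty\} \le \eta(R + 1/2)$, and with $\eta = \varepsilon/(4\log n)$ the dominant term in $\eta R$ is $\eta \cdot \eta^{-1}\|C\|_\infty = \|C\|_\infty$; the remaining logarithmic contributions from $R$ (those involving $\log n$ and $\log(1/\min_{i,j}\{r_i,c_j\})$) are absorbed by $\bigOtil$. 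The rounding step in Step~1 of Algorithm~\ref{Algorithm:ApproxOT_APDAGD} that replaces $(r,c)$ by $(\tilde r,\tilde c)$ enforces $\min_{i,j}\{\tilde r_i,\tilde c_j\} \ge \varepsilon'/(8n)$, so the $-\log(\min)$ term contributes only polylogarithmic factors in $n$, $\|C\|_\infty$, $\varepsilon$.

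The second step is a direct substitution. Plugging $\overline R = \bigOtil(\sqrt{n}\|C\|_\infty)$ into the first branch of the minimum in Theorem~\ref{theorem:org_complex_APDAGD} yields
\begin{equation*}
\frac{n^{9/4}\sqrt{\overline R \,\|C\|_\infty\log n}}{\varepsilon} \;=\; \bigO\!\left(\frac{n^{9/4}\sqrt{\sqrt{n}\,\|C\|_\infty^2\,\log n}}{\varepsilon}\right) \;=\; \bigO\!\left(\frac{n^{5/2}\|C\|_\infty\sqrt{\log n}}{\varepsilon}\right),
\end{equation*}
which matches the claim. The second branch, after substitution, evaluates to $\bigOtil(n^{5/2}\|C\|_\infty^2/\varepsilon^2)$, which is worse in the small-$\varepsilon$ regime and hence does not improve the minimum.

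The main obstacle is confirming that the chain of reductions behind Theorem~\ref{theorem:org_complex_APDAGD} survives the new, $n$-dependent value of $\overline R$. Concretely, I would verify (a) that the iteration count for APDAGD in~\cite{Dvurechensky-2018-Computational} is derived \emph{uniformly} in $\overline R$, so that the corrected estimate can be substituted a posteriori rather than being baked into the constants, and (b) that the rounding procedure (\cite[Algorithm~2]{Altschuler-2017-Near}) appended as Step~4 of Algorithm~\ref{Algorithm:ApproxOT_APDAGD} adds only $\bigO(n^2)$ arithmetic operations and preserves the $\varepsilon$-accuracy; the latter follows by the same reduction used at the start of Theorem~\ref{Theorem:ApproxOT-Greenkhorn-Total-Complexity}, namely the triangle inequality $\|\tilde X\one_n - r\|_1 + \|\tilde X^\top\one_n - c\|_1 \le \varepsilon'$ combined with the cost of the rounding step. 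Both checks are routine; once they are in place, the proof reduces to the single substitution exhibited above.
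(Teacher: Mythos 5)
Your approach mirrors the paper's: bound $\overline{R}$ by $\bigOtil(\sqrt{n}\|C\|_\infty)$ using Lemma~\ref{Lemma:Boundedness} and the improved marginals $(\tilde{r},\tilde{c})$, then substitute into the APDAGD complexity estimate. The algebra for the first branch of the minimum is correct and yields $\bigO(n^{5/2}\|C\|_\infty\sqrt{\log n}/\varepsilon)$.

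The place where you defer to "routine checks" is, however, exactly where the paper does substantive work, and your concern (a) is well-founded rather than a formality. When the paper re-traces the argument of \cite[Theorem~4]{Dvurechensky-2018-Computational}, the iteration count it obtains (Eq.~\eqref{inequality-proposition-APDAGD}) is not the minimum of two terms reproduced in Theorem~\ref{theorem:org_complex_APDAGD} but the \emph{maximum} of that minimum and an additional term $\bigO(\overline{R}\sqrt{\log n}/\varepsilon)$. This extra term is invisible in the original theorem statement because it is dominated when $\overline{R}=\bigO(1)$, which was Dvurechensky et al.'s implicit regime; it is precisely the term that must be checked once $\overline{R}$ is allowed to grow with $n$. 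So simply plugging the corrected $\overline{R}$ into Theorem~\ref{theorem:org_complex_APDAGD} as a black box is not logically justified, since that statement was not derived "uniformly in $\overline{R}$" in the sense you would need. As it happens, with $\overline{R} = \Theta(\sqrt{n}\|C\|_\infty)$ the extra term $\bigO(n^2\cdot\overline{R}\sqrt{\log n}/\varepsilon) = \bigO(n^{5/2}\|C\|_\infty\sqrt{\log n}/\varepsilon)$ matches the first branch of the minimum to the same order, so your final bound is correct; but establishing this coincidence is the content of Proposition~\ref{prop:correct_complex_APDAGD}, not a step that can be waved through. I would make the re-derivation of Eq.~\eqref{inequality-proposition-APDAGD} (including the extra max term) explicit rather than treat it as a sanity check.
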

\begin{proof}
The proof of Proposition~\ref{prop:correct_complex_APDAGD} is a modification of the proof for~\citet[Theorem 4]{Dvurechensky-2018-Computational}. Therefore, we only give a proof sketch to ease the presentation. More specifically, we follow the argument of~\citet[Theorem 4]{Dvurechensky-2018-Computational} and obtain that the number of iterations for Algorithm~\ref{Algorithm:ApproxOT_APDAGD} required to reach the tolerance $\varepsilon$ is
\begin{equation}\label{inequality-proposition-APDAGD}
t \ \leq \ \max \left\{\bigO\left(\min \left\{\frac{n^{1/4} 
\sqrt{\overline{R} \left\|C\right\|_\infty\log(n)}}{\varepsilon}, 
\frac{\overline{R} \left\|C\right\|_\infty\log(n)}{\varepsilon^2}
\right\} \right), \bigO \left(\frac{\overline{R} \sqrt{\log n}}
{\varepsilon}\right)\right\}.
\end{equation}
Furthermore, we have $\overline{R} \leq \sqrt{n}\eta R$ where $R = \eta^{-1}\|C\|_\infty + \log(n) - 2\log(\min_{1 \leq i, j \leq n} \{r_i, c_j\})$ (cf. Lemma~\ref{Lemma:Boundedness}). Therefore, the total iteration complexity in Step 3 of Algorithm~\ref{Algorithm:ApproxOT_APDAGD} is $\bigO(\sqrt{n\log(n)}\|C\|_\infty\varepsilon^{-1})$. Each iteration of the APDAGD algorithm requires $\bigO(n^2)$ arithmetic operations. Thus, the total number of arithmetic operations is $\bigO(n^{5/2}\|C\|_\infty\sqrt{\log(n)}\varepsilon^{-1})$. Furthermore, $\tilde{r}$ and $\tilde{c}$ in \textbf{Step 2} of Algorithm~\ref{Algorithm:ApproxOT_APDAGD} can be found in $\bigO(n)$ arithmetic operations and~\citet[Algorithm~2]{Altschuler-2017-Near} requires $\bigO(n^2)$ arithmetic operations. Therefore, we conclude that the total number of arithmetic operations is $\bigO(n^{5/2}\|C\|_\infty\sqrt{\log(n)}\varepsilon^{-1})$.
\end{proof}
\begin{remark}
As indicated in Proposition~\ref{prop:correct_complex_APDAGD}, the corrected 
complexity bound of APDAGD algorithm for the regularized OT is similar to that of our APDAMD algorithm when we choose $\phi = (1/2n)\|\cdot\|^2$ and have $\delta = n$. From this perspective, our APDAMD algorithm can be viewed as a generalization of the APDAGD algorithm. Since our APDAMD algorithm utilizes $\ell_\infty$-norm in the line search criterion, it is more robust than the APDAGD algorithm in practice; see the next section for the details. 
\end{remark}

\section{Experiments}\label{sec:experiments}
In this section, we conduct the extensive comparative experiments with the Greenkhorn and APDAMD algorithms on both synthetic images and real images from MNIST Digits dataset\footnote{http://yann.lecun.com/exdb/mnist/}. The baseline algorithms include the Sinkhorn algorithm~\citep{Cuturi-2013-Sinkhorn, Altschuler-2017-Near}, the APDAGD algorithm~\citep{Dvurechensky-2018-Computational} and the GCPB algorithm~\citep{Genevay-2016-Stochastic}. The Greenkhorn and APDAGD algorithms have been shown outperform the Sinkhorn algorithm in~\citet{Altschuler-2017-Near} and~\citet{Dvurechensky-2018-Computational}, respectively. However, we repeat some of these comparisons to ensure that our evaluation is systematic and complete. Finally, we utilize the default linear programming solver in MATLAB to obtain the optimal value of the unregularized optimal transport problem.
\begin{figure*}[!t]
\begin{minipage}[b]{.5\textwidth}
\includegraphics[width=75mm,height=55mm]{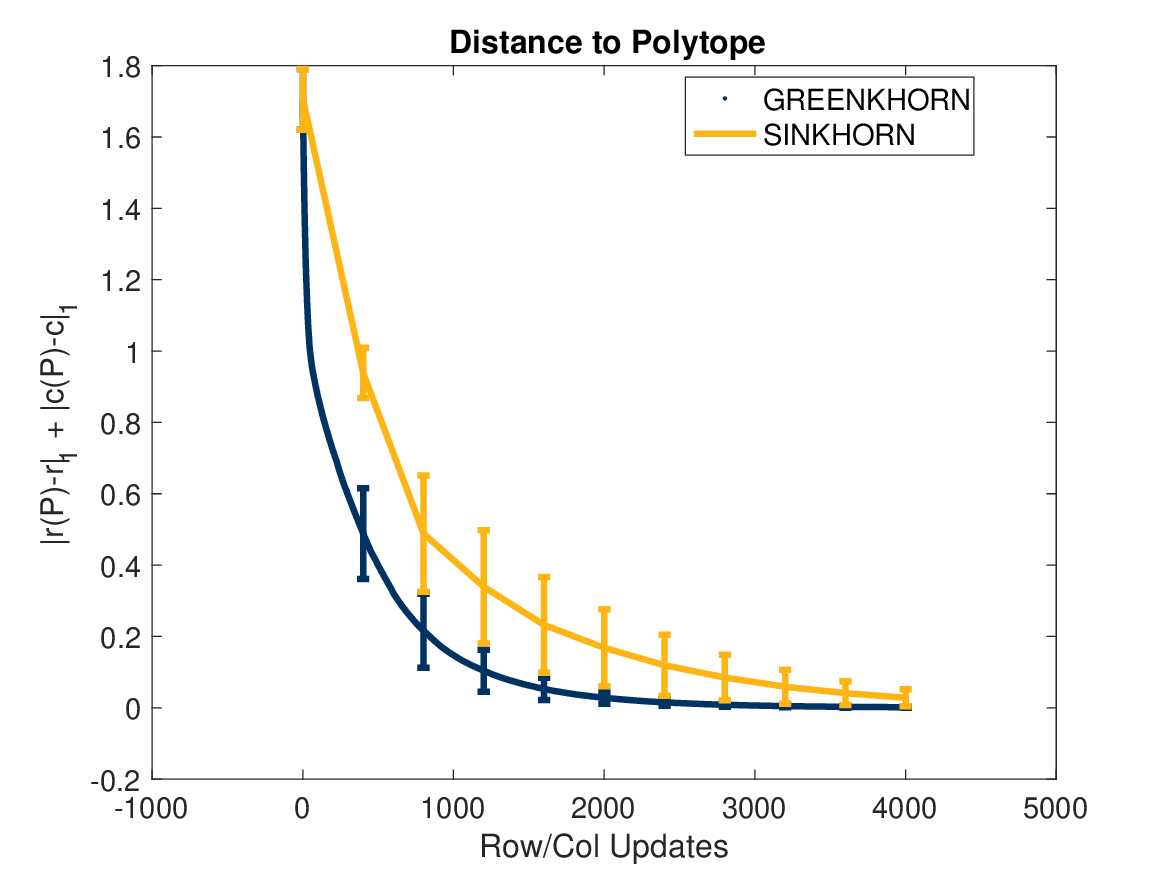}
\end{minipage}
\quad
\begin{minipage}[b]{.5\textwidth}
\includegraphics[width=75mm,height=55mm]{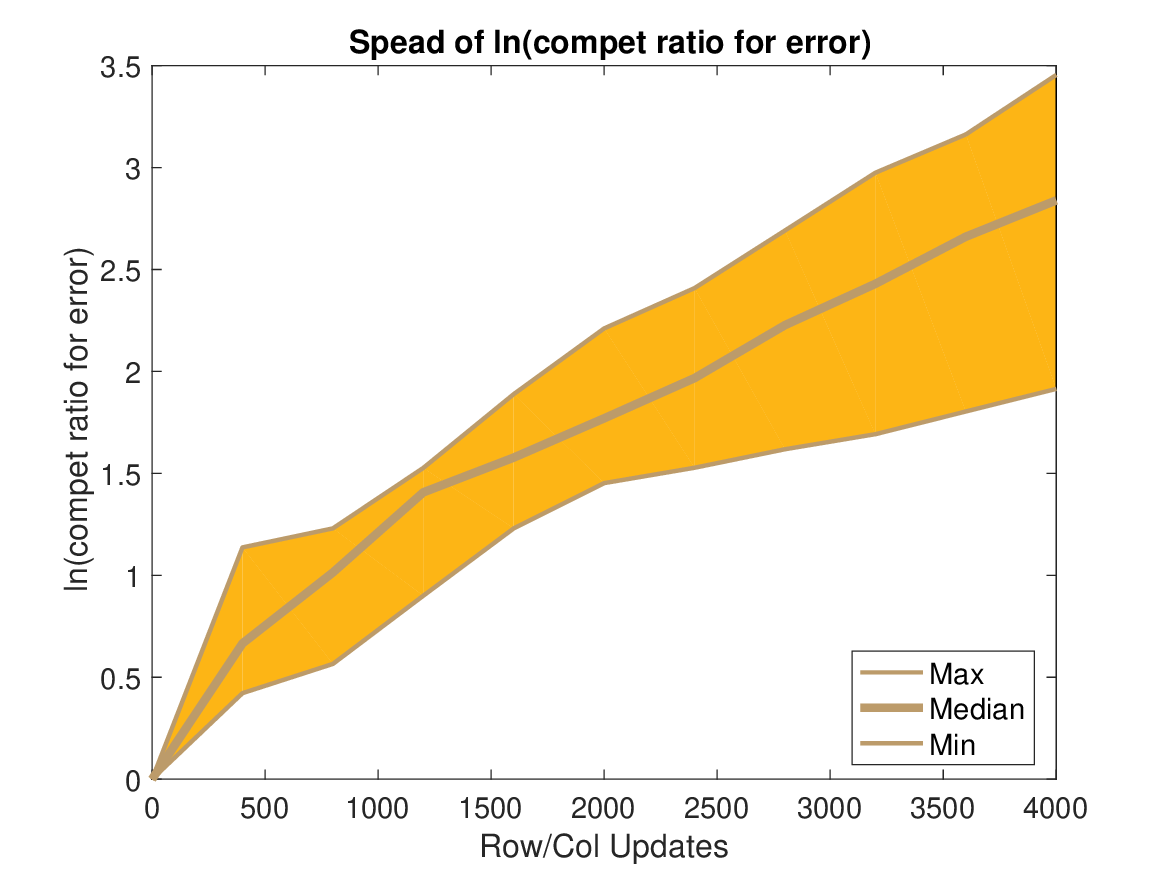}
\end{minipage}
\\
\begin{minipage}[b]{.5\textwidth}
\includegraphics[width=75mm,height=55mm]{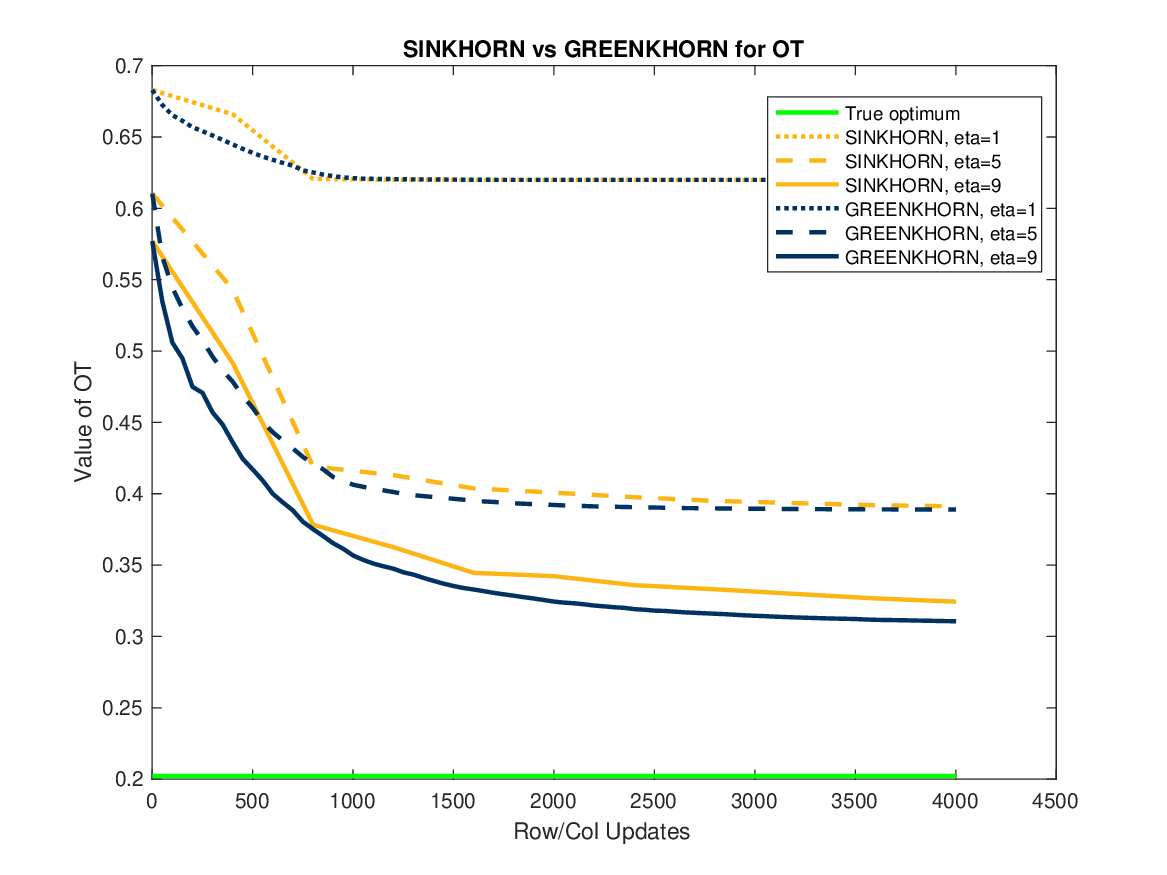}
\end{minipage}
\quad
\begin{minipage}[b]{.5\textwidth}
\includegraphics[width=75mm,height=55mm]{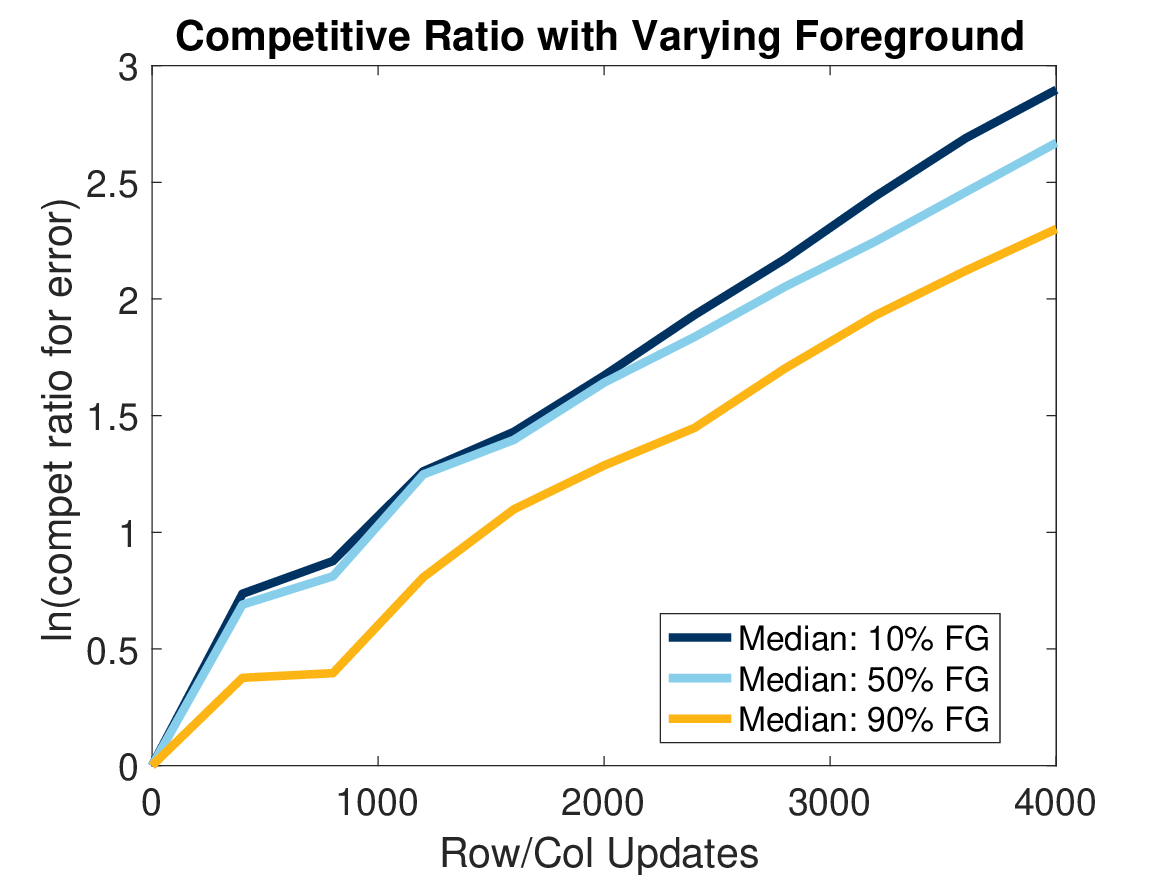}
\end{minipage}
\caption{Performance of the Sinkhorn and Greenkhorn algorithms on the synthetic images. In the top two images, the comparison is based on using the distance $d(P)$ to the transportation polytope, and the maximum, median and minimum of competitive ratios $\log(d(P_{S})/d(P_{G}))$ on ten random pairs of images. Here, $d(P_{S})$ and $d(P_{G})$ refer to the Sinkhorn and Greenkhorn algorithms, respectively. In the bottom left image, the comparison is based on varying the regularization parameter $\eta \in \{1, 5, 9\}$ and reporting the optimal value of the unregularized optimal transport problem. Note that the foreground covers 10\% of the synthetic images here. In the bottom right image, we compare by using the median of competitive ratios with varying coverage ratio of foreground in the range of 10\%, 50\%, and 90\% of the images.}
\label{fig:green_sink_synthetic}
\end{figure*}
\subsection{Synthetic images}
We follow the setup in~\citet{Altschuler-2017-Near} in order to compare different algorithms on the synthetic images. In particular, the transportation distance is defined between a pair of randomly generated synthetic images and the cost matrix is comprised of $\ell_1$ distances among pixel locations in the images.  
\paragraph{Image generation:} Each of the images is of size 20 by 20 pixels and is generated based on randomly positioning a foreground square in otherwise black background. We utilize a uniform distribution on $[0, 1]$ for the intensities of the background pixels and a uniform distribution on $[0, 50]$ for the foreground pixels. To further evaluate the robustness of all the algorithms to the ratio with varying foreground, we vary the proportion of the size of this square in $\{0.1, 0.5, 0.9\}$ of the images and implement all the algorithms on different kind of synthetic images. 
\begin{figure*}[!t]
\begin{minipage}[b]{.5\textwidth}
\includegraphics[width=75mm,height=55mm]{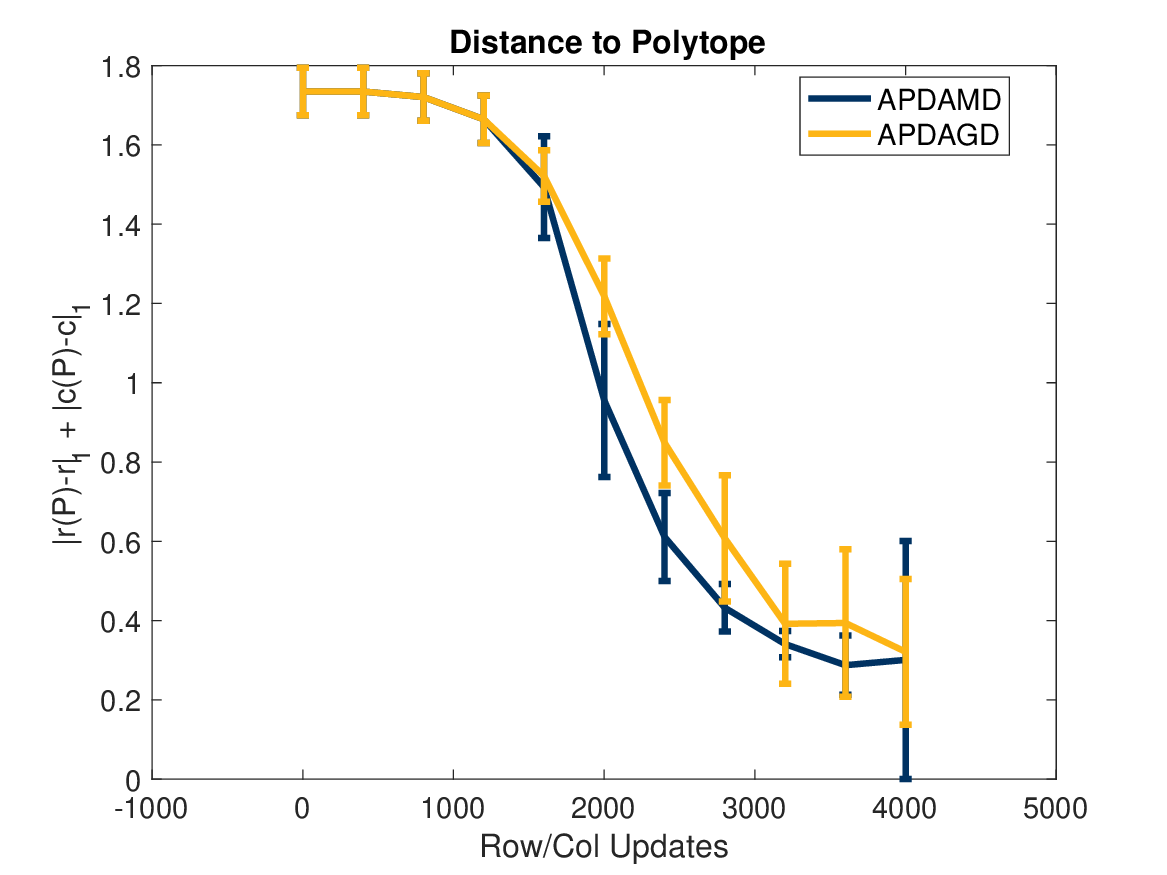}
\end{minipage}
\quad
\begin{minipage}[b]{.5\textwidth}
\includegraphics[width=75mm,height=55mm]{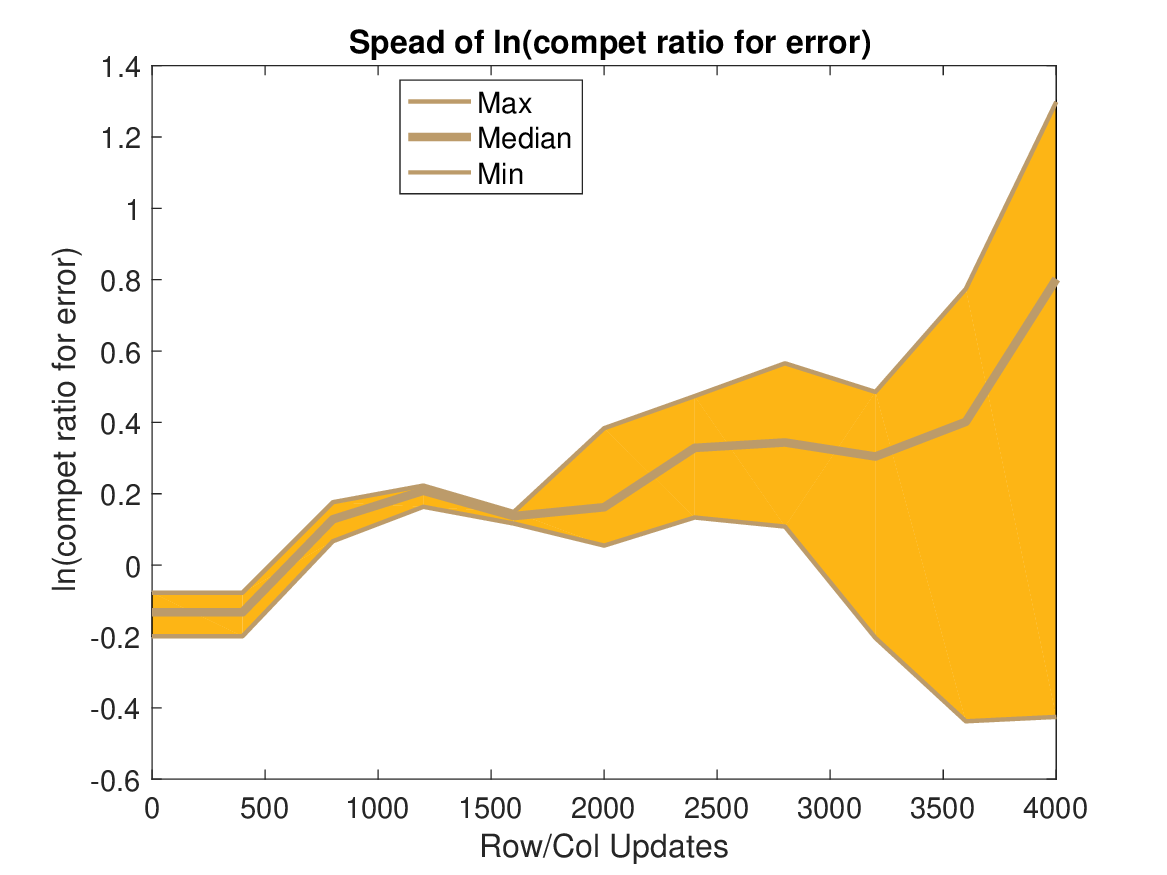}
\end{minipage}
\\ 
\begin{minipage}[b]{.5\textwidth}
\includegraphics[width=75mm,height=55mm]{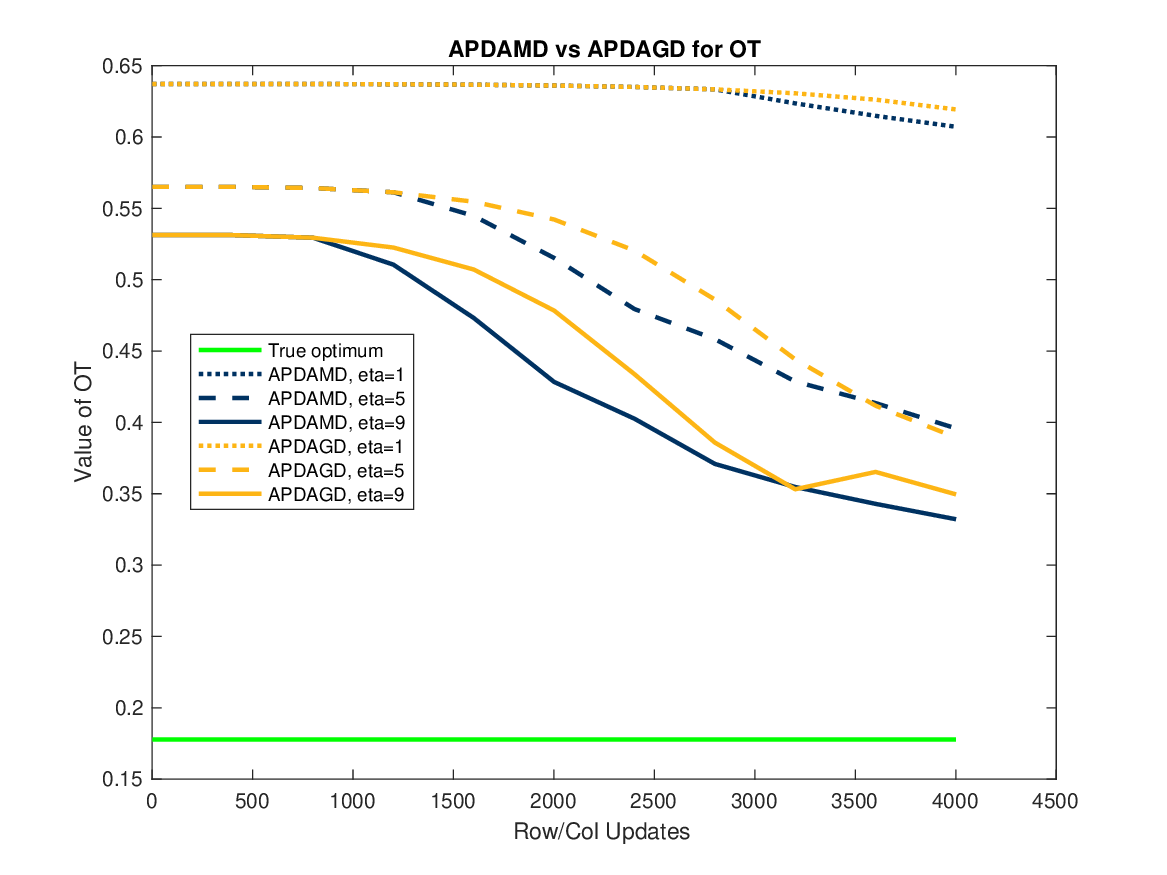}
\end{minipage}
\quad
\begin{minipage}[b]{.5\textwidth}
\includegraphics[width=75mm,height=55mm]{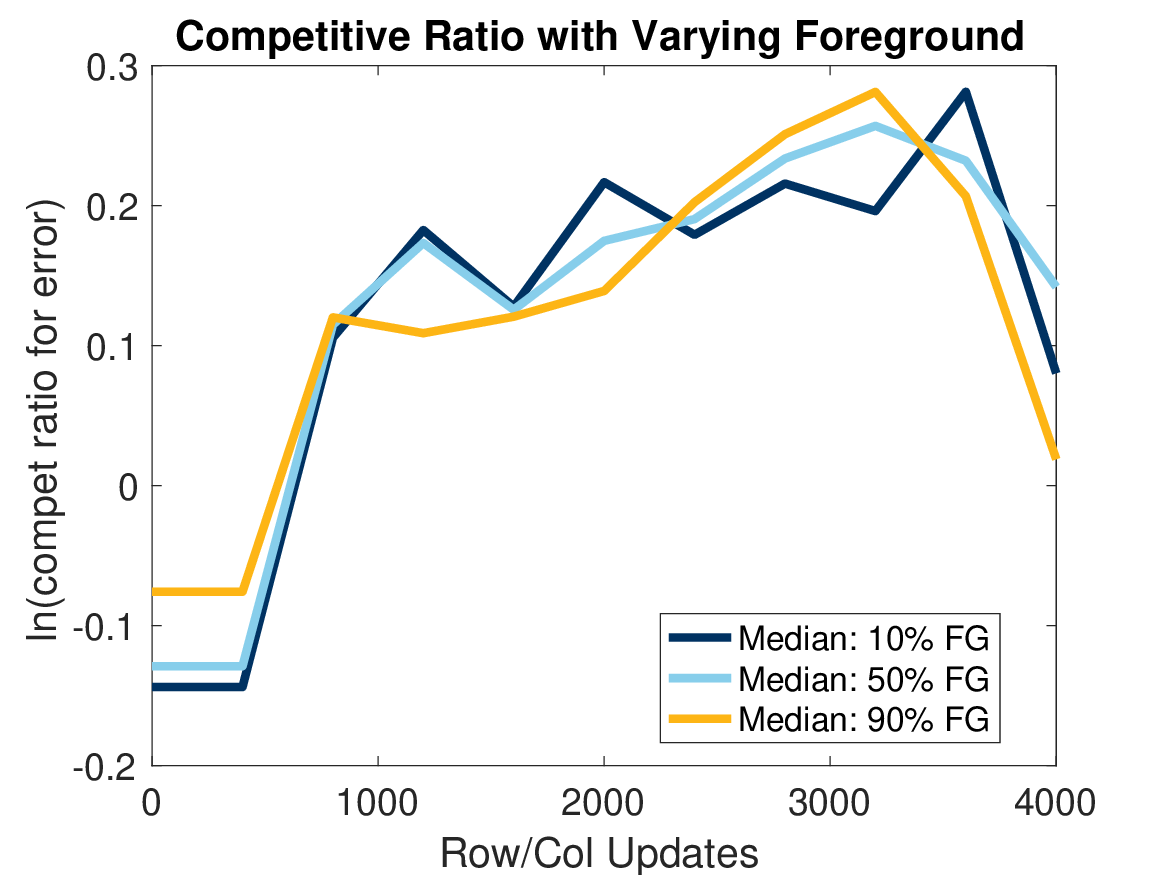}
\end{minipage}
\caption{Performance of the APDAGD and APDAMD algorithms in the synthetic images. All the four images correspond to that in Figure~\ref{fig:green_sink_synthetic}, showing that the APDAMD algorithm is faster and more robust than the APDAGD algorithm. Note that $\log(d(P_{GD})/d(P_{MD}))$ on 10 random pairs of images is consistently used, where $d(P_{GD})$ and $d(P_{MD})$ are for the APDAGD and APDAMD algorithms, respectively.
}\label{fig:amd_agd_synthetic}
\end{figure*}
\paragraph{Evaluation metrics:} Two metrics proposed by~\citet{Altschuler-2017-Near} are used here to quantitatively measure the performance of different algorithms. The first metric is the distance between the output of the algorithm, $X$, and the transportation polytope, i.e., $d(X) = \|r(X) - r\|_1 + \|c(X) - c\|_1$, where $r(X)$ and $c(X)$ are the row and column marginal vectors of the output $X$ while $r$ and $l$ stand for the true row and column marginal vectors. The second metric is the competitive ratio, defined by $\log(d(X_1)/d(X_2))$ where $d(X_1)$ and $d(X_2)$ refer to the distance between the outputs of two algorithms and the transportation polytope.
\begin{figure*}[!t]
\begin{minipage}[b]{.5\textwidth}
\includegraphics[width=75mm,height=55mm]{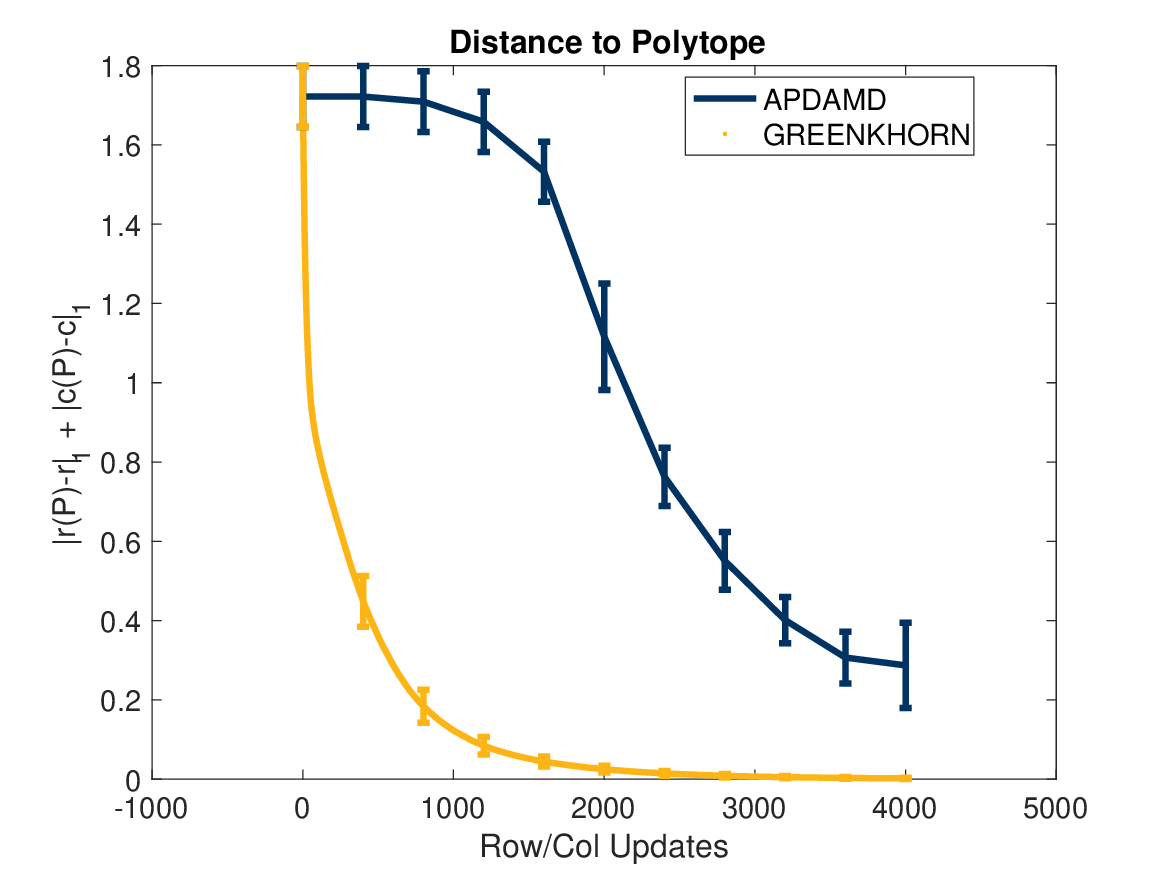}
\end{minipage}
\quad
\begin{minipage}[b]{.5\textwidth}
\includegraphics[width=75mm,height=55mm]{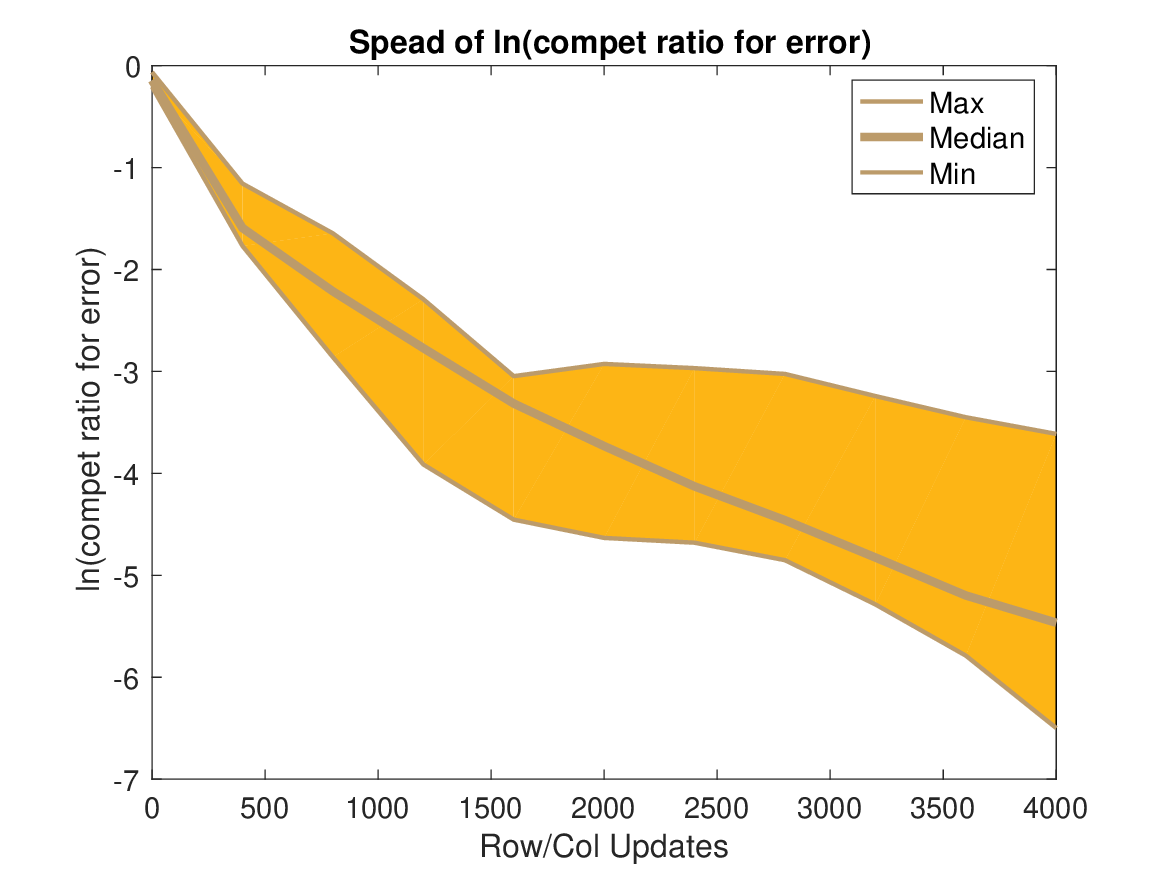}
\end{minipage}
\\ 
\begin{minipage}[b]{.5\textwidth}
\includegraphics[width=75mm,height=55mm]{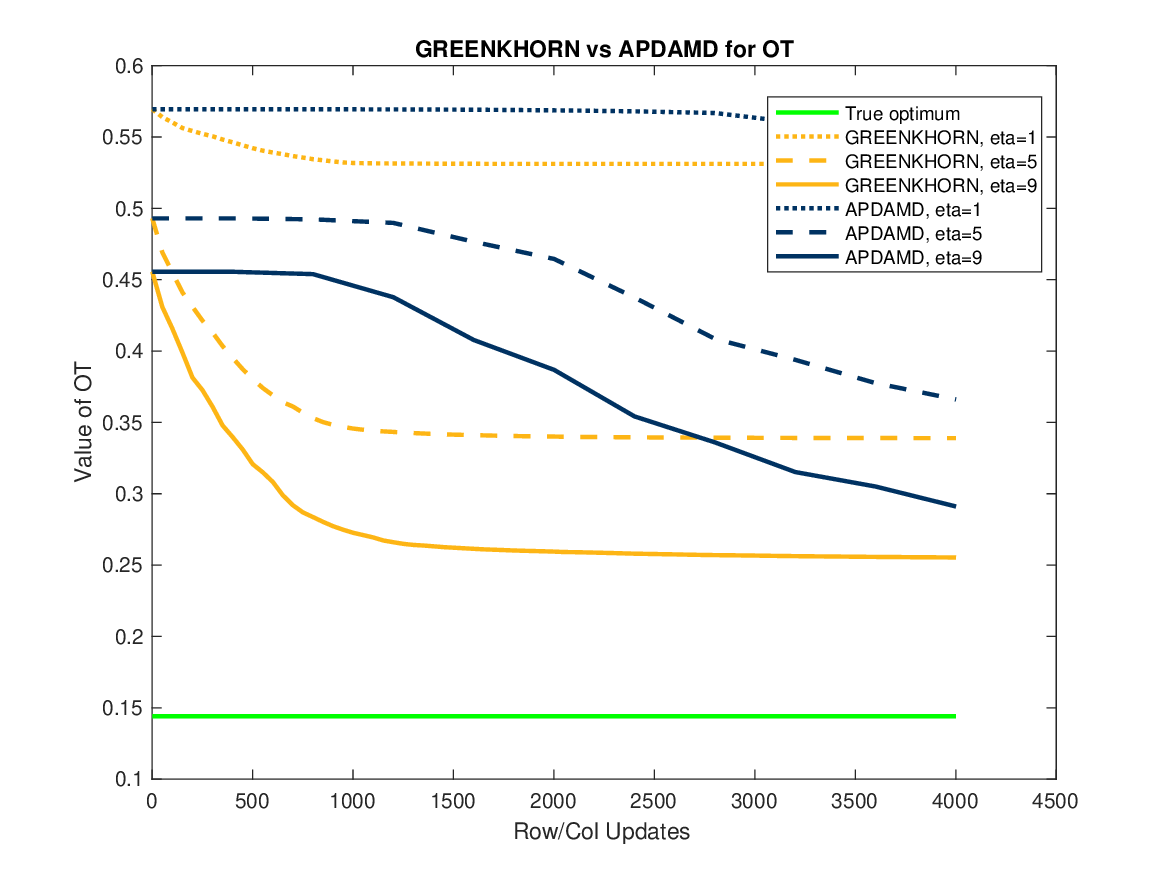}
\end{minipage}
\quad
\begin{minipage}[b]{.5\textwidth}
\includegraphics[width=75mm,height=55mm]{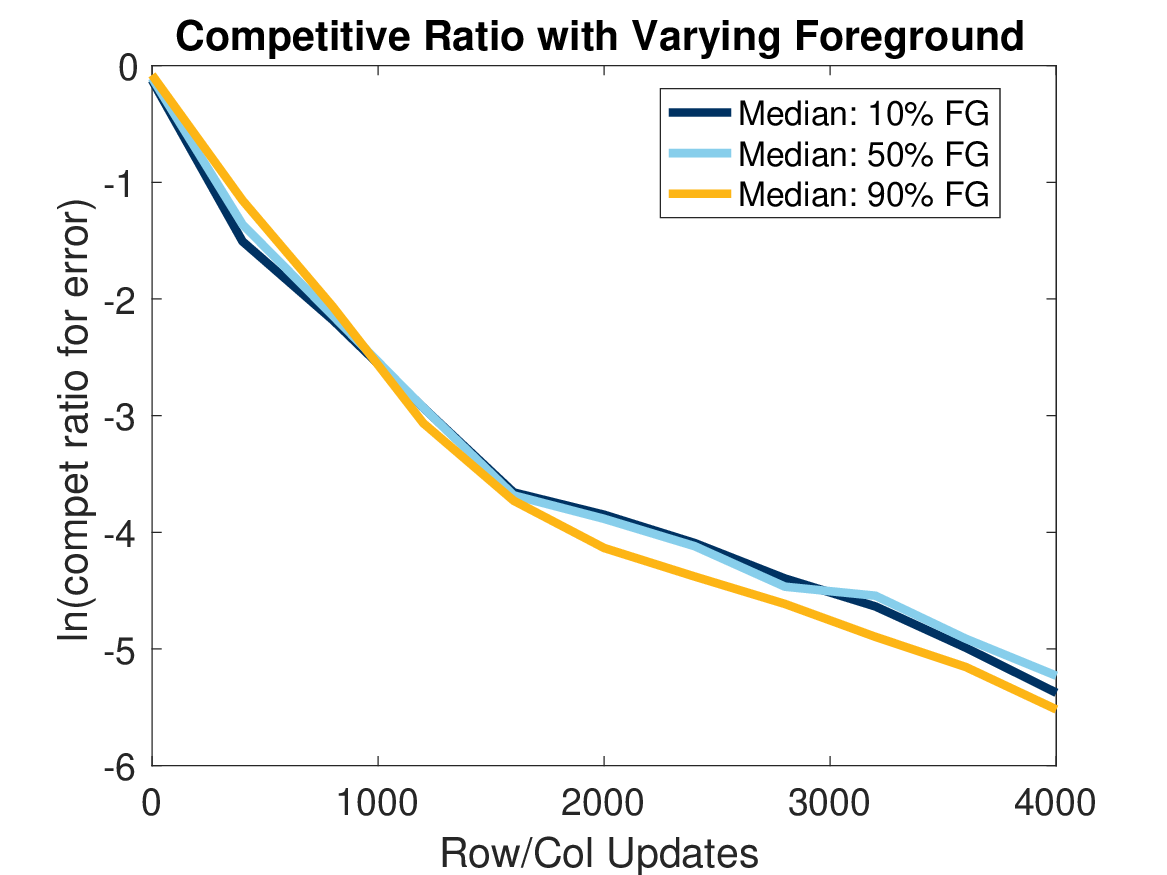}
\end{minipage}
\caption{Performance of the Greenkhorn and APDAMD algorithms in the synthetic images. All the four images correspond to that in Figure~\ref{fig:green_sink_synthetic}, showing that the Greenkhorn algorithm is faster and more robust than the APDAMD algorithm. Note that $\log(d(P_{GH})/d(P_{MD}))$ on 10 random pairs of images is consistently used, where $d(P_{GH})$ and $d(P_{MD})$ are for the Greenkhorn and APDAMD algorithms, respectively.}\label{fig:green_amd_synthetic}
\end{figure*}
\paragraph{Experimental setting:} We perform three pairwise comparative experiments: Sinkhorn versus Greenkhorn, APDAGD versus APDAMD, and Greenkhorn versus APDAMD by running these algorithms with ten randomly selected pairs of synthetic images. We also evaluate all the algorithms with varying regularization parameter $\eta \in \{1, 5, 9\}$ and the optimal value of the unregularized optimal transport problem, as suggested by~\citet{Altschuler-2017-Near}. 
\begin{figure*}[!t]
\begin{minipage}[b]{.3\textwidth}
\includegraphics[width=54mm,height=40mm]{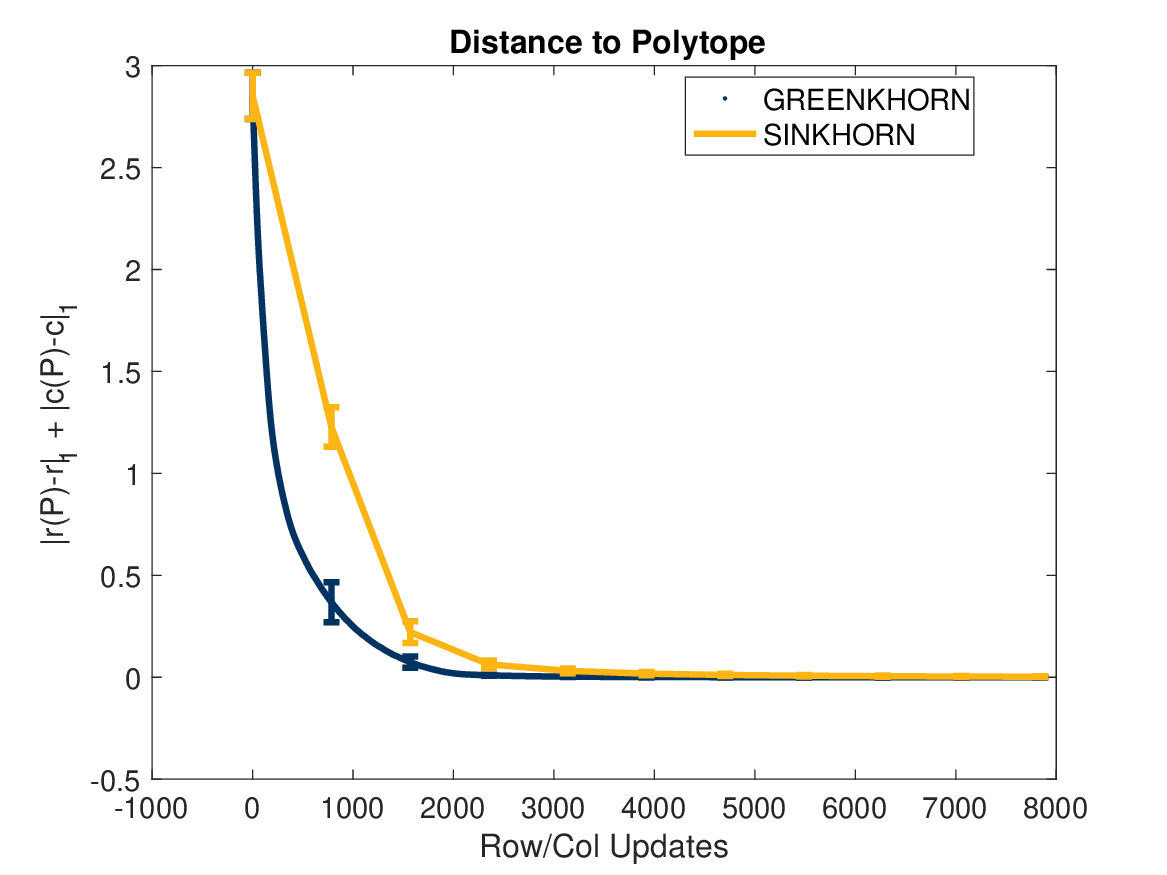}
\end{minipage}
\quad
\begin{minipage}[b]{.3\textwidth}
\includegraphics[width=54mm,height=40mm]{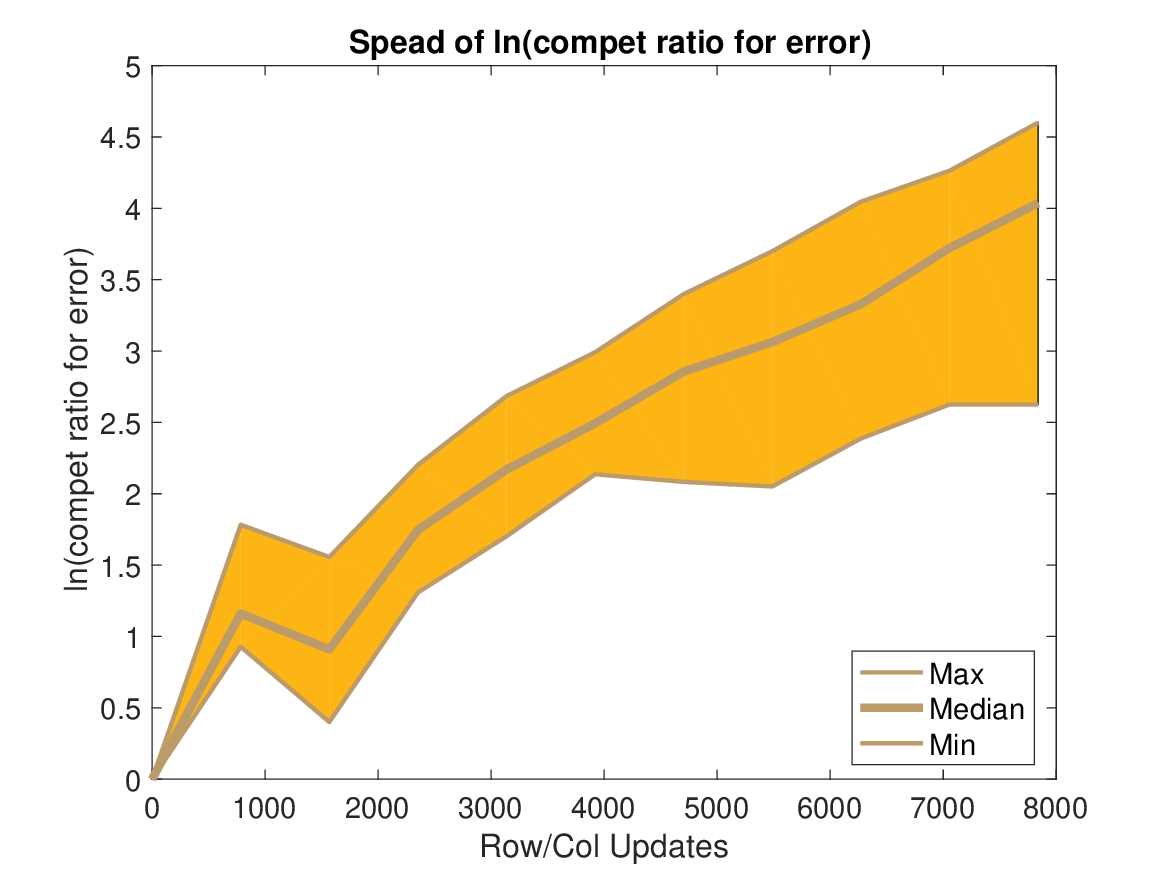}
\end{minipage}
\quad
\begin{minipage}[b]{.3\textwidth}
\includegraphics[width=54mm,height=40mm]{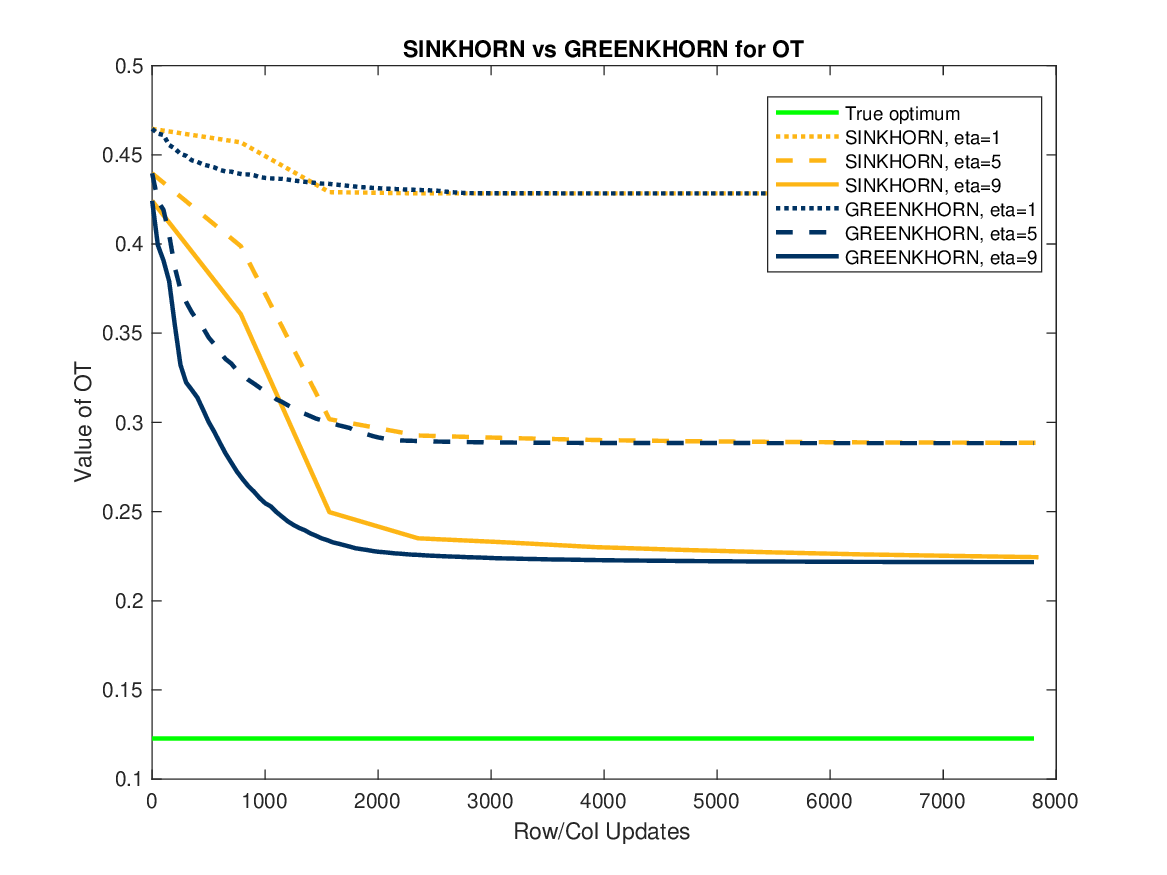}
\end{minipage}
\\
\begin{minipage}[b]{.3\textwidth}
\includegraphics[width=54mm,height=40mm]{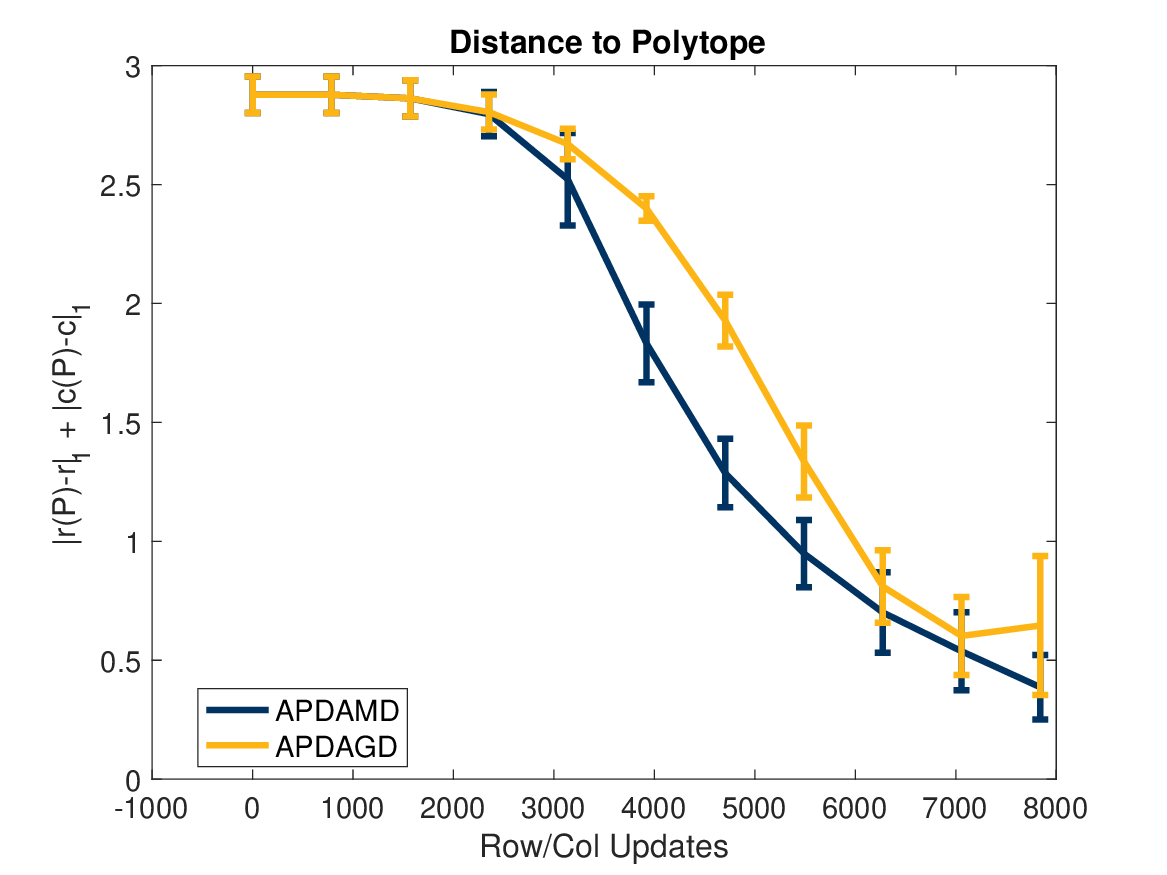}
\end{minipage}
\quad
\begin{minipage}[b]{.3\textwidth}
\includegraphics[width=54mm,height=40mm]{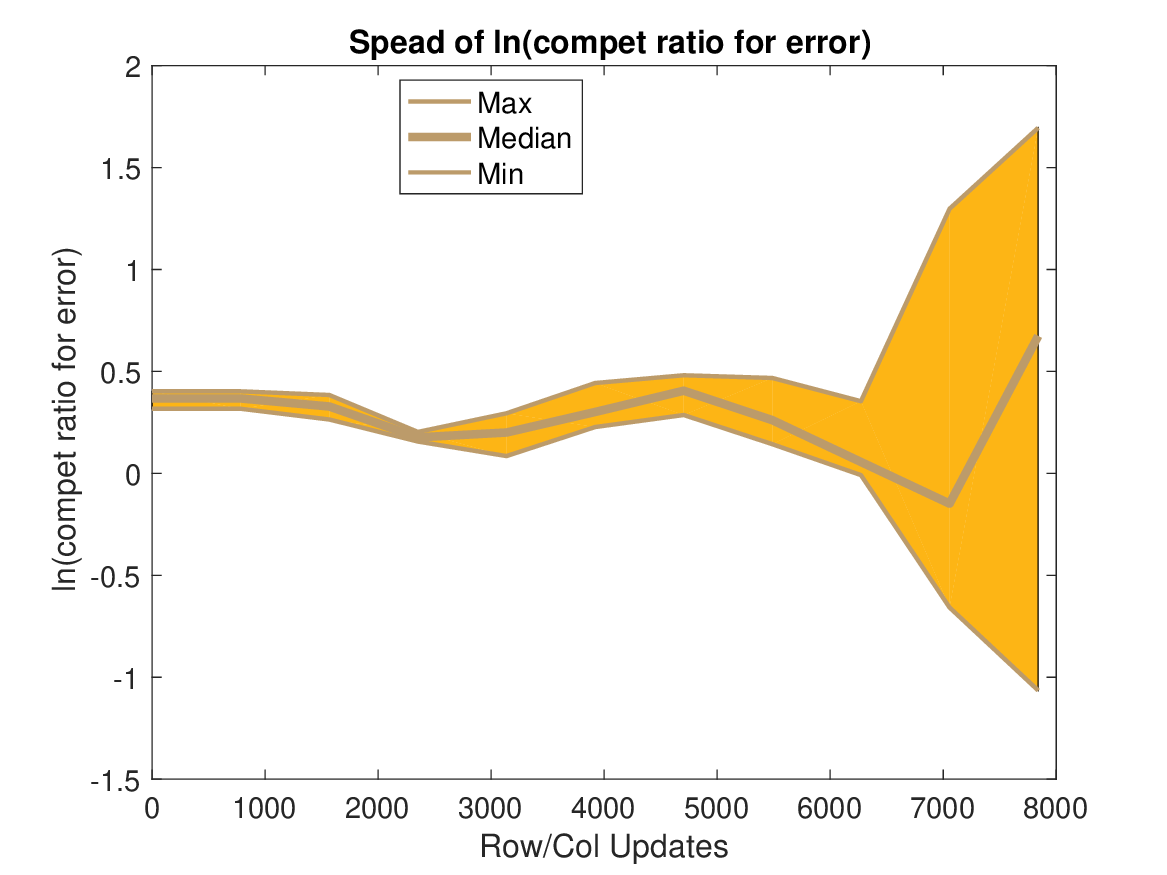}
\end{minipage}
\quad
\begin{minipage}[b]{.3\textwidth}
\includegraphics[width=54mm,height=40mm]{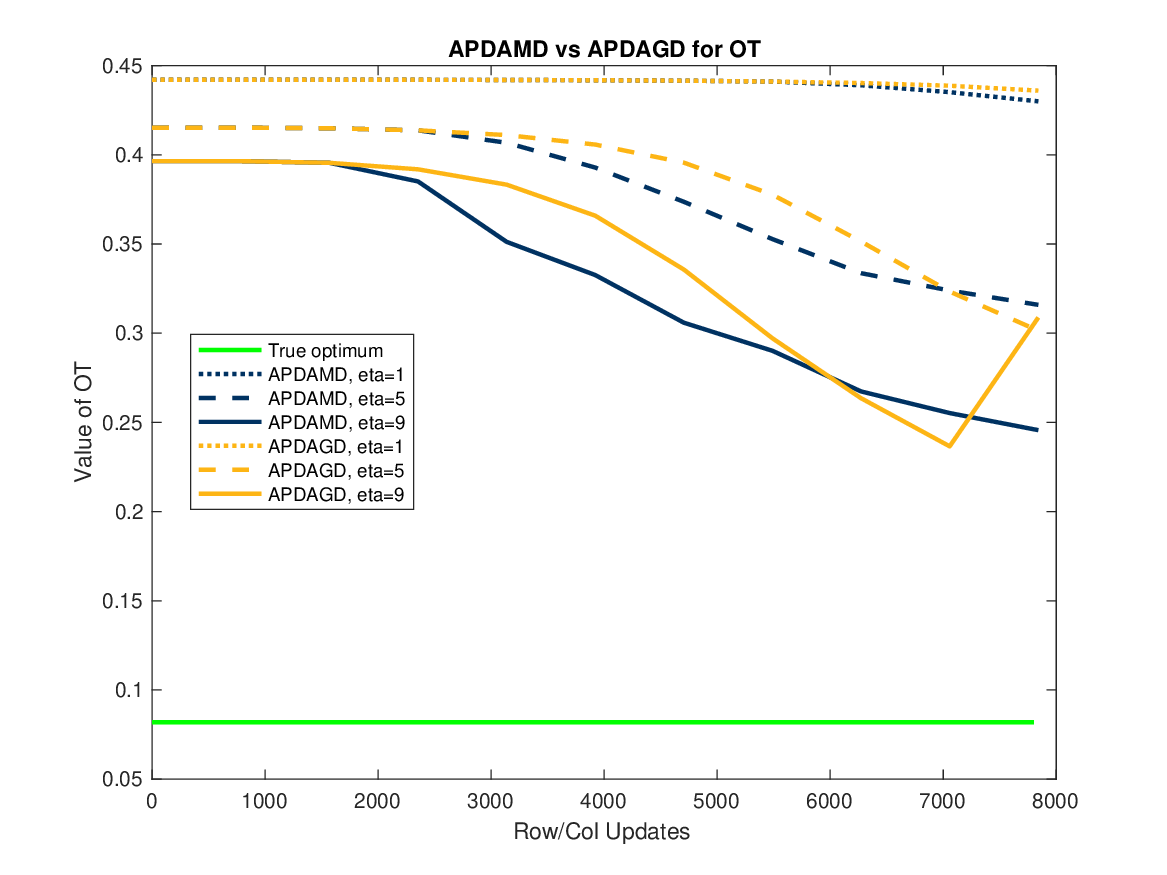}
\end{minipage} 
\\
\begin{minipage}[b]{.3\textwidth}
\includegraphics[width=54mm,height=40mm]{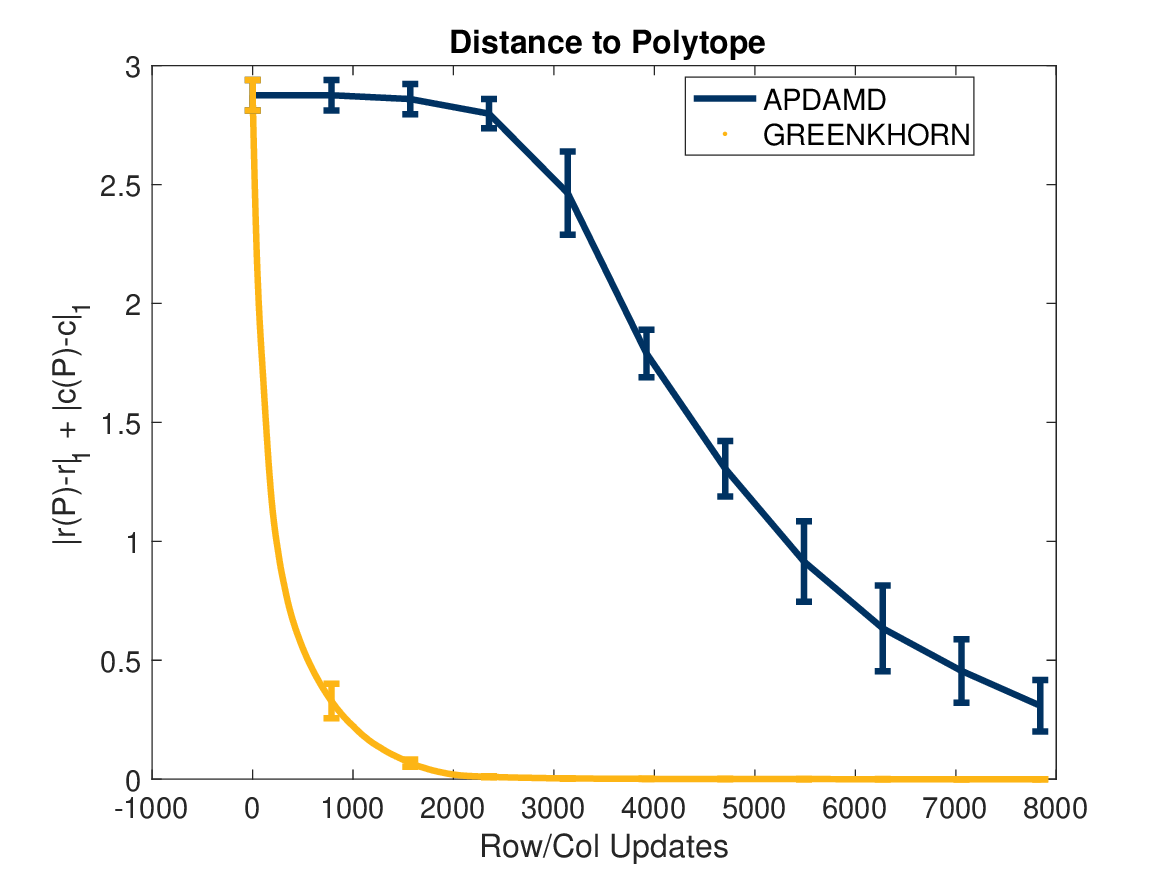}
\end{minipage}
\quad
\begin{minipage}[b]{.3\textwidth}
\includegraphics[width=54mm,height=40mm]{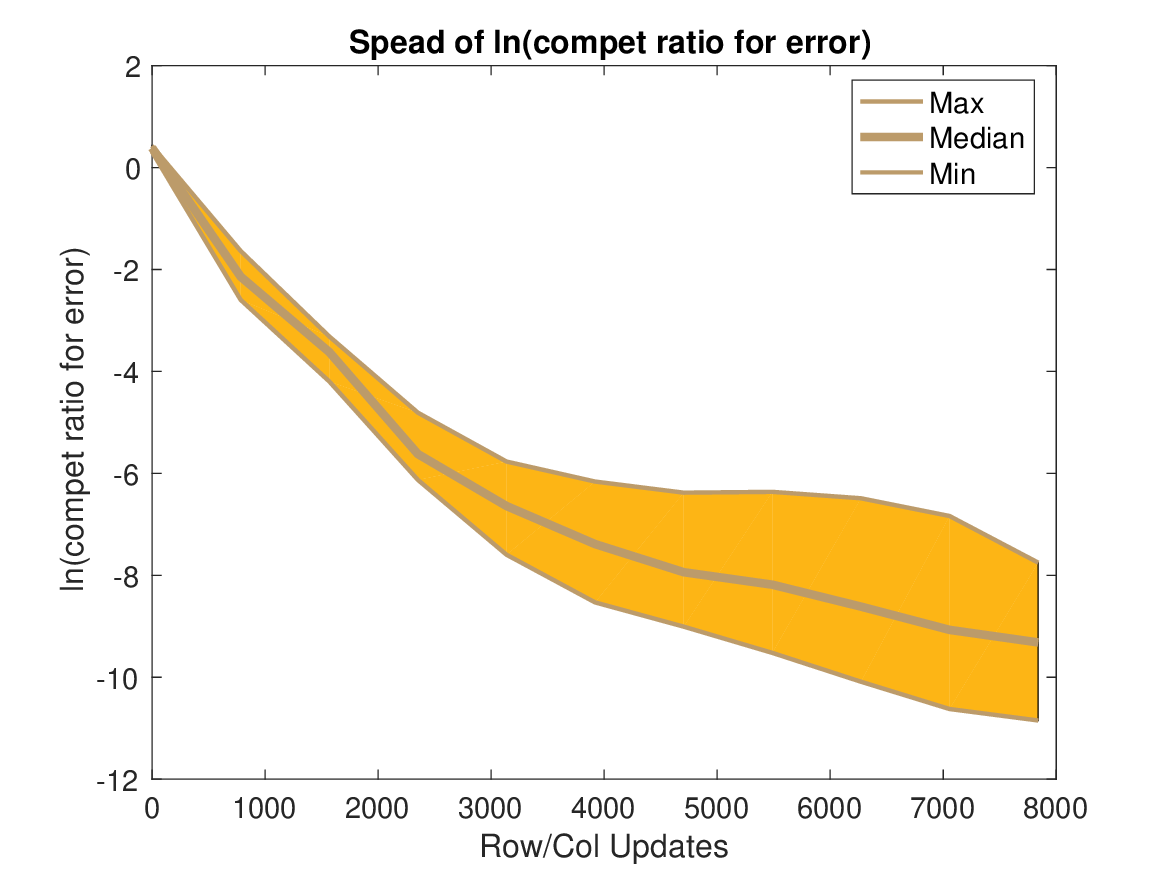}
\end{minipage}
\quad
\begin{minipage}[b]{.3\textwidth}
\includegraphics[width=54mm,height=40mm]{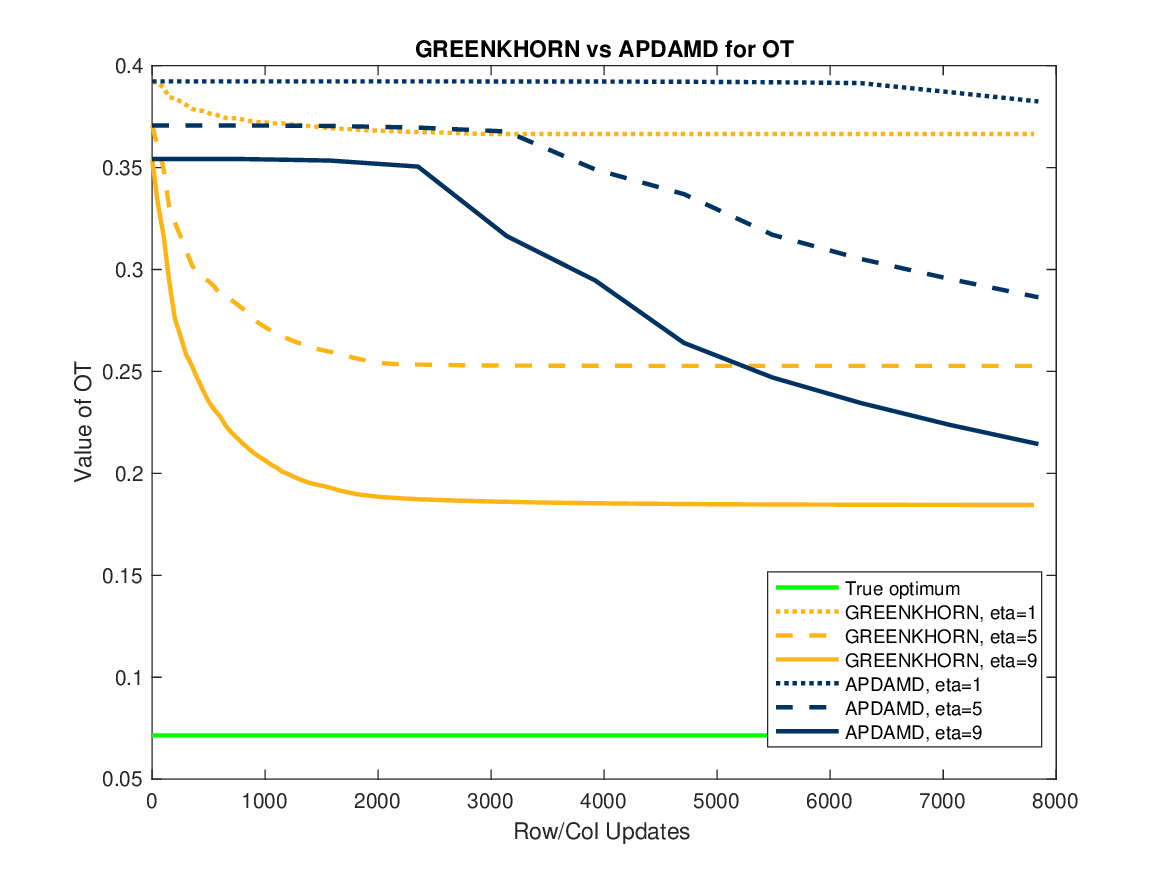}
\end{minipage}
\caption{Performance of the Sinkhorn, Greenkhorn, APDAGD and APDAMD algorithms on the MNIST real images. In the first row of images, we compare the Sinkhorn and Greenkhorn algorithms in terms of iteration counts. The leftmost image specifies the distances $d(P)$ to the transportation polytope for two algorithms; the middle image specifies the maximum, median and minimum of competitive ratios $\log(d(P_S)/d(P_G))$ on ten random pairs of MNIST images, where $P_S$ and $P_G$ stand for the outputs of APDAGD and APDAMD, respectively; the rightmost image specifies the values of regularized OT with varying regularization parameter $\eta \in \{1, 5, 9\}$. In addition, the second and third rows of images present comparative results for APDAGD versus APDAMD and Greenkhorn versus APDAMD. In summary, the experimental results on the MNIST images are consistent with that on the synthetic images. }
\label{fig:MNIST}
\end{figure*}
\paragraph{Experimental results:} We present the experimental results in Figure~\ref{fig:green_sink_synthetic}, Figure~\ref{fig:amd_agd_synthetic}, and Figure~\ref{fig:green_amd_synthetic} with different choices of
regularization parameters and different choices of coverage ratio of the foreground. Figure~\ref{fig:green_sink_synthetic} and~\ref{fig:green_amd_synthetic} show that the Greenkhorn 
algorithm performs better than the Sinkhorn and APDAMD algorithms in terms of iteration numbers, illustrating the improvement achieved by using greedy coordinate descent on the dual regularized OT problem. This also supports our theoretical assertion that the Greenkhorn algorithm has the complexity bound as good as the Sinkhorn algorithm (cf.\ Theorem~\ref{Theorem:Greenkhorn-Total-Complexity}). Figure~\ref{fig:amd_agd_synthetic} shows that the APDAMD algorithm with $\delta = n$ and the Bregman divergence equal to $(1/2n)\|\cdot\|^2$ is not faster than the APDAGD algorithm but is more robust. This makes sense since their complexity bounds are the same in terms of $n$ and $\varepsilon$ (cf. Theorem~\ref{Theorem:APDAMD-Total-Complexity} and Proposition~\ref{prop:correct_complex_APDAGD}). On the other hand, the robustness comes from the fact that the APDAMD algorithm can stabilize the training by using $\|\cdot\|_\infty$ in the line search criterion. 

\subsection{MNIST images}
We proceed to the comparison between different algorithms on real images, using essentially the same evaluation metrics as in the synthetic images. 
\paragraph{Image processing:} The MNIST dataset consists of 60,000 images of handwritten digits of size 28 by 28 pixels. To understand better the dependence on $n$ for our algorithms, we add a very small noise term ($10^{-6}$) to all the zero elements in the measures and then normalize them such that their sum becomes one.
\begin{figure*}[!ht]
\begin{minipage}[b]{.3\textwidth}
\includegraphics[width=55mm,height=40mm]{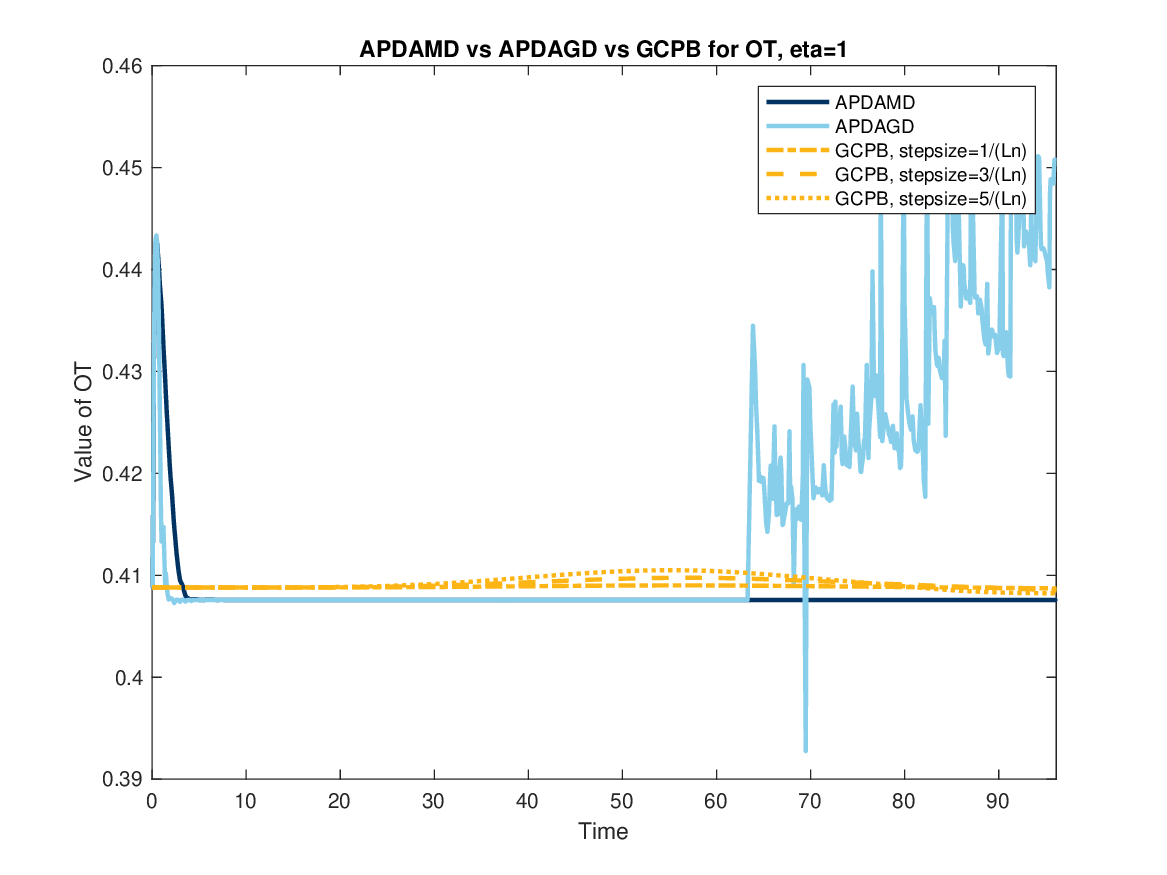}
\end{minipage}
\quad
\begin{minipage}[b]{.3\textwidth}
\includegraphics[width=55mm,height=40mm]{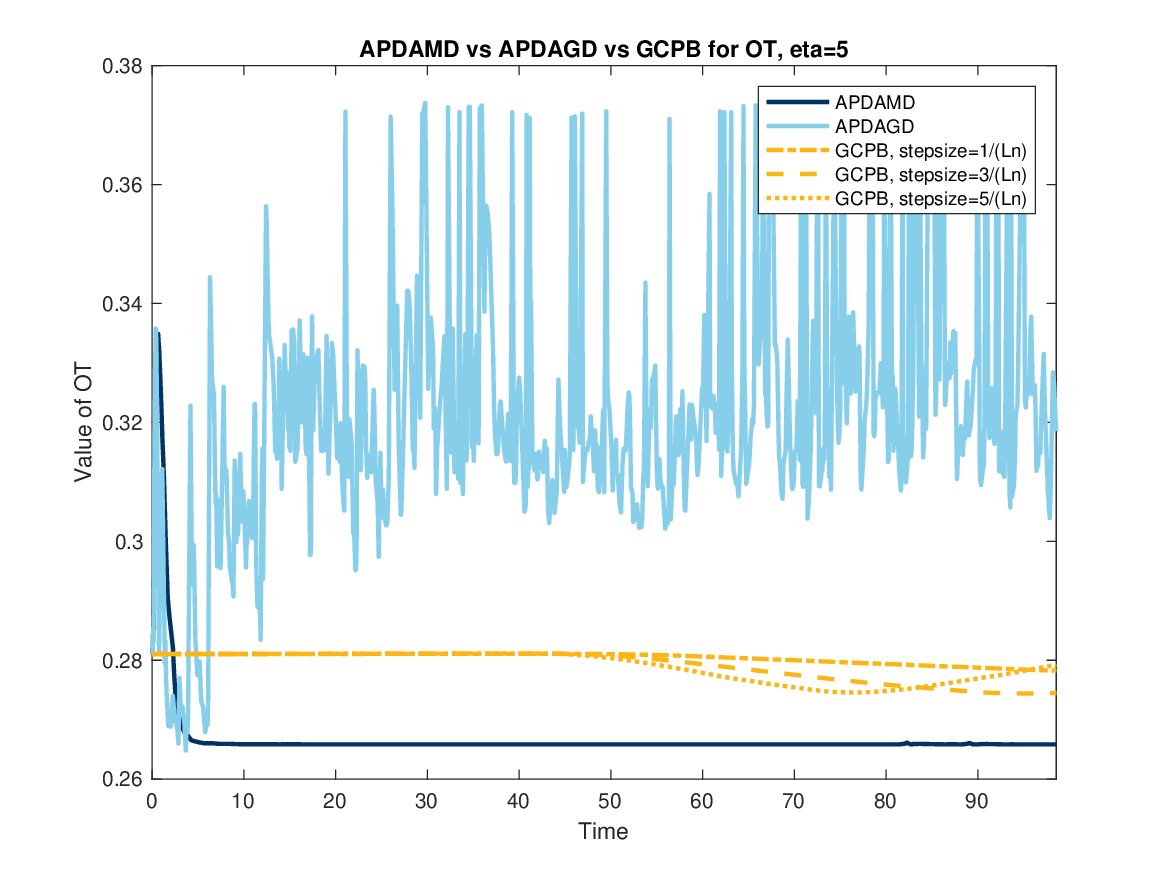}
\end{minipage}
\quad
\begin{minipage}[b]{.3\textwidth}
\includegraphics[width=55mm,height=40mm]{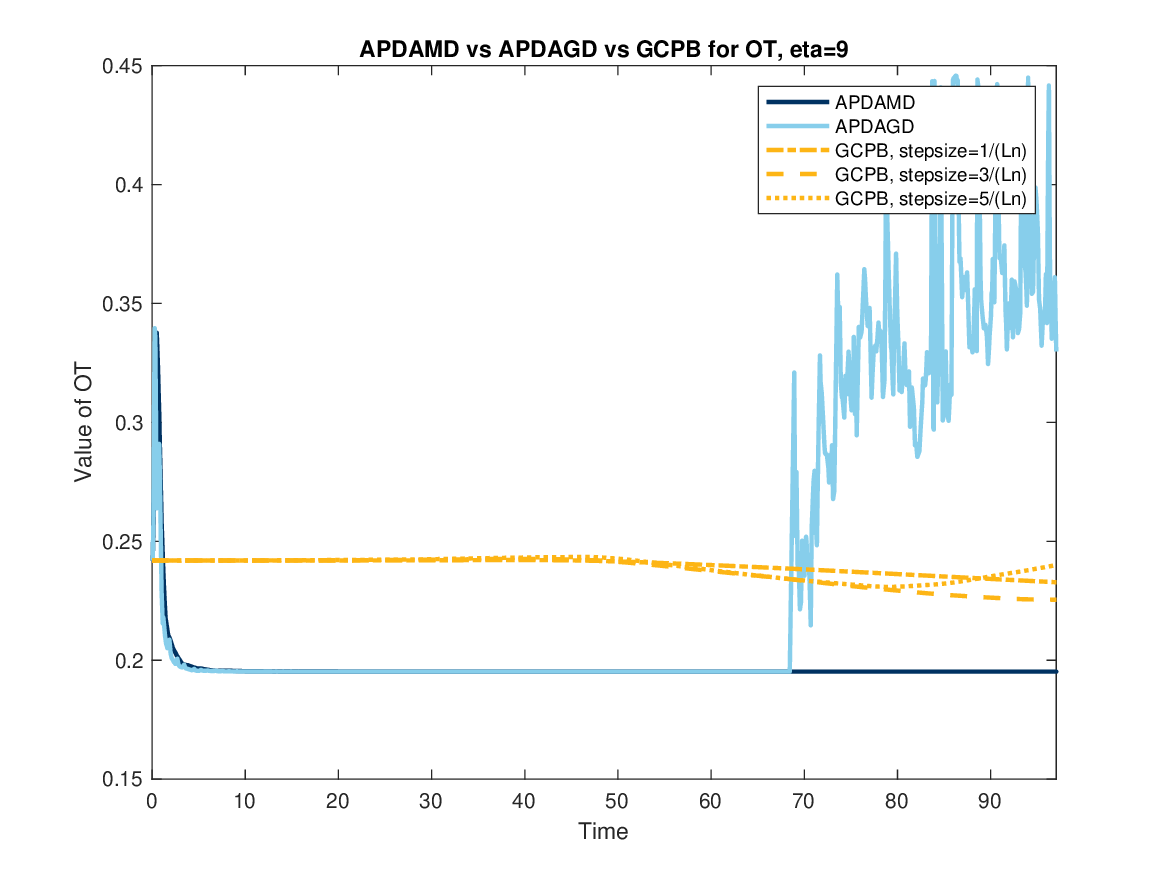}
\end{minipage}
\caption{Performance of the GCPB, APDAGD and APDAMD algorithms in term of time on the MNIST real images. These three images specify the values of regularized OT with varying regularization parameter $\eta \in \{1, 5, 9\}$, showing that the APDAMD algorithm is faster and more robust than the APDAGD and GCPB algorithms. }
\label{fig:ot-MNIST}
\end{figure*}
\paragraph{Experimental results:} We present the experimental results in Figure~\ref{fig:MNIST} and Figure~\ref{fig:ot-MNIST} with different choices of regularization parameters as well as the coverage ratio of the foreground on the real images. Figure~\ref{fig:MNIST} shows that the Greenkhorn algorithm is the fastest among all the candidate algorithms in terms of iteration count. Also, the APDAMD algorithm outperforms the APDAGD algorithm in terms of robustness and efficiency. All the results on real images are consistent with those on the synthetic images. Figure~\ref{fig:ot-MNIST} shows that the APDAMD algorithm is faster and more robust than the APDAGD and GCPB algorithms. We conclude that APDAMD algorithm has a more favorable performance profile than APDAGD algorithm for solving the regularized OT problem.

\section{Discussion}\label{sec:discussion}
We have provided detailed analyses of convergence rates for two algorithms for solving regularized OT problems. First, we established that the complexity bound of the Greenkhorn algorithm can be improved to $\bigOtil(n^2\varepsilon^{-2})$, which matches the best known complexity bound of the Sinkhorn algorithm.  We believe that this helps to explain why the Greenkhorn algorithm outperforms the Sinkhorn algorithm in practice. Second, we have proposed a novel adaptive primal-dual accelerated mirror descent (APDAMD) algorithm for solving regularized OT problems. We showed that the complexity bound of our algorithm is $\bigOtil(n^2\sqrt{\delta}\varepsilon^{-1})$, where $\delta$ is the inverse of the strongly convex module of the Bregman divergence with respect to $\|\cdot\|_\infty$. Finally, we pointed out that an existing complexity bound for the APDAGD algorithm from the literature is not valid in general by providing a concrete counterexample. We instead established a complexity bound for the APDAGD algorithm is $\bigOtil(n^{5/2}\varepsilon^{-1})$, by exploiting the connection between the APDAMD and APDAGD algorithms.

There are many interesting directions for further research. First, the complexity bound of the APDAMD algorithm heavily depends on $\delta$. As we mentioned earlier, a simple upper bound for $\delta$ is the dimension $n$. However, this results in a complexity bound for the APDAMD algorithm of $\bigOtil(n^{5/2}\varepsilon^{-1})$, which is unsatisfactory. It is of significant theoretical interest to investigate whether we can improve the dependence of $\delta$ on $n$, such as $n^\tau$ for some $\tau < 1$. Another possible direction is to extend the APDAMD algorithm to the computation of Wasserstein barycenters. That computation has been proposed for a variety of applications in machine learning and statistics~\citep{Ho-2018-Probabilistic,Srivastava-2015-WASP, Srivastava-2018-Scalable}, but its theoretical understanding is limited despite recent developments in fast algorithms for solving the problem~\citep{Cuturi-2014-Fast, Dvurechenskii-2018-Decentralize}.

\section*{Acknowledgments}
We would like to thank Pavel Dvurechensky, Alexander Gasnikov, and Alexey Kroshnin for helpful discussion with the complexity bounds of APDAMD and APDAGD algorithms. We would like to thank anonymous referee for helpful comments on Lemma~\ref{Lemma:dualOT-smoothness} and Eq.~\eqref{prob:dualregOT-APDAMD}. This work was supported in part by the Mathematical Data Science program of the Office of Naval Research under grant number N00014-18-1-2764.
\bibliographystyle{plainnat}
\bibliography{ref}


\end{document}